\newtheorem{theorem}{Theorem}
\newtheorem{lemma}[theorem]{Lemma}
\newtheorem{corollary}[theorem]{Corollary}
\crefname{conjecture}{conjecture}{conjectures}
\Crefname{conjecture}{Conjecture}{Conjectures}
\newtheorem{definition}[theorem]{Definition}
\Crefname{observation}{Observation}{Observations}
\crefname{secinapp}{Appendix}{Appendices}
\Crefname{secinapp}{Appendix}{Appendices}
\newcommand{\abs}[1]{\ensuremath{{\lvert#1\rvert}}}
\newcommand{\bra}[1]{\ensuremath{\left\langle{#1}\right|}}
\newcommand{\ket}[1]{\ensuremath{\left|{#1}\right\rangle}}
\newcommand{\braket}[2]{\ensuremath{\langle{#1}|{#2}\rangle}}
\newcommand{\braopket}[3]{\ensuremath{\left\langle{#1}\middle|{#2}\middle|{#3}\right\rangle}}
\newcommand{\ip}[1]{\langle{#1}\rangle}
\newcommand{\hilb}[1]{\ensuremath{\mathcal{H}_#1}}
\newcommand{\linspan}[0]{\ensuremath{\textnormal{span}}}
\newcommand{\Tr}[0]{\ensuremath{\textnormal{Tr}}}
\newcommand{\linop}[1]{\ensuremath{\mathcal{L}(#1)}}
\newcommand{\proj}[1]{{\ket{#1}\bra{#1}}}
\newcommand{\vecize}[0]{\textnormal{vec}}
\newcommand{\hA}[0]{\hilb{A}}
\newcommand{\hB}[0]{\hilb{B}}
\newcommand{\chan}[1]{\ensuremath{\mathcal{#1}}}
\newcommand{\ot}[0]{\otimes}
\newcommand{\Gc}[0]{{\overline{G}}}
\newcommand{\Hc}[0]{{\overline{H}}}
\newcommand{\thbar}[0]{{\bar{\vartheta}}}
\newcommand{\thp}[0]{{\vartheta^+}}
\newcommand{\thm}[0]{{\vartheta^-}}
\newcommand{\thpbar}[0]{{\bar{\vartheta}^+}}
\newcommand{\thmbar}[0]{{\bar{\vartheta}^-}}
\newcommand{\opnorm}[1]{\ensuremath{\left\lVert#1\right\rVert}}
\newcommand{\opV}[1]{\linop{\mathbb{C}^{\abs{V(#1)}}}}
\newcommand{\homm}[0]{\to}
\newcommand{\home}[0]{\stackrel{*}{\to}}
\newcommand{\homb}[0]{\stackrel{B}{\to}}
\newcommand{\homp}[0]{\stackrel{+}{\to}}
\newcommand{\homv}[0]{\stackrel{V}{\to}}
\newcommand{\homq}[0]{\stackrel{q}{\to}}
\newcommand{\chivect}[0]{\chi_{\textrm{vect}}}
\newcommand{\omegavect}[0]{\omega_{\textrm{vect}}}
\newcommand{\chiqr}[0]{\chi_{\textrm{qr}}}
\newcommand{\djp}[0]{*}
\newcommand{\simG}[0]{\sim_G}
\newcommand{\simH}[0]{\sim_H}
\newcommand{\cart}{\mathbin{\square}}
\begin{document}

\title{Bounds on Entanglement Assisted Source-channel Coding
    \texorpdfstring{via the Lov{\'a}sz $\vartheta$}{via the Lovasz Theta}
    Number and its Variants}

\author{Toby~Cubitt,
    Laura~Man\v{c}inska,
    David~Roberson,
    Simone~Severini,
    Dan~Stahlke,
    and~Andreas~Winter%
\thanks{Toby Cubitt is with the Mathematics and Quantum Information group, Departamento de
    An\'{a}lisis Matem\'{a}tico, Facultad de CC Matem\'{a}ticas, Universidad Complutense de
    Madrid, 28040 Madrid, Spain and the Department of Applied Mathematics and Theoretical
    Physics, University of Cambridge, Wilberforce Road, Cambridge, U.K. (e-mail:
    tsc25@cam.ac.uk).}
\thanks{Laura Man\v{c}inska is with the Institute for Quantum Computing, University of
    Waterloo, Canada and the Centre for Quantum Technologies, National University of Singapore,
    21 Lower Kent Ridge Road, Singapore (e-mail: laura.mancinska@gmail.com).}
\thanks{David Roberson is with the Department of Combinatorics \& Optimization, University of
    Waterloo, Canada and the Division of Mathematical Sciences, School of Physical and
    Mathematical Sciences, Nanyang Technological University, 50 Nanyang Avenue, Singapore
    (e-mail: davideroberson@gmail.com).}
\thanks{Simone Severini is with the Department of Computer Science, and Department of Physics
    and Astronomy, University College London, WC1E 6BT London, U.K. (e-mail:
    simoseve@gmail.com).}
\thanks{Dan Stahlke is with the Department of Physics, Carnegie Mellon University, Pittsburgh,
    Pennsylvania 15213, USA (e-mail: dan@stahlke.org)}
\thanks{Andreas Winter is with the ICREA \& F\'{\i}sica Te\`{o}rica: Informaci\'{o} i Fenomens
    Qu\`{a}ntics, Universitat Aut\`{o}noma de Barcelona, ES-08193 Bellaterra (Barcelona),
    Spain.
    When the core part of the present research was done he was affiliated with the
    Department of Mathematics, University of Bristol, BS8 1TW Bristol, U.K.
    (e-mail: andreas.winter@uab.cat)}
\thanks{Some of the results in the present paper were presented at the Theory of Quantum
    Computation, Communication \& Cryptography 2014 conference.}
}

%\author{Toby Cubitt}
%\affiliation{Mathematics and Quantum Information group, Departamento de An\'{a}lisis Matem\'{a}tico, Facultad de CC Matem\'{a}ticas, Universidad Complutense de Madrid, 28040 Madrid, Spain}
%\affiliation{Department of Applied Mathematics and Theoretical Physics, University of Cambridge, Wilberforce Road, Cambridge, U.K.}
%\email{tcubitt@mat.ucm.es}
%
%\author{Laura Man\v{c}inska}
%\affiliation{Institute for Quantum Computing, University of Waterloo, Canada}
%\affiliation{Centre for Quantum Technologies, National University of Singapore, 21 Lower Kent Ridge Road, Singapore}
%\email{laura.mancinska@gmail.com}
%
%\author{David Roberson}
%\affiliation{Department of Combinatorics \& Optimization, University of Waterloo, Canada}
%\affiliation{Division of Mathematical Sciences, School of Physical and Mathematical Sciences, Nanyang Technological University, 50 Nanyang Avenue, Singapore}
%\email{davideroberson@gmail.com}
%
%\author{Simone Severini}
%\affiliation{Department of Computer Science, and Department of Physics and Astronomy, University College London, WC1E 6BT London, U.K.}
%\email{simoseve@gmail.com}
%
%\author{Dan Stahlke}
%\affiliation{Department of Physics, Carnegie Mellon University, Pittsburgh, Pennsylvania 15213, USA}
%\email{dan@stahlke.org}
%
%\author{Andreas Winter}
%\affiliation{
%    ICREA \& F\'{\i}sica Te\`{o}rica: Informaci\'{o} i Fenomens
%    Qu\`{a}ntics, Universitat Aut\`{o}noma de Barcelona, ES-08193 Bellaterra 
%    (Barcelona), Spain.}
%\affiliation{Department of Mathematics, University of Bristol, BS8 1TW Bristol, U.K.}
%\email{andreas.winter@uab.cat}

\date{6 December 2013}

\maketitle

\begin{abstract}
    We study zero-error entanglement assisted source-channel coding (communication in the
    presence of side information).
    Adapting a technique of Beigi, we show that such coding requires existence of a set of vectors
    satisfying orthogonality conditions related to suitably defined graphs $G$ and $H$. Such
    vectors exist if and only if $\vartheta(\Gc) \le \vartheta(\Hc)$ where $\vartheta$
    represents the Lov{\'a}sz number. We also obtain similar inequalities for the related
    Schrijver $\thm$ and Szegedy $\thp$ numbers.

    These inequalities reproduce several known bounds and also lead to new results.  We provide
    a lower bound on the entanglement assisted cost rate.  We show that the entanglement
    assisted independence number is bounded by the Schrijver number: $\alpha^*(G) \le \thm(G)$.
    Therefore, we are able to disprove the conjecture that the one-shot entanglement-assisted
    zero-error capacity is equal to the integer part of the Lov{\'a}sz number.  Beigi
    introduced a quantity $\beta$ as an upper bound on $\alpha^*$ and posed the question of
    whether $\beta(G) = \lfloor \vartheta(G) \rfloor$. We answer this in the affirmative and
    show that a related quantity is equal to $\lceil \vartheta(G) \rceil$.  We show that a
    quantity $\chivect(G)$ recently introduced in the context of Tsirelson's problem is
    equal to $\lceil \vartheta^+(\Gc) \rceil$.

    In an appendix we investigate multiplicativity properties of Schrijver's and Szegedy's
    numbers, as well as projective rank.
\end{abstract}

\begin{IEEEkeywords}
    Graph theory, Quantum entanglement, Quantum information, Zero-error information
    theory, Linear programming
\end{IEEEkeywords}

\IEEEpeerreviewmaketitle

\section{Introduction}

The source-channel coding problem is as follows: Alice and Bob can communicate only through a
noisy channel.  Alice wishes to send a message to Bob, and Bob already has some side
information regarding Alice's message.
(Note that Alice's message may be several bits long.)
Alice encodes her message and sends a transmission through the channel.
Given the (noisy) channel output along with his side information, Bob must be able to deduce
Alice's message with zero probability of error (we always require zero error
throughout this entire paper).
An \emph{$(m,n)$-coding scheme} consists of encoding and decoding operations which allow
sending $m$ messages via $n$ uses of the noisy channel (again, each of the $m$ messages
may be several bits long).
The \emph{cost rate} $\eta$ is the infimum of $n/m$ over all $(m,n)$-coding schemes.

There are two special cases which are particularly noteworthy.
If the messages are bits and there is no side information then the inverse of the cost rate,
$1/\eta$, is the
\emph{Shannon capacity}~\cite{shannon56}, the number of zero-error bits that can be transmitted
per channel use in the limit of many uses of the channel.
On the other hand,
communication over a perfect channel with side information was considered by
Witsenhausen~\cite{witsen76}; the corresponding cost rate is known as the \emph{Witsenhausen rate}.
The general problem, with both side information and a noisy channel, was considered by
Nayak, Tuncel, and Rose~\cite{1705019}.

The Shannon capacity of a channel is very difficult to compute, and is
not even known to be decidable.
However, a useful upper bound on Shannon capacity is provided by the $\vartheta$ number
introduced by Lov{\'a}sz~\cite{lovasz79}.
The Lov{\'a}sz $\vartheta$ number also provides a lower bound on the Witsenhausen
rate~\cite{1705019} and, in general, the cost rate.

Recently it has been of interest to study a version of this problem in which the parties may
make use of an entangled quantum state, which can in certain cases increase the zero-error
capacity of a classical channel~\cite{PhysRevLett.104.230503,leung2012entanglement}.
The Lov{\'a}sz $\vartheta$ number upper bounds entanglement assisted
Shannon capacity, just as it does classical Shannon
capacity~\cite{PhysRevA.82.010303,dsw2013}.
Beigi's proof~\cite{PhysRevA.82.010303} proceeds through a relaxation of the channel coding
problem, with the relaxed
constraints consisting of various orthogonality conditions imposed upon a set of vectors.
We study a relaxation of the entanglement assisted source-channel coding problem inspired
by this technique of Beigi.
This relaxation leads to a set of constraints that are exactly characterized by
monotonicity of $\vartheta$.
This has a number of consequences.
Beigi defined a function $\beta$ as an upper bound on entanglement assisted independence
number and posed the question of whether $\beta$ is equal to
$\lfloor \vartheta \rfloor$.
We answer this in the affirmative and show that a similarly defined quantity is equal to
$\lceil \vartheta \rceil$.
We show that $\vartheta$ provides a bound for the source-channel coding problem.
As a special case this reproduces both Beigi's result as well as that of Bri\"{e}t et
al.~\cite{arxiv:1308.4283} in which it is shown that $\vartheta$ is a lower bound on the
entanglement assisted Witsenhausen rate.

A slightly different relaxation of source-channel coding leads to
three necessary conditions for the existence of a $(1,1)$-coding scheme in terms of
$\vartheta$ and two variants: Schrijver's $\thm$ and Szegedy's $\thp$.
This reproduces or strengthens results
from~\cite{PhysRevA.82.010303,arxiv:1308.4283,arxiv:1212.1724}
under a unified framework, with simpler proofs.
In particular, we produce a tighter bound on the entanglement assisted independence number:
$\alpha^* \le \thm$.

The technical results, \cref{thm:th_homb,thm:th_homp}, should be accessible to the reader who is
familiar with graph theory but not information theory or quantum mechanics, which merely
provide a motivation for the problem.

\section{Source-channel coding}

We will make use of the following graph theoretical concepts.
A \emph{graph} $G$ consists of a set of \emph{vertices} $V(G)$ along with a symmetric binary
relation $x \simG y$ among vertices (we abbreviate $x \sim y$ when the graph can be
inferred from context).
A pair of vertices $(x,y)$ satisfying $x \sim y$ are said to be \emph{adjacent}.
Equivalently, it is said that there is an \emph{edge} between $x$ and $y$.
Vertices are not adjacent to themselves, so $x \not\sim x$ for all $x \in V(G)$.
The \emph{complement} of a graph $G$, denoted $\Gc$, has the same set of vertices but has edges
between distinct pairs of vertices which are not adjacent in $G$
(i.e.\ for $x \ne y$ we have $x \sim_{\Gc} y \iff x \not\simG y$).
A set of vertices no two of which form an edge is known as an \emph{independent set}; the size
of the largest independent set is the \emph{independence number} $\alpha(G)$.
A set of vertices such that every pair is adjacent is known as a \emph{clique}; the size of
the largest clique is the \emph{clique number} $\omega(G)$.  Clearly $\omega(G)=\alpha(\Gc)$.
An assignment of colors to vertices such that adjacent vertices are given distinct colors is
called a \emph{proper coloring}; the minimum number of colors needed is the \emph{chromatic
number} $\chi(G)$.
A function mapping the vertices of one graph to those of another, $f : V(G) \to V(H)$, is a
\emph{homomorphism} if $x \simG y \implies f(x) \simH f(y)$.
Since vertices are not adjacent to themselves it is necessary that $f(x) \ne f(y)$ when
$x \sim y$.
If such a function exists, we say that $G$ \emph{is homomorphic to} $H$ and write
$G \homm H$.
The \emph{complete graph} on $n$ vertices, denoted $K_n$, has an edge between
every pair of vertices.
It is not hard to see that $\omega(G)$ is equal to the largest $n$ such that $K_n \homm G$,
and $\chi(G)$ is equal to the smallest $n$ such that $G \homm K_n$.
Many other graph properties can be expressed in terms of homomorphisms; for details
see~\cite{hahn1997graph,HellNesetril200409}.
The \emph{strong product} of two graphs, $G \boxtimes H$, has vertex set $V(G) \times V(H)$ and
has edges
\begin{align*}
    (x_1,y_1) \sim (x_2,y_2) \iff
    &(x_1=x_2 \textrm{ and } y_1 \sim y_2) \textrm{ or }
    \\ &(x_1 \sim x_2 \textrm{ and } y_1 = y_2) \textrm{ or }
    \\ &(x_1 \sim x_2 \textrm{ and } y_1 \sim y_2).
\end{align*}
The $n$-fold strong product is written
$G^{\boxtimes n} := G \boxtimes G \boxtimes \dots \boxtimes G$.
The \textit{disjunctive product} $G \djp H $ has edges
\begin{align*}
    (x_1,y_1) \sim (x_2,y_2) \iff
    x_1 \sim x_2 \textrm{ or } y_1 \sim y_2.
\end{align*}
It is easy to see that $\overline{G \djp H} = \Gc \boxtimes \Hc$.
The $n$-fold disjunctive product is written
$G^{\djp n} := G \djp G \djp \dots \djp G$.

Suppose that Alice communicates to Bob through a noisy classical channel
$\chan{N} : S \to V$.
She wishes to send a message to Bob with zero chance of error.
Let $\chan{N}(v|s)$ denote the probability that $\chan{N}$ will produce symbol $v$
when given symbol $s$ as input, and define the graph $H$ with vertices $S$ and edges
\begin{align}
    \label{eq:confgraph_chap4}
    s \simH t \iff \chan{N}(v|s) \chan{N}(v|t) = 0 \textrm{ for all } v \in V.
\end{align}
Bob can distinguish codewords $s$ and $t$ if and only if they have no chance of being mapped
to the same output by $\chan{N}$.
Therefore, Alice's set of codewords must form a clique of $H$; the size of the
largest such set is the clique number $\omega(H)$.
We will call this the \emph{distinguishability graph} of $\chan{N}$.
The complementary graph $\Hc$ is known as the \emph{confusability graph} of $\chan{N}$.
Note that standard convention is to denote the confusability graph by $H$ rather than $\Hc$.
We break convention in order to make notation in this paper much simpler.
However, to minimize confusion when discussing prior results,
we will follow the tradition of using the independence number when speaking of the number of
codewords that Alice can send (equal to $\alpha(\Hc) = \omega(H)$ in our notation).

The number of bits (the base-2 log of the number of distinct codewords)
that Alice can send with a single use of $\chan{N}$ is known as the
\emph{one-shot zero-error capacity} of $\chan{N}$, and is equal to $\log \alpha(\Hc)$.
The average number of bits that can be sent per channel use (again with zero error) in the limit
of many uses of a channel is known as the \emph{Shannon capacity}.
With $n$ parallel uses of $\chan{N}$ the distinguishability graph is
$H^{\djp n}$.
The Shannon capacity of $\chan{N}$ is therefore
\begin{align*}
    \Theta(\Hc) :&= \lim_{n \to \infty} \frac{1}{n} \log \omega(H^{\djp n})
    \\ &= \lim_{n \to \infty} \frac{1}{n} \log \alpha(\Hc^{\boxtimes n}).
\end{align*}

This quantity is in general very difficult to compute, with the capacity of the five cycle
graph $\Hc=C_5$ having been open for over 20 years and the capacity of $C_7$ being
unknown to this day.
The capacity of $C_5$ was solved by Lov{\'a}sz~\cite{lovasz79} who introduced a function
$\vartheta(\Hc)$, the definition of which will be postponed until \cref{sec:mainthms}.
Lov{\'a}sz proved a sandwich theorem which, using the notation $\thbar(H) := \vartheta(\Hc)$,
takes the form
\begin{align*}
    \alpha(\Hc) = \omega(H) \le \thbar(H) \le \chi(H).
\end{align*}
He also showed that $\thbar(H^{\djp n}) = \thbar(H)^n$, therefore
$\Theta(\Hc) \le \log \thbar(H)$.
This bound also applies to entanglement assisted
communication~\cite{PhysRevA.82.010303}, which we will investigate in detail, and has been
generalized to quantum channels~\cite{dsw2013}.

\begin{figure}[ht]
    \centering
    \includegraphics[scale=1.0]{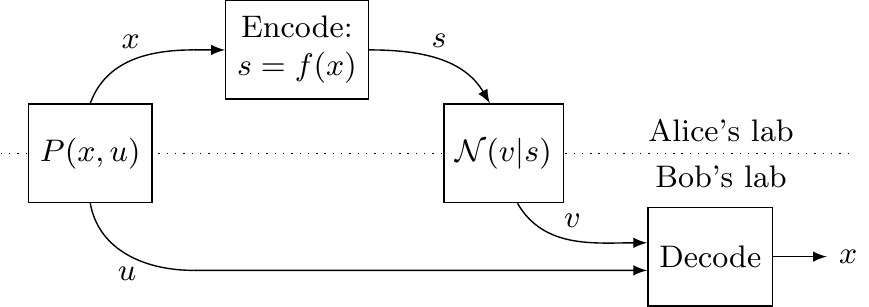}
    \caption{
        A zero-error source-channel $(1,1)$-coding scheme.
    }
    \label{fig:coding_chap4}
\end{figure}

We now introduce the \emph{source-channel coding} problem.
As before, Alice wishes to send Bob a message $x \in X$, and she can only communicate through a
noisy channel $\chan{N} : S \to V$.
Now, however, Bob has some side information about Alice's message.
Specifically, Alice and Bob each receive one part of a pair $(x,u)$ drawn according to a
probability distribution $P(x,u)$.  This is known as a \emph{dual source}.
Alice encodes her input $x$ using a function $f : X \to S$ and sends the result through $\chan{N}$.
Bob must deduce $x$ with zero chance of error using the output of $\chan{N}$ along with his
side information $u$.
Such a protocol is called a \emph{zero-error source-channel $(1,1)$-coding scheme}, and is
depicted in \cref{fig:coding_chap4}.
An $(m,n)$-coding scheme transmits $m$ independent instances of the source using $n$ copies of
the channel.

Again the analysis involves graphs.
Let $H$ again be the distinguishability graph~\eqref{eq:confgraph_chap4} and define
the \emph{characteristic graph} $G$ with vertices $X$ and edges
\begin{align*}
    x \simG y \iff \exists u \in U \textrm{ such that } P(x,u)P(y,u) \ne 0.
\end{align*}
In~\cite{1705019} it was shown that decoding is possible if and only if Alice's encoding $f$ is
a homomorphism from $G$ to $H$.\footnote{
Basically, $G$ represents the information that needs to be sent and $H$ represents the
information that survives the channel.  A homomorphism $G \homm H$ ensures that the needed
information makes it through the channel intact.
}
Therefore a zero-error $(1,1)$-coding scheme exists if and only if $G \homm H$.
A zero-error $(m,n)$-coding scheme is possible if and only if
\begin{align}
    \label{eq:mn_hom}
    G^{\boxtimes m} \homm H^{\djp n}.
\end{align}
The smallest possible ratio $n/m$ (in the limit $m \to \infty$) is called the
\emph{cost rate}, $\eta(G, \Hc)$.
More precisely, the cost rate is defined as
\begin{align}
    \label{eq:costrate}
    \eta(G, \Hc) = \lim_{m \to \infty} \frac{1}{m} \min\left\{
        n : G^{\boxtimes m} \homm H^{\djp n}
        \right\}.
\end{align}

The $\thbar$ quantity is monotone under graph homomorphisms in the sense that
$G \homm H \implies \thbar(G) \le \thbar(H)$~\cite{de2013optimization}.
Consequently, a zero-error $(1,1)$-coding scheme requires
$\thbar(G) \le \thbar(H)$.
Since $\thbar(G^{\boxtimes m}) = \thbar(G)^m$~\cite{knuth94} and
$\thbar(H^{\djp n}) = \thbar(H)^n$~\cite{lovasz79}, it follows that an
$(m,n)$-coding scheme is possible only if
$\log \thbar(G) / \log \thbar(H) \le n/m$.
Thus we have the bound
\begin{align*}
    \eta(G, \Hc) \ge \frac{\log \thbar(G)}{\log \thbar(H)}.
\end{align*}
(Cf.~\cite{1705019} for the special case of the Witsenhausen rate.)

We will return to this in \cref{sec:mainthms} when we prove an analogous bound for
entanglement assisted zero-error source-channel coding.

When Bob has no side information (equivalently, when $U$ is a singleton), $G$ is the complete
graph.  In this case zero-error transmission of $x$ is possible if and only if $K_n \homm H$
where $n=\abs{X}$, which in turn holds if and only if $n \le \omega(H) = \alpha(\Hc)$.
This is the expected result, since as mentioned before $\alpha(\Hc)$ is the number of
unambiguously decodable codewords that Alice can send through $\chan{N}$.
On the other hand, consider the case where there is side information and
$\chan{N}$ is a noiseless channel of size $n=\abs{S}$.
Now $H$ becomes the complete graph $K_n$, so $x$ can be perfectly transmitted if and only if
$G \homm K_n$.
This holds if and only if $n \ge \chi(G)$.
These two examples provide an operational interpretation to the independence number and
chromatic number of a graph.  The analogous communication problems in the presence of an
entangled state (which we examine shortly) define the entanglement assisted independence
and chromatic numbers.

\begin{figure}[ht]
    \centering
    \includegraphics[scale=1.0]{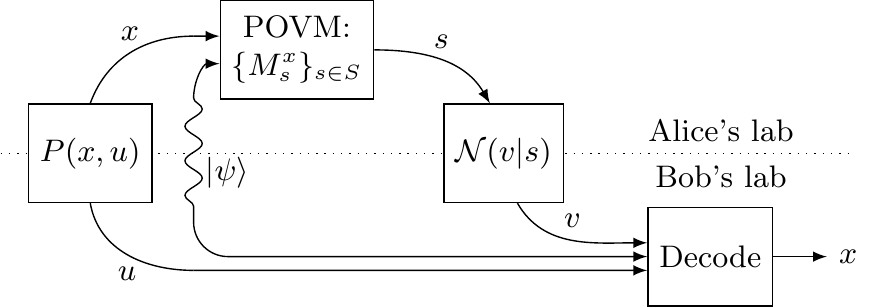}
    \caption{
        An entanglement assisted zero-error source-channel $(1,1)$-coding scheme.
    }
    \label{fig:coding_ent}
\end{figure}

If Alice and Bob share an entangled state they can use the strategy depicted in
\cref{fig:coding_ent}, which is described in greater detail in~\cite{arxiv:1308.4283}.
Alice, upon receiving $x \in X$, performs a POVM $\{ M^x_s \}_{s \in S}$ on her
half of the entanglement resource $\ket{\psi} \in \hA \ot \hB$ and receives measurement
outcome $s \in S$.
Without loss of generality this can be assumed to be a projective measurement since any POVM
can be converted to a projective measurement by enlarging the entangled state.
So for each $x \in X$, the collection $\{ M^x_s \}_{s \in S}$ consists of projectors on $\hA$
which sum to the identity.
Alice sends the measurement outcome $s$ through the channel $\chan{N}$ to Bob, who receives
some $v \in V$ such that $\chan{N}(v|s) > 0$.
Bob then measures his half of the entangled state using a projective measurement depending on
$v$ and his side information $u$.
An \emph{entanglement assisted zero-error $(1,1)$-coding scheme}
is one in which Bob is able to determine $x$ with
zero chance of error; an
\emph{entanglement assisted zero-error $(m,n)$-coding scheme}
involves sending $m$ independent samples of the source using $n$ copies of the channel.

After Alice's measurement, Bob's half of the entanglement resource is in the state
\begin{align*}
    \rho_s^x = \Tr_A\{ (M_s^x \ot I) \proj{\psi} \}.
\end{align*}
An error free decoding operation exists for Bob if and only if these states are orthogonal
for every $x \in X$ consistent with the information in Bob's possession (i.e. $u$ and $v$).
We then have the following necessary and sufficient condition~\cite{arxiv:1308.4283}.
Let $G$ be the characteristic graph of the source and $H$ be the
distinguishability graph of the channel.
There must be a bipartite pure state $\ket{\psi} \in \hA \ot \hB$ for some Hilbert spaces $\hA$
and $\hB$, and for each $x \in X$ there must be a projective decomposition of the identity
$\{ M^x_s \}_{s \in S}$ on $\hA$ such that
\begin{align*}
    \rho_s^x \perp \rho_t^y
    \textnormal{ for all } x \simG y \textnormal{ and } s \not\simH t,
\end{align*}
where orthogonality is in terms of the Hilbert--Schmidt inner product.

Recall that without entanglement a zero-error $(1,1)$-coding scheme was possible if and only
if $G \homm H$.
By analogy we say there is an \emph{entanglement assisted homomorphism} $G \home H$
when there exists an entanglement assisted zero-error $(1,1)$-coding scheme:
\begin{definition}
    \label{def:home}
    Let $G$ and $H$ be graphs.  There is an entanglement assisted homomorphism from $G$ to $H$,
    written $G \home H$, if there is a bipartite state $\ket{\psi} \in \hA \ot \hB$ (for some
    Hilbert spaces $\hA$ and $\hB$) and, for each $x \in V(G)$, a projective decomposition of
    the identity $\{ M_s^x \}_{s \in V(H)}$ on $\hA$ such that
    \begin{align}
        \label{eq:cond_home}
        \rho_s^x \perp \rho_t^y
        \textnormal{ for all } x \simG y \textnormal{ and } s \not\simH t,
    \end{align}
    where
    \begin{align}
        \label{eq:rho_sx}
        \rho_s^x := \Tr_A\{ (M_s^x \ot I) \proj{\psi} \}.
    \end{align}
\end{definition}

Analogous to~\eqref{eq:mn_hom}, there is an entanglement assisted $(m,n)$-coding
scheme if and only if $G^{\boxtimes m} \home H^{\djp n}$.
The entangled cost rate~\cite{arxiv:1308.4283} is analogous to~\eqref{eq:costrate},
\begin{align}
    \label{eq:entcostrate}
    \eta^*(G, \Hc) = \lim_{m \to \infty} \frac{1}{m} \min\left\{
        n : G^{\boxtimes m} \home H^{\djp n}
        \right\}.
\end{align}
In the absence of side information (i.e.\ with $U$ being a singleton set),
$G$ becomes the complete graph.
We saw above that without entanglement and without side information,
$n$ distinct codewords can be sent error-free through a noisy channel if and only if
$K_n \homm H$; the largest such $n$ is $\omega(H) = \alpha(\Hc)$.
With the help of entanglement the largest number of codewords
is the largest $n$ such that $K_n \home H$; this defines the
\emph{entanglement assisted independence number}, $\alpha^*(\Hc)$.
Since an entanglement resource never hurts, $\alpha^*(\Hc) \ge \alpha(\Hc)$ always.
In some cases $\alpha^*(\Hc)$ can be strictly larger than
$\alpha(\Hc)$~\cite{PhysRevLett.104.230503}.

We saw above that $\alpha(\Hc) \le \thbar(H)$.  Indeed, this was the original application of
$\thbar$.
Beigi showed that also $\alpha^*(\Hc) \le \thbar(H)$~\cite{PhysRevA.82.010303}
(this has been generalized to quantum channels as
well~\cite{dsw2013}; however, we consider here only classical channels).
Beigi proved his bound by showing that if $n$ distinct codewords can be sent through a noisy
channel with zero-error using entanglement ($K_n \home H$ in our notation) then there are
vectors $\ket{w} \ne 0$ and $\ket{w_s^x}$ with $x \in \{1,\dots,n\}$ and $s \in V(H)$ such
that\footnote{
    Recall that we take $\Hc$ to be the confusability graph rather than $H$.  So Beigi's
    definition is worded differently.
}
\begin{align}
       &\; \sum_s \ket{w_s^x} = \ket{w} \label{eq:beigi_first}
    \\ &\; \braket{w_s^x}{w_t^y} = 0 \textnormal{ for all } x \ne y, s \not\simH t
    \\ &\; \braket{w_s^x}{w_t^x} = 0 \textnormal{ for all } s \ne t. \label{eq:beigi_last}
\end{align}
Denote by $\beta(\Hc)$ the largest $n$ such that vectors of this form exist.
Then $\beta(\Hc) \ge \alpha^*(\Hc)$.
Beigi showed that the existence of such vectors implies $n \le \thbar(H)$,
therefore $\alpha^*(\Hc) \le \beta(\Hc) \le \thbar(H)$.
Since $\vartheta$ is multiplicative under the strong graph product, $\thbar(H)$ is in fact
an upper bound on the entanglement assisted Shannon capacity.
Beigi left open the question of whether $\beta(\Hc)$ was equal to $\lfloor \thbar(H) \rfloor$.
We will answer this question in the affirmative (\cref{thm:beigi_beta}).

In fact, we show something more general.
We generalize Beigi's vectors so that they apply to the source-channel coding problem (i.e.\
with $G$ not necessarily being $K_n$) and give a bound in terms of the Lov{\'a}sz $\vartheta$
number.
The conditions we will introduce can be thought of as a relaxation of the
condition~\eqref{eq:cond_home}, which defines $G \home H$.
A related but different relaxation will give bounds in terms of two variations of the
Lov{\'a}sz number: the Schrijver number~\cite{1056072,mceliece1978lovasz} and the Szegedy
number~\cite{365707}.
We denote the first relaxation $G \homb H$ since it generalizes Beigi's condition, and denote
the second $G \homp H$ since it contains a positivity condition.
A third relaxation, $G \homv H$, is defined here but the significance is discussed later.

\begin{definition}
    \label{def:hombpv}
    Let $G$ and $H$ be graphs.
    Write $G \homb H$ if there are vectors $\ket{w} \ne 0$ and $\ket{w_s^x} \in \mathbb{C}^d$ for
    each $x \in V(G)$, $s \in V(H)$, for some $d \in \mathbb{N}$, such that
    \begin{enumerate}
        \item $\sum_s \ket{w_s^x} = \ket{w}$
        \item $\braket{w_s^x}{w_t^y} = 0$ for all $x \simG y$, $s \not\simH t$
        \item $\braket{w_s^x}{w_t^x} = 0$ for all $s \ne t$.
    \end{enumerate}
    Write $G \homp H$ if there are vectors satisfying
    %$\ket{w} \ne 0$ and $\ket{w_s^x} \in \mathbb{C}^d$ for
    %each $x \in V(G)$, $s \in V(H)$, for some $d \in \mathbb{N}$, such that
    \begin{enumerate}
        \item $\sum_s \ket{w_s^x} = \ket{w}$
        \item $\braket{w_s^x}{w_t^y} = 0$ for all $x \simG y$, $s \not\simH t$
        \item $\braket{w_s^x}{w_t^y} \ge 0$.
    \end{enumerate}
    Write $G \homv H$ if there are vectors satisfying the conditions for both
    $G \homb H$ and $G \homp H$, i.e.,
    \begin{enumerate}
        \item $\sum_s \ket{w_s^x} = \ket{w}$
        \item $\braket{w_s^x}{w_t^y} = 0$ for all $x \simG y$, $s \not\simH t$
        \item $\braket{w_s^x}{w_t^x} = 0$ for all $s \ne t$
        \item $\braket{w_s^x}{w_t^y} \ge 0$.
    \end{enumerate}
\end{definition}

Without loss of generality one could consider only real vectors, since complex
vectors can be turned real via the recipe
$\ket{\hat{w}_s^x} = \textrm{Re}(\ket{w_s^x}) \oplus \textrm{Im}(\ket{w_s^x})$
while preserving the inner product properties required by the above definitions.

It is enlightening to consider the Gram matrices of the $\ket{w_s^x}$ vectors.
In fact, it is this formulation that will be used to prove our main theorems.

\begin{theorem}
    \label{thm:gram}
    $G \homb H$ if and only if there is a positive semidefinite matrix
    $C : \opV{G} \ot \opV{H}$ satisfying
    \begin{align}
        &\, \sum_{s,t} C_{xyst} = 1 \label{eq:Cb_first}
        \\ &\, C_{xyst} = 0 \textrm{ for } x \simG y \textrm{ and } s \not\simH t
        \label{eq:Cb_second}
        \\ &\, C_{xxst} = 0 \textrm{ for } s \ne t. \label{eq:Cb_last}
    \end{align}
    $G \homp H$ if and only if there is a positive semidefinite matrix
    satisfying~\eqref{eq:Cb_first},~\eqref{eq:Cb_second}, and
    \begin{align}
        C_{xyst} \ge 0. \label{eq:Cp}
    \end{align}
    $G \homv H$ if and only if there is a positive semidefinite matrix
    satisfying~\eqref{eq:Cb_first}-\eqref{eq:Cp}.
\end{theorem}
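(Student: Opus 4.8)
The plan is to use the elementary fact that a Hermitian matrix is positive semidefinite if and only if it is the Gram matrix of some family of vectors, and to identify the entry $C_{xyst}$ with the inner product $\braket{w_s^x}{w_t^y}$. I would prove all three equivalences ($\homb$, $\homp$, $\homv$) simultaneously, since the positivity condition~\eqref{eq:Cp} and its vector counterpart $\braket{w_s^x}{w_t^y} \ge 0$ correspond entrywise and require no separate argument once the rest is in place.

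For the forward direction, suppose vectors $\ket{w} \ne 0$ and $\ket{w_s^x}$ satisfy the defining conditions of \cref{def:hombpv}. Rescaling every vector by $1/\opnorm{w}$ is harmless, since $\ket{w} \ne 0$ and all of the conditions are homogeneous in the vectors, so I may assume $\opnorm{w} = 1$. Setting $C_{xyst} := \braket{w_s^x}{w_t^y}$ gives a Hermitian positive semidefinite matrix automatically, as a Gram matrix. Condition~1 then yields $\sum_{s,t} C_{xyst} = \braket{\sum_s w_s^x}{\sum_t w_t^y} = \braket{w}{w} = 1$, which is~\eqref{eq:Cb_first}; conditions~2 and~3 transcribe directly to~\eqref{eq:Cb_second} and~\eqref{eq:Cb_last}; and in the $\homp$ and $\homv$ cases the nonnegativity of the inner products is exactly~\eqref{eq:Cp}.

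For the reverse direction, given a positive semidefinite $C$ I factor it as a Gram matrix, obtaining vectors $\ket{w_s^x} \in \mathbb{C}^d$ (with $d = \rank C$) such that $C_{xyst} = \braket{w_s^x}{w_t^y}$. Conditions~\eqref{eq:Cb_second},~\eqref{eq:Cb_last}, and~\eqref{eq:Cp} immediately supply the matching orthogonality and positivity conditions on the vectors. The one step that needs care is recovering the single common vector $\ket{w}$. Writing $\ket{v_x} := \sum_s \ket{w_s^x}$, condition~\eqref{eq:Cb_first} says $\braket{v_x}{v_y} = \sum_{s,t} C_{xyst} = 1$ for every pair $x,y$. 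In particular each $\ket{v_x}$ is a unit vector and every off-diagonal inner product equals $1$, so $\opnorm{v_x - v_y}^2 = 2 - \braket{v_x}{v_y} - \braket{v_y}{v_x} = 0$, forcing all the $\ket{v_x}$ to coincide with a single nonzero vector $\ket{w}$. This establishes condition~1 and completes the equivalence.

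The main obstacle — indeed essentially the only nonroutine point — is this last collapsing argument. The normalization~\eqref{eq:Cb_first} simultaneously encodes that the vector sums $\sum_s \ket{w_s^x}$ are nonzero and that they agree across all $x$, and it is precisely the equality case of Cauchy--Schwarz that converts the statement ``all pairwise sums equal $1$'' into ``all the sums are literally the same vector $\ket{w}$.'' Everything else is the bookkeeping of the Gram-matrix correspondence together with the harmless rescaling needed to pass between $\opnorm{w}^2$ and the value $1$ appearing in~\eqref{eq:Cb_first}.
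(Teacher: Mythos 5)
Your proposal is correct and follows essentially the same route as the paper's proof: the Gram-matrix correspondence, the harmless rescaling to $\braket{w}{w}=1$, and the key collapsing step showing that the normalization~\eqref{eq:Cb_first} forces all the sums $\sum_s \ket{w_s^x}$ to be one and the same unit vector $\ket{w}$. Your explicit computation $\opnorm{v_x - v_y}^2 = 2 - \braket{v_x}{v_y} - \braket{v_y}{v_x} = 0$ is just a slightly more detailed rendering of the paper's observation that unit vectors with pairwise unit inner products must coincide.
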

\begin{proof}
    We prove only the $G \homb H$ claim; the proofs for $G \homp H$ and $G \homv H$ are analogous.
    Suppose $G \homb H$ and let $\ket{w}$ and $\ket{w_s^x}$ be the vectors described in
    \cref{def:hombpv}.
    Without loss of generality rescale so that $\braket{w}{w} = 1$.
    Define the matrix $C : \opV{G} \ot \opV{H}$ with entries
    $C_{xyst} = \braket{w_s^x}{w_t^y}$ for $x,y \in V(G)$ and $s,t \in V(H)$.
    Since $C$ is a Gram matrix, it is positive semidefinite.
    Properties~\eqref{eq:Cb_first}-\eqref{eq:Cb_last} follow directly from the three properties
    listed in \cref{def:hombpv} for $G \homb H$ (the first of these uses $\braket{w}{w} = 1$).

    For the converse, note that any positive semidefinite matrix $C : \opV{G} \ot \opV{H}$
    is a Gram matrix of some vectors $\ket{w_s^x}$.
    The three properties of \cref{def:hombpv} for $G \homb H$
    follow from~\eqref{eq:Cb_first}-\eqref{eq:Cb_last}.
    Only the first of these is nontrivial.  We have that for all $x,y \in V(G)$,
    \begin{align*}
        1 = \sum_{st} C_{xyst} = \sum_{st} \braket{w_s^x}{w_t^y} =
        \left( \sum_s \bra{w_s^x} \right)\left( \sum_t \ket{w_t^y} \right).
    \end{align*}
    For $x = y$, the above implies that $\sum_s \ket{w_s^x}$ is a unit vector for all $x$.
    These unit vectors must have unit inner product amongst themselves by the $x \ne y$
    cases above, and therefore they must all be the same vector.
    Call this $\ket{w}$.  Clearly $\ket{w} \ne 0$.
\end{proof}

It is interesting to note that a matrix with
properties~\eqref{eq:Cb_first},~\eqref{eq:Cb_second}, and~\eqref{eq:Cp} (those associated
with $G \homp H$) can be interpreted as
a conditional probability distribution, $P(s,t|x,y) = C_{xyst}$.
With this interpretation, $G \homp H$ if and only if there exists a conditional
probability distribution $P(s,t|x,y)$ such that $C_{sx;ty}=P(s,t|x,y)$ is a positive
semidefinite matrix and $P(s \simH t|x \simG y)=1$.
A similar interpretation holds for $G \homv H$.

We now show that $G \homb H$ and $G \homp H$ are indeed relaxations of $G \home H$
(the significance of $G \homv H$ will be explained later).
Since \cref{def:hombpv} reduces to Beigi's criteria when considering $K_n \homb H$, the
argument that follows provides an alternative and simpler proof of Beigi's result that
$\alpha^*(\Hc) \le \beta(\Hc)$.

\begin{theorem}
    \label{thm:hom_relax}
    If $G \home H$ then $G \homb H$ and $G \homp H$.
\end{theorem}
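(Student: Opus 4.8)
The plan is to prove the two implications separately, using a different family of vectors for each. There is no reason to expect one family to witness both relaxations simultaneously — that stronger conclusion is precisely $G \homv H$, which the paper treats separately — so I would not try to force a single construction. Throughout I may assume $\hA$ and $\hB$ are finite-dimensional, so that the vectors I build lie in some $\mathbb{C}^d$ as required by \cref{def:hombpv}. I use the operators $\rho_s^x = \Tr_A\{(M_s^x \ot I)\proj{\psi}\}$ and the marginal $\rho_B := \Tr_A\proj{\psi}$, and I note that $\sum_s \rho_s^x = \Tr_A\{(\sum_s M_s^x \ot I)\proj{\psi}\} = \rho_B$ is independent of $x$ by completeness of the measurement.

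For $G \homb H$ I would take $\ket{w_s^x} := (M_s^x \ot I)\ket{\psi}$ and $\ket{w} := \ket{\psi}$. Condition~1 of \cref{def:hombpv} is immediate, $\sum_s \ket{w_s^x} = (\sum_s M_s^x \ot I)\ket{\psi} = \ket{\psi} \ne 0$, independent of $x$. Condition~3 is equally immediate: for $s \ne t$ we have $\braket{w_s^x}{w_t^x} = \bra{\psi}(M_s^x M_t^x \ot I)\ket{\psi} = 0$ since the $M_s^x$ are orthogonal projectors. The one substantive point, and where I expect the main work to lie, is Condition~2. The key observation is that the reduced state of $\ket{w_s^x}$ on $\hB$ equals $\rho_s^x$: because $M_s^x$ is a projector, $\Tr_A\{(M_s^x \ot I)\proj{\psi}(M_s^x \ot I)\} = \Tr_A\{(M_s^x \ot I)\proj{\psi}\} = \rho_s^x$. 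Now when $x \simG y$ and $s \not\simH t$, \cref{def:home} gives $\Tr(\rho_s^x \rho_t^y) = 0$; since both operators are positive semidefinite this forces $\supp(\rho_s^x) \perp \supp(\rho_t^y)$ in $\hB$. Letting $P$ be the orthogonal projector onto $\supp(\rho_s^x)$, the $\hB$-marginal of $\ket{w_s^x}$ lies in its range while that of $\ket{w_t^y}$ lies in its orthocomplement, so $(I \ot P)\ket{w_s^x} = \ket{w_s^x}$ and $(I \ot P)\ket{w_t^y} = 0$, whence $\braket{w_s^x}{w_t^y} = \bra{w_s^x}(I \ot P)\ket{w_t^y} = 0$. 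The crux is exactly this passage from orthogonality of the $\hB$-marginals to orthogonality of the full purifying vectors.

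For $G \homp H$ I would instead set $\ket{w_s^x} := \vecize(\rho_s^x)$, so that $\braket{w_s^x}{w_t^y} = \Tr(\rho_s^x \rho_t^y)$, and take $\ket{w} := \vecize(\rho_B)$. Condition~1 holds since $\sum_s \ket{w_s^x} = \vecize(\sum_s \rho_s^x) = \vecize(\rho_B) = \ket{w} \ne 0$, independent of $x$. Condition~2 is again the hypothesis $\Tr(\rho_s^x \rho_t^y) = 0$ for $x \simG y$, $s \not\simH t$. Finally, the positivity condition~3, $\braket{w_s^x}{w_t^y} = \Tr(\rho_s^x \rho_t^y) \ge 0$, holds because the trace of a product of two positive semidefinite operators is nonnegative. (Equivalently, one may feed the Gram matrix $C_{xyst} = \Tr(\rho_s^x \rho_t^y)$ into the $\homp$ part of \cref{thm:gram}.)

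Apart from the marginal-orthogonality step in the $\homb$ case, everything reduces to bookkeeping with the completeness relation $\sum_s M_s^x = I$ and positivity of traces of products of positive semidefinite operators. It is worth emphasizing that the two families are genuinely different: the $\homb$ vectors live in $\hA \ot \hB$ and generically violate the positivity condition, while the $\homp$ vectors live in $\linop{\hB}$ and generically violate the within-$x$ orthogonality condition. This is exactly why the present argument proves $G \homb H$ and $G \homp H$ but not $G \homv H$.
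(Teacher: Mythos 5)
Your proposal is correct and follows essentially the same route as the paper: it uses the identical two constructions, $\ket{w_s^x} = (M_s^x \ot I)\ket{\psi}$ for $G \homb H$ and $\ket{w_s^x} = \vecize(\rho_s^x)$ for $G \homp H$, and verifies the same conditions. Your only addition is to spell out the crux step that the paper merely asserts---that orthogonality of the reduced states $\rho_s^x \perp \rho_t^y$ forces orthogonality of the purifying vectors, via support orthogonality and the projector $(I \ot P)$---which is a sound and welcome elaboration.
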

\begin{proof}
    ($G \home H \implies G \homb H$):
    Suppose that $G \home H$.  Let $\ket{\psi}$ and $M_s^x$ for $x \in V(G)$
    and $s \in V(H)$ satisfy condition~\eqref{eq:cond_home} (with $\rho_s^x$ given
    by~\eqref{eq:rho_sx}).
    Define $\ket{w} = \ket{\psi}$ and
    \begin{align*}
        \ket{w_s^x} = (M_s^x \ot I) \ket{\psi}.
    \end{align*}
    Since $\{ M^x_s \}_{s \in S}$ is a projective decomposition of the identity,
    \begin{align*}
        \sum_s M_s^x = I &\implies \sum_s \ket{w_s^x} = \ket{w},
        \\
        M_s^x M_t^x = 0 &\implies \braket{w_s^x}{w_t^x} = 0 \textnormal{ for } s \ne t.
    \end{align*}
    For all $x \simG y$ and $s \not\simH t$, condition~\eqref{eq:cond_home} gives
    that the reduced density operators (tracing over $\hA$) of the post-measurement states
    $(M_s^x \ot I) \ket{\psi}$ and $(M_t^y \ot I) \ket{\psi}$ are orthogonal.  But this is only
    possible if the pure states (without tracing out $\hA$) are orthogonal.
    So,
    \begin{align*}
        \braopket{\psi}{(M_s^x \ot I)^\dag (M_t^y \ot I)}{\psi} = 0
        &\implies
        \braket{w_s^x}{w_t^y} = 0.
    \end{align*}

    ($G \home H \implies G \homp H$):
    Suppose that $G \home H$.  Let $\ket{\psi}$ and $M_s^x$ for $x \in V(G)$
    and $s \in V(H)$ satisfy condition~\eqref{eq:cond_home} (with $\rho_s^x$ given
    by~\eqref{eq:rho_sx}).
    Define $\ket{w_s^x}$ to be the vectorization of the post-measurement reduced density
    operator,
    \begin{align*}
        \ket{w_s^x} = \vecize(\rho_s^x).
    \end{align*}
    Since $\{ M^x_s \}_{s \in S}$ sum to identity,
    \begin{align*}
        \sum_s \ket{w_s^x} &=
        \vecize\left( \sum_s \Tr_A\{ (M_s^x \ot I) \proj{\psi} \} \right)
        \\ &= \vecize(\Tr_A\{ \proj{\psi} \}) =: \ket{w}.
    \end{align*}
    For all $x \simG y$ and $s \not\simH t$, condition~\eqref{eq:cond_home} gives
    $\braket{w_s^x}{w_t^y} = 0$.
    Density operators are positive, giving positive inner products
    $\braket{w_s^x}{w_t^y} \ge 0$.
\end{proof}

\section{Monotonicity theorems}
\label{sec:mainthms}

Our main results concern monotonicity properties of the Lov{\'a}sz number $\vartheta$,
Schrijver number $\thm$, and Szegedy number $\thp$ for graphs that are related by the
generalized homomorphisms of \cref{def:hombpv}.
These will lead to various bounds relevant to entanglement assisted zero-error source-channel
coding.
These three quantities are defined as follows.

\begin{definition}
    \label{def:theta}
    In this definition we use real matrices.
    For convenience we state the definitions in terms of the
    complement of a graph, since this form is used throughout the theorems.

    The Lov{\'a}sz number of the complement, $\thbar(G):=\vartheta(\Gc)$,
    is given by either of the following
    two semidefinite programs, which are equivalent~\cite{lovasz79,knuth94,lovaszsemidef}:
    \begin{align}
        \thbar(G) &= \max\{ \opnorm{I + T} : I + T \succeq 0,
            \notag \\ &\hphantom{= \max\{\;}
            T_{ij}=0 \textnormal{ for } i \not\sim j \},
            \label{eq:th_max_sdp}
        \\ \thbar(G) &= \min\{ \lambda : \exists Z \succeq 0, Z_{ii} = \lambda-1,
            \notag \\ &\hphantom{= \min\{\;}
            Z_{ij} = -1 \textnormal{ for } i \sim j \},
            \label{eq:th_min_sdp}
        \\ \intertext{where $\opnorm{\cdot}$ denotes the operator norm (the largest singular
        value) and $\succeq 0$ means that a matrix is positive semidefinite.
        The Schrijver number of the complement, $\thmbar(G) := \thm(\Gc)$,
        (sometimes written $\vartheta'$) is~\cite{1056072,mceliece1978lovasz}
        }
        \thmbar(G) &= \min\{ \lambda : \exists Z \succeq 0, Z_{ii} = \lambda-1,
            \notag \\ &\hphantom{= \min\{\;}
            Z_{ij} \le -1 \textnormal{ for } i \sim j \}.
            \label{eq:thm_min_sdp}
        \\ \intertext{The Szegedy number of the complement, $\thpbar(G) := \thp(\Gc)$,
        is~\cite{365707}
        }
        \thpbar(G) &= \min\{ \lambda : \exists Z \succeq 0, Z_{ii} = \lambda-1,
            \notag \\ &\hphantom{= \min\{\;}
            Z_{ij} = -1 \textnormal{ for } i \sim j,
            \notag \\ &\hphantom{= \min\{\;}
            Z_{ij} \ge -1 \textnormal{ for all } i,j \}.
            \label{eq:thp_min_sdp}
    \end{align}
    Clearly $\thmbar(G) \le \thbar(G) \le \thpbar(G)$.
\end{definition}

Our first result is that $G \homb H$ exactly characterizes ordering of $\thbar$.
This will lead to a bound on entanglement assisted cost rate.

\begin{theorem}
    \label{thm:th_homb}
    $G \homb H \iff \thbar(G) \le \thbar(H)$.
\end{theorem}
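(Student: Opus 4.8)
The plan is to pass to the Gram-matrix picture of \cref{thm:gram} and prove the two implications separately: the forward implication by manufacturing an orthonormal representation of $\Gc$ out of a $\homb$-witness, and the reverse implication by assembling a feasible Gram matrix $C$ from optimal solutions of the defining semidefinite programs~\eqref{eq:th_min_sdp} of $\thbar(G)$ and $\thbar(H)$.

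For the forward direction, suppose $G \homb H$ and let $\ket{w}$ (rescaled so $\langle w, w\rangle = 1$) and $\ket{w_s^x}$ be as in \cref{def:hombpv}. Put $\lambda := \thbar(H)$ and take an optimal $Z$ in the program~\eqref{eq:th_min_sdp} for $H$; then $Y := J + Z \succeq 0$ (with $J$ the all-ones matrix) has $Y_{ss} = \lambda$ and $Y_{st} = 0$ for $s \simH t$, and is the Gram matrix of vectors $q_s = c + z_s$, where $c$ is a unit vector with $c \perp z_s$ and $\langle c, q_s\rangle = 1$. I would then form $Q_x := \sum_s \ket{w_s^x} \ot q_s$ and $\Omega := \ket{w} \ot c$. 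A direct computation from $\langle Q_x, Q_y\rangle = \sum_{s,t}\langle w_s^x, w_t^y\rangle\, Y_{st}$ gives $\langle Q_x, Q_x\rangle = \lambda$ (condition~(3) removes the $s \ne t$ terms and condition~(1) leaves $\lambda \sum_s \langle w_s^x, w_s^x\rangle = \lambda$) and $\langle Q_x, Q_y\rangle = 0$ for $x \simG y$ (condition~(2) confines the sum to $s \simH t$, where $Y_{st} = 0$); moreover $\langle \Omega, Q_x\rangle = 1$ because $c \perp z_s$. Hence $\Omega$ is a unit vector with $Q_x - \Omega \perp \Omega$, so the Gram matrix $Z'_{xy} := \langle Q_x - \Omega,\, Q_y - \Omega\rangle = \langle Q_x, Q_y\rangle - 1$ is positive semidefinite with $Z'_{xx} = \lambda - 1$ and $Z'_{xy} = -1$ for $x \simG y$. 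Thus $Z'$ is feasible for~\eqref{eq:th_min_sdp} applied to $G$, whence $\thbar(G) \le \lambda = \thbar(H)$.

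For the reverse direction I would run this construction backwards: assuming $\thbar(G) \le \thbar(H) =: \lambda$, take optimal handle-and-orthonormal-representation data for both $\Gc$ and $\Hc$, rescaled to the common value $\lambda$ (padding the $G$-side solution, whose optimum is $\le \lambda$, with extra orthogonal dimensions so that its diagonal becomes $\lambda$ while the $G$-edge entries stay $0$), and build the vectors $\ket{w_s^x}$ — equivalently the Gram matrix $C$ of \cref{thm:gram} — by combining the two solutions so that the shared handle direction supplies the common vector $\ket{w}$, while the handle-orthogonal parts of the two representations supply the orthogonality relations. The intended correspondence is that the handle normalisation forces the constant row sum~\eqref{eq:Cb_first}, while the orthogonalities of the two representations force the edge conditions~\eqref{eq:Cb_second} and~\eqref{eq:Cb_last}.

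I expect the reverse direction to be the main obstacle. The tension is that~\eqref{eq:Cb_first} demands $\sum_{s,t} C_{xyst} = 1$ even when $x \simG y$, so the ``mass'' must be carried by off-diagonal $s \simH t$ entries precisely where~\eqref{eq:Cb_second} permits them, and yet~\eqref{eq:Cb_last} insists that the $\ket{w_s^x}$ be mutually orthogonal for each fixed $x$. A naive tensor product of the two representations, or the use of a fixed orthonormal $H$-basis, satisfies some conditions but annihilates~\eqref{eq:Cb_first} whenever $x \simG y$; the correct construction must instead realise the within-vertex orthogonality through an $x$-dependent orthogonal frame whose relative rotations are governed by the $G$-side representation, and one must then check that the assembled $C$ is genuinely positive semidefinite. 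Should a direct construction prove unwieldy, a fallback is to phrase the existence of $C$ as an SDP feasibility problem and invoke semidefinite duality, showing that any separating (Farkas) certificate would contradict $\thbar(G) \le \thbar(H)$.
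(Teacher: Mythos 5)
Your forward direction ($G \homb H \Rightarrow \thbar(G)\le\thbar(H)$) is correct, and it is in substance the paper's own argument: expanding $Z'_{xy}=\langle Q_x-\Omega,\,Q_y-\Omega\rangle$ gives $\langle Q_x,Q_y\rangle-1=\sum_{s,t}\braket{w_s^x}{w_t^y}(1+Z_{st})-1=\sum_{s,t}Z_{st}C_{xyst}$, which is entry-for-entry the matrix $Y$ that the paper obtains by conjugating $(J\ot Z)\circ C$; your Gram-vector presentation merely makes its positive semidefiniteness self-evident. The one loose end is that with complex $\ket{w_s^x}$ your $Z'$ is Hermitian rather than real, while \eqref{eq:th_min_sdp} is a program over real matrices, so you must take real parts (as the paper does with $Y'$) or invoke the paper's remark after \cref{def:hombpv} that the vectors may be assumed real.

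The reverse direction, however, is a genuine gap: you describe the obstruction accurately but never overcome it, and you say so yourself. Neither the ``$x$-dependent orthogonal frame'' idea nor the Farkas fallback is carried out, and since the theorem is an if-and-only-if, half of it is missing. The missing idea is a two-term tensor combination that mixes the two \emph{different} SDP formulations, rather than an assembly of two representations of the same kind. From the maximization program \eqref{eq:th_max_sdp} for $\thbar(H)$ take $T$ (supported on the edges of $H$) with $\opnorm{I+T}=\lambda:=\thbar(H)$ and top eigenvector $\ket{\psi}$, and set $B=\proj{\psi}\circ(I+T)$, $D=\proj{\psi}\circ I$; then $B\succeq 0$, $\lambda D-B\succeq 0$, $\ip{B,J}=\lambda$, $\ip{D,J}=1$, and $B_{st}=D_{st}$ whenever $s\not\simH t$. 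From the minimization program \eqref{eq:th_min_sdp} for $\thbar(G)$, padded so its diagonal equals $\lambda-1$ (possible since $\thbar(G)\le\lambda$), take $Z$. Then
\begin{align*}
C=\lambda^{-1}\left[J\ot B+(\lambda-1)^{-1}\,Z\ot(\lambda D-B)\right]
\end{align*}
is positive semidefinite, and the identities $\lambda\ip{D,J}-\ip{B,J}=0$, $Z_{xy}=-1$ for $x\simG y$, $Z_{xx}=\lambda-1$, and $B_{st}=D_{st}$ for $s\not\simH t$ yield exactly \eqref{eq:Cb_first}--\eqref{eq:Cb_last}. This realizes precisely the scenario you predicted must occur --- for $x\simG y$ the unit row sum is carried by entries with $s\simH t$, the support of $B$ --- but it does so through a cancellation between the two tensor summands, not through rotating frames; without this construction (or a worked-out duality argument) the implication $\thbar(G)\le\thbar(H)\Rightarrow G\homb H$ remains unproven.
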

\begin{proof}
    ($\Longleftarrow$):
    Suppose $\thbar(G) \le \thbar(H)$.  We will explicitly construct a matrix
    $C \succeq 0$ satisfying properties~\eqref{eq:Cb_first}-\eqref{eq:Cb_last} of
    \cref{thm:gram}.
    Let $\lambda = \thbar(H)$.
    By definition, there is a matrix $T$ such that $\opnorm{I+T}=\lambda$, $I+T \succeq 0$, and
    $T_{st}=0$ for $s \not\sim t$.
    With $\ket{\psi}$ denoting the vector corresponding to the largest eigenvalue of $I+T$, and
    with $\circ$ denoting the Schur--Hadamard (i.e.\ entrywise) product, define the matrices
    \begin{align*}
        D &= \proj{\psi} \circ I,
        \\ B &= \proj{\psi} \circ (I + T).
    \end{align*}
    With $J$ being the all-ones matrix and $\ip{\cdot,\cdot}$ denoting the Hilbert--Schmidt inner
    product, it is readily verified that
    \begin{align*}
        \ip{D, J} &= \braket{\psi}{\psi} = 1, \\
        \ip{B, J} &= \braopket{\psi}{I+T}{\psi} = \lambda.
    \end{align*}
    Since the Schur--Hadamard product of two matrices is a principal submatrix of their tensor product, this operation preserves positive semidefiniteness.
    As a consequence, $B \succeq 0$ and
    \begin{align*}
        \opnorm{I+T} = \lambda &\implies \lambda I - (I+T) \succeq 0 \implies \lambda D - B
        \succeq 0.
    \end{align*}

    Since $\lambda \ge \thbar(G)$, there is a matrix $Z$ such that
    $Z \succeq 0$,
    $Z_{xx} = \lambda-1$ for all $x$, and
    $Z_{xy} = -1$ for all $x \simG y$.
    Note that \cref{def:theta} gives existence of a matrix with $\thbar(G)-1$ on the diagonal, but
    since $\lambda \ge \thbar(G)$ we can add a multiple of the identity to get $\lambda-1$ on the
    diagonal.

    We now construct $C$.  Define
    \begin{align*}
        C = \lambda^{-1} \left[ J \ot B + (\lambda-1)^{-1} Z \ot (\lambda D - B) \right].
    \end{align*}
    Since $J$, $B$, $Z$, and $\lambda D - B$ are all positive semidefinite, and $\lambda-1 \ge 0$,
    we have that $C$ is positive semidefinite.
    The other desired conditions on $C$ are easy to verify.
    For all $x,y$ we have
    \begin{align*}
        \sum_{st} C_{xyst}
        &= \lambda^{-1} \left[ \ip{B,J} +
            (\lambda-1)^{-1} Z_{xy} [\lambda\ip{D, J}-\ip{B, J}] \right]
        \\ &= 1.
    \end{align*}
    Note that the $J$ in the above equation is indexed by $V(H)$, whereas the $J$ in the definition of $C$ is indexed by $V(G)$. For $x \simG y$ and $s \not\simH t$,
    \begin{align*}
        C_{xyst} &= \lambda^{-1} \left[ B_{st} + (\lambda-1)^{-1} Z_{xy} (\lambda D_{st} -
            B_{st}) \right]
        \\ &= \lambda^{-1} \left[ B_{st} + (\lambda-1)^{-1} (-1) (\lambda B_{st} - B_{st}) \right]
        = 0.
    \end{align*}
    For all $x$ and for $s \ne t$,
    \begin{align*}
        C_{xxst} &= \lambda^{-1} \left[ B_{st} + (\lambda-1)^{-1} Z_{xx} (\lambda D_{st} -
            B_{st}) \right]
        \\ &= \lambda^{-1} \left[ B_{st} + (0 - B_{st}) \right] = 0.
    \end{align*}

    ($\Longrightarrow$):
    Suppose $G \homb H$.
    By \cref{thm:gram}, there is a matrix $C \succeq 0$ satisfying
    properties~\eqref{eq:Cb_first}-\eqref{eq:Cb_last}.
    Let $Z$ achieve the optimal value (call it $\lambda$)
    for the minimization~\eqref{eq:th_min_sdp} for $\thbar(H)$.
    We will provide a feasible solution for~\eqref{eq:th_min_sdp} for $\thbar(G)$ to show that
    $\thbar(G) \le \lambda = \thbar(H)$.
    To this end, let $\ket{\mathbf{1}}$ be the all ones vector and define
    \begin{align*}
    Y = \big(I \otimes \bra{\mathbf{1}}\big) \big[\big(J \otimes Z\big) \circ C\big] \big(I \otimes \ket{\mathbf{1}} \big).
    \end{align*}
    Since $C \succeq 0$ and $Z \succeq 0$, and positive semidefiniteness is preserved by conjugation, we have that $Y \succeq 0$. Also note that
    \[Y_{xy} = \sum_{st} Z_{st} C_{xyst}.\]
    Using the fact that $Z_{ss} = \lambda-1$ and $C_{xxst}=0$ for $s \ne t$, we have
    \begin{align*}
        Y_{xx} &= \sum_{st} Z_{st} C_{xxst}
        = (\lambda-1) \sum_{st} C_{xxst} = \lambda-1.
    \end{align*}
    Using the fact that $Z_{st} = -1$ for $s \simH t$ and
    $C_{xyst} = 0$ for $x \simG y$, $s \not\simH t$, we have
    that for $x \simG y$,
    \begin{align*}
        Y_{xy} &= \sum_{st} Z_{st} C_{xyst}
        = \sum_{s \simH t} Z_{st} C_{xyst}
        = (-1)\sum_{s \simH t} C_{xyst}
        \\ &= (-1)\sum_{st} C_{xyst}
        = -1.
    \end{align*}
    Now define a matrix $Y'$ consisting of the real part of $Y$ (i.e.\ with coefficients
    $Y_{xy}' = \textrm{Re}[Y_{xy}]$).
    This matrix is real, positive semidefinite,\footnote{
        The entrywise complex conjugate of a positive semidefinite matrix is positive
        semidefinite, so $Y' = (Y+\textrm{conj}(Y))/2 \succeq 0$.
    } and satisfies $Y_{xx}=\lambda-1$ for
    all $x$ and $Y_{xy} = -1$ for $x \sim y$.
    Therefore $Y'$ is feasible for~\eqref{eq:th_min_sdp} with value $\lambda=\thbar(H)$.
    Since $\thbar(G)$ is the minimum possible value of~\eqref{eq:th_min_sdp}, we have
    $\thbar(G) \le \thbar(H)$.
\end{proof}

We are now prepared to answer in the affirmative an open question posed by
Beigi~\cite{PhysRevA.82.010303}.

\begin{corollary}
    \label{thm:beigi_beta}
    Let $\beta(\Hc)$ be the largest $n$ such that there exist vectors
    $\ket{w} \ne 0$ and $\ket{w_s^x}$ with $x \in \{1,\dots,n\}$ and $s \in V(H)$
    which satisfy conditions~\eqref{eq:beigi_first}-\eqref{eq:beigi_last}.
    Then $\beta(\Hc) = \lfloor \thbar(H) \rfloor$.
\end{corollary}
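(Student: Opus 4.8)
The plan is to recognize the quantity $\beta(\Hc)$ as a disguised instance of the relation $\homb$ applied to complete graphs, and then to read off the answer from \cref{thm:th_homb}.

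First I would note that the conditions~\eqref{eq:beigi_first}--\eqref{eq:beigi_last} defining $\beta(\Hc)$ are exactly the three conditions for $G \homb H$ in \cref{def:hombpv} specialized to $G = K_n$. The only point to check is that the orthogonality requirements match: in $K_n$ distinct vertices are always adjacent, so $x \sim_{K_n} y \iff x \ne y$, and hence condition $2$ of \cref{def:hombpv}, namely $\braket{w_s^x}{w_t^y} = 0$ for $x \simG y$ and $s \not\simH t$, becomes ``$\braket{w_s^x}{w_t^y} = 0$ for all $x \ne y$ and $s \not\simH t$,'' which is precisely Beigi's condition. Conditions $1$ and $3$ are already verbatim identical. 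Consequently $\beta(\Hc) \ge n$ if and only if $K_n \homb H$, so $\beta(\Hc)$ is the largest $n$ for which $K_n \homb H$ holds.

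Next I would apply \cref{thm:th_homb} to obtain $K_n \homb H \iff \thbar(K_n) \le \thbar(H)$, and then evaluate $\thbar(K_n)$. By the sandwich inequality $\omega(K_n) \le \thbar(K_n) \le \chi(K_n)$, and since $\omega(K_n) = \chi(K_n) = n$, we conclude $\thbar(K_n) = n$. Therefore $\beta(\Hc)$ is the largest integer $n$ satisfying $n \le \thbar(H)$, which is exactly $\lfloor \thbar(H) \rfloor$.

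There is no real obstacle here: the entire content has been front-loaded into \cref{thm:th_homb}, and the corollary reduces to the bookkeeping identification of the $\beta$ conditions with $K_n \homb H$ together with the elementary computation $\thbar(K_n) = n$. The only thing to be careful about is the exact correspondence of the orthogonality indices, which is immediate once one unwinds the adjacency relation of the complete graph.
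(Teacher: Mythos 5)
Your proposal is correct and follows exactly the same route as the paper's own proof: identify Beigi's conditions~\eqref{eq:beigi_first}--\eqref{eq:beigi_last} with the conditions for $K_n \homb H$ in \cref{def:hombpv}, invoke \cref{thm:th_homb} to get $K_n \homb H \iff n \le \thbar(H)$ using $\thbar(K_n) = n$, and take the largest such integer $n$. The only difference is that you spell out the adjacency identification in $K_n$ and the sandwich-theorem justification of $\thbar(K_n)=n$, which the paper leaves implicit.
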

\begin{proof}
    Considering $K_n \homb H$, the conditions of \cref{def:hombpv} are equivalent
    to~\eqref{eq:beigi_first}-\eqref{eq:beigi_last}.
    Since $\thbar(K_n) = n$,
    \cref{thm:th_homb} gives $K_n \homb H \iff n \le \thbar(H)$.
    Since $\beta(\Hc)$ is the largest $n$ such that $K_n \homb H$, we have that
    $\beta(\Hc) = \lfloor \thbar(H) \rfloor$.
\end{proof}

A related corollary can be formed by considering $G \homb K_n$ rather than $K_n \homb H$.
This defines a set of vectors $\ket{w_s^x}$ satisfying conditions in some sense complementary
to Beigi's~\eqref{eq:beigi_first}-\eqref{eq:beigi_last}.
Now we approach $\vartheta$ from above:

\begin{corollary}
    \label{thm:beigi_beta_chi}
    Let $\beta_\chi(G)$ be the smallest $n$ such that there exist vectors
    $\ket{w} \ne 0$ and $\ket{w_s^x}$ with $x \in V(G)$ and $s \in \{1,\dots,n\}$ for which
    \begin{enumerate}
        \item $\sum_s \ket{w_s^x} = \ket{w}$
        \item $\braket{w_s^x}{w_s^y} = 0$ for all $x \simG y$
        \item $\braket{w_s^x}{w_t^x} = 0$ for all $s \ne t$.
    \end{enumerate}
    Then $\beta_\chi(G) = \lceil \thbar(G) \rceil$.
\end{corollary}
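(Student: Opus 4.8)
The plan is to recognize that the three conditions defining $\beta_\chi(G)$ are precisely the conditions for the generalized homomorphism $G \homb K_n$ from \cref{def:hombpv}, and then to invoke \cref{thm:th_homb}. This mirrors the proof of \cref{thm:beigi_beta}, but with the roles of the two graphs swapped: there one fixes $G = K_n$ and lets $H$ vary, whereas here one fixes $H = K_n$ and lets $G$ be the graph of interest.

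First I would specialize \cref{def:hombpv} to the target graph $H = K_n$ with vertex set $\{1,\dots,n\}$. The key observation is that in $K_n$ every pair of distinct vertices is adjacent while no vertex is adjacent to itself, so $s \not\simH t$ holds if and only if $s = t$. Substituting this into condition~2 of $G \homb H$ turns $\braket{w_s^x}{w_t^y}=0$ for $x \simG y$, $s \not\simH t$ into $\braket{w_s^x}{w_s^y}=0$ for $x \simG y$, which is exactly condition~2 of the corollary. Conditions~1 and~3 are already literally identical in the two formulations. Hence the vectors described in the corollary exist for a given $n$ if and only if $G \homb K_n$.

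Next I would apply \cref{thm:th_homb}, which gives $G \homb K_n \iff \thbar(G) \le \thbar(K_n)$. Since $\omega(K_n) = \chi(K_n) = n$, the sandwich theorem forces $\thbar(K_n) = n$ (this value is also used in \cref{thm:beigi_beta}). Thus $G \homb K_n$ holds precisely when $\thbar(G) \le n$.

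Finally, since $\beta_\chi(G)$ is by definition the smallest integer $n$ for which such vectors exist, it is the smallest integer $n$ satisfying $\thbar(G) \le n$, namely $\lceil \thbar(G) \rceil$. I do not anticipate any genuine obstacle: the only point requiring care is the correct reduction of the non-adjacency condition in $K_n$ in the second step, together with the observation that here we minimize over $n$ (rather than maximize, as in \cref{thm:beigi_beta}), which produces a ceiling rather than a floor.
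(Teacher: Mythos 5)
Your proposal is correct and follows essentially the same route as the paper's proof: identify the listed conditions with $G \homb K_n$ by specializing \cref{def:hombpv} to $H = K_n$ (where $s \not\simH t$ reduces to $s = t$), invoke \cref{thm:th_homb} together with $\thbar(K_n) = n$, and conclude that the smallest admissible $n$ is $\lceil \thbar(G) \rceil$. The only difference is that you spell out the reduction of the non-adjacency condition and the sandwich-theorem justification of $\thbar(K_n)=n$, which the paper leaves implicit.
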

\begin{proof}
    Considering $G \homb K_n$, the conditions of \cref{def:hombpv} are equivalent to the
    conditions stated above.
    Since $\thbar(K_n) = n$, \cref{thm:th_homb} gives
    $G \homb K_n \iff \thbar(G) \le n$.
    Since $\beta_\chi(G)$ is the smallest $n$ such that $G \homb K_n$, we have
    $\beta_\chi(G) = \lceil \thbar(G) \rceil$.
\end{proof}

\begin{corollary}
    \label{thm:ent_cost_rate_bound}
    The entanglement assisted cost rate is bounded as follows:
    \begin{align*}
        \eta^*(G, \Hc) \ge \frac{\log \thbar(G)}{\log \thbar(H)}.
    \end{align*}
\end{corollary}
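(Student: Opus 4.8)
The plan is to mimic exactly the classical argument given earlier in the text for $\eta(G,\Hc)$, replacing monotonicity of $\thbar$ under ordinary homomorphisms with its entanglement-assisted analog. The only new ingredient needed is the implication $G \home H \implies \thbar(G) \le \thbar(H)$, and this comes for free by chaining two results already established: \cref{thm:hom_relax} gives $G \home H \implies G \homb H$, and \cref{thm:th_homb} gives $G \homb H \iff \thbar(G) \le \thbar(H)$. Together these say that $\thbar$ is monotone under entanglement-assisted homomorphisms.

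First I would recall that, by the definition~\eqref{eq:entcostrate}, the quantity being averaged over $m$ is the least $n$ admitting an entanglement-assisted $(m,n)$-scheme, i.e.\ the least $n$ with $G^{\boxtimes m} \home H^{\djp n}$. Fixing such an $m$ and a feasible $n$, I would apply the monotonicity implication above to the graphs $G^{\boxtimes m}$ and $H^{\djp n}$, obtaining $\thbar(G^{\boxtimes m}) \le \thbar(H^{\djp n})$.

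Next I would invoke the multiplicativity facts already cited in the text, namely $\thbar(G^{\boxtimes m}) = \thbar(G)^m$ and $\thbar(H^{\djp n}) = \thbar(H)^n$, which convert the previous inequality into $\thbar(G)^m \le \thbar(H)^n$. Taking logarithms and dividing by $m\log\thbar(H)$ (positive whenever $\thbar(H)>1$, the only regime in which the claimed bound is nonvacuous) yields $n/m \ge \log\thbar(G)/\log\thbar(H)$. Since this holds for the minimizing $n$ at every $m$, the inequality persists through the limit $m \to \infty$ in~\eqref{eq:entcostrate}, which is precisely the assertion.

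I expect no genuine obstacle here: all the substantive work has been front-loaded into \cref{thm:hom_relax,thm:th_homb} and the known multiplicativity of $\thbar$, so the corollary is essentially an assembly of these pieces. The one point deserving a word of care is the sign of $\log\thbar(H)$; when $\thbar(H)=1$ the right-hand side should be read as the appropriate limit (and the bound is then trivial), with the interesting case being $\thbar(H)>1$.
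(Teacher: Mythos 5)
Your proposal is correct and follows essentially the same route as the paper's proof: chaining \cref{thm:hom_relax} with \cref{thm:th_homb} to get monotonicity of $\thbar$ under $\home$, applying the multiplicativity facts $\thbar(G^{\boxtimes m}) = \thbar(G)^m$ and $\thbar(H^{\djp n}) = \thbar(H)^n$, and passing to the limit in~\eqref{eq:entcostrate}. Your added remark about the degenerate case $\thbar(H)=1$ is a reasonable point of care that the paper leaves implicit, but it does not change the argument.
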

\begin{proof}
    Since $\thbar(G^{\boxtimes m}) = \thbar(G)^m$~\cite{knuth94} and
    $\thbar(H^{\djp n}) = \thbar(H)^n$~\cite{lovasz79}, it follows that
    \begin{align*}
        G^{\boxtimes m} \home H^{\djp n}
            &\implies G^{\boxtimes m} \homb H^{\djp n}
            && \mbox{(by \cref{thm:hom_relax})}
        \\ &\implies \thbar(G^{\boxtimes m}) \le \thbar(H^{\djp n})
            && \mbox{(by \cref{thm:th_homb})}
        \\ &\implies \thbar(G)^m \le \thbar(H)^n
        \\ &\implies \frac{\log \thbar(G)}{\log \thbar(H)} \le \frac{n}{m}.
    \end{align*}
    Therefore,
    \begin{align*}
        \eta^*(G, \Hc) = \lim_{m \to \infty} \min_n \left\{
            \frac{n}{m} : G^{\boxtimes m} \home H^{\djp n}
            \right\} \ge \frac{\log \thbar(G)}{\log \thbar(H)}.
    \end{align*}
\end{proof}

Something similar to \cref{thm:th_homb} holds for the relation $G \homp H$.  In this case
there is an inequality not just for the Lov{\'a}sz $\vartheta$ number but also for
Schrijver's $\thm$ and Szegedy's $\thp$.
Unfortunately, this will no longer be an if-and-only-if statement (but see
\cref{thm:gives_homv} for a weakened converse, and \cref{sec:schrijver_iff} for a somewhat more
complicated if-and-only-if involving $\thmbar$).

\begin{theorem}
    \label{thm:th_homp}
    Suppose $G \homp H$.
    Then $\thbar(G) \le \thbar(H)$, $\thmbar(G) \le \thmbar(H)$, and $\thpbar(G) \le \thpbar(H)$.
\end{theorem}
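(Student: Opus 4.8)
The plan is to mimic the $(\Longrightarrow)$ direction of the proof of \cref{thm:th_homb}, producing from a single construction a feasible solution for each of the three minimization SDPs of \cref{def:theta}. First I would invoke \cref{thm:gram} to replace the hypothesis $G \homp H$ by a positive semidefinite matrix $C$ with $\sum_{st} C_{xyst} = 1$, with $C_{xyst} = 0$ when $x \simG y$ and $s \not\simH t$, and---crucially---with every entry satisfying $C_{xyst} \ge 0$ by~\eqref{eq:Cp}. This last condition forces $C$ to be a real (entrywise nonnegative) matrix, so unlike the $\homb$ case there is no need to pass to a real part. For the quantity under consideration, let $Z$ achieve the optimum $\lambda$ for $H$ in the relevant program (\eqref{eq:th_min_sdp} for $\thbar$, \eqref{eq:thm_min_sdp} for $\thmbar$, \eqref{eq:thp_min_sdp} for $\thpbar$), and set $Y = (I \otimes \bra{\mathbf{1}})[(J \otimes Z) \circ C](I \otimes \ket{\mathbf{1}})$, so that $Y_{xy} = \sum_{st} Z_{st} C_{xyst}$. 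As in \cref{thm:th_homb}, $Y \succeq 0$ because $J \otimes Z \succeq 0$, the Schur--Hadamard product preserves positive semidefiniteness, and conjugation preserves it too.

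Next I would verify the off-diagonal edge constraints, and here the positivity of $C$ does the work that distinguishes this theorem from \cref{thm:th_homb}. For $x \simG y$ the only surviving terms have $s \simH t$ by~\eqref{eq:Cb_second}, so that $\sum_{st} C_{xyst} = \sum_{s \simH t} C_{xyst} = 1$. For $\thbar$ and $\thpbar$, where $Z_{st} = -1$ on edges, this gives $Y_{xy} = -1$ exactly. For $\thmbar$, where only $Z_{st} \le -1$ on edges is assumed, the nonnegativity $C_{xyst} \ge 0$ upgrades this to $Y_{xy} \le -1$, which is precisely the Schrijver edge constraint. For $\thpbar$ one additionally needs $Y_{xy} \ge -1$ for all $x,y$; this again follows from $Z_{st} \ge -1$ (the defining Szegedy condition) together with $C_{xyst} \ge 0$ and $\sum_{st} C_{xyst} = 1$.

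The diagonal is the step requiring the most care, since the $\homp$ hypothesis lacks the condition $C_{xxst} = 0$ for $s \ne t$ that made $Y_{xx} = \lambda - 1$ automatic in \cref{thm:th_homb}. I would instead establish only the inequality $Y_{xx} \le \lambda - 1$ and then correct. Writing $Y_{xx} = (\lambda-1)\sum_s C_{xxss} + \sum_{s \ne t} Z_{st} C_{xxst}$, I would use the standard fact that a positive semidefinite $Z$ with all diagonal entries equal to $\lambda - 1$ satisfies $\abs{Z_{st}} \le \lambda - 1$, hence $Z_{st} \le \lambda - 1$; combined with $C_{xxst} \ge 0$ and $\sum_{st} C_{xxst} = 1$ this yields $Y_{xx} \le \lambda - 1$. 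Since $\lambda \ge 1$, the diagonal matrix $D$ with entries $(\lambda - 1) - Y_{xx} \ge 0$ is positive semidefinite, so $Y' := Y + D$ is positive semidefinite, has $Y'_{xx} = \lambda - 1$, and leaves every off-diagonal entry (hence all the edge and positivity constraints verified above) unchanged. Thus $Y'$ is feasible for the respective program for $G$ with objective value $\lambda$, giving $\thbar(G) \le \thbar(H)$, $\thmbar(G) \le \thmbar(H)$, and $\thpbar(G) \le \thpbar(H)$. The main obstacle is exactly this diagonal bound: the crux is recognizing that positivity of $C$ together with the entrywise bound $\abs{Z_{st}} \le \lambda-1$ compensates for the missing $\homb$ condition, after which a diagonal shift repairs feasibility.
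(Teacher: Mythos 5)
Your proposal is correct and takes essentially the same route as the paper's own proof: both pass to the Gram matrix $C$ via \cref{thm:gram}, contract against an optimal dual solution $Z$ for $H$ to form $Y_{xy} = \sum_{st} Z_{st} C_{xyst}$, exploit entrywise nonnegativity of $C$ together with the positive semidefinite bound $\abs{Z_{st}} \le \max_s Z_{ss} = \lambda-1$ to get $Y_{xx} \le \lambda-1$ (then repair the diagonal by adding a nonnegative diagonal matrix), and check the edge constraints of each of the three programs exactly as you describe. The only cosmetic difference is that the paper writes out the $\thmbar$ case in detail and notes that the $\thbar$ and $\thpbar$ cases follow by the same computation, whereas you treat all three in parallel and also observe explicitly that $C$ is real so no passage to a real part is needed.
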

\begin{proof}
    As per \cref{thm:gram}, let $C$ be a positive semidefinite matrix satisfying
    properties~\eqref{eq:Cb_first},~\eqref{eq:Cb_second}, and~\eqref{eq:Cp}.
    We give the proof for $\thmbar(G) \le \thmbar(H)$; the others are proved in a similar way.
    The proof is very similar to that of \cref{thm:th_homb}, with slight modification due to
    the fact that the last condition on $C$ is different.
    Let $Z$ achieve the optimal value for the minimization
    program~\eqref{eq:thm_min_sdp} for $\thmbar(H)$.
    We will provide a feasible solution for~\eqref{eq:thm_min_sdp} for $\thmbar(G)$ to show that
    $\thmbar(G) \le \thmbar(H)$.
    Specifically, let $Y_{xy} = \sum_{st} Z_{st} C_{xyst}$.
    Since $C$ and $Z$ are positive semidefinite, so is $Y$.

    A feasible solution for~\eqref{eq:thm_min_sdp}, with value $\thmbar(H)$,
    requires $Y_{xx} = \thmbar(H)-1$.  However, it suffices to show
    $Y_{xx} \le \thmbar(H)-1$ since equality can be achieved by adding a non-negative diagonal
    matrix to $Y$.  We have
    \begin{align*}
        Y_{xx} &= \sum_{st} Z_{st} C_{xxst}
        \\ &\le \max_{st} \abs{Z_{st}} \sum_{st} C_{xxst}
            & \mbox{(since $C_{xxst} \ge 0$)}
        \\ &\le \max_s \abs{Z_{ss}} \sum_{st} C_{xxst}
            & \mbox{(since $Z \succeq 0$)}
        \\ &= \thmbar(H)-1.
    \end{align*}
    Similarly, for $x \simG y$ we have
    \begin{align}
        \notag
        Y_{xy} &= \sum_{st} Z_{st} C_{xyst}
        = \sum_{s \simH t} Z_{st} C_{xyst}
        \\ &\le (-1) \sum_{s \simH t} C_{xyst}
        = (-1) \sum_{st} C_{xyst} = -1.
        \label{eq:Yxy_thm}
    \end{align}
    Therefore $Y$ is feasible for~\eqref{eq:thm_min_sdp} with value $\lambda=\thmbar(H)$.
    Since $\thmbar(G)$ is the minimum possible value of~\eqref{eq:thm_min_sdp}, we have
    $\thmbar(G) \le \thmbar(H)$.

    To show $\thbar(G) \le \thbar(H)$ or $\thpbar(G) \le \thpbar(H)$, replace inequality with equality
    in~\eqref{eq:Yxy_thm}.
    For $\thpbar(G) \le \thpbar(H)$ we have $Z_{st} \ge -1$ for all $s,t$ and need to show
    $Y_{xy} \ge -1$ for all $x,y$.  This is readily verified:
    \begin{align*}
        Y_{xy} &= \sum_{st} Z_{st} C_{xyst}
        \ge (-1) \sum_{st} C_{xyst}
        = -1.
    \end{align*}
\end{proof}

It is well known that
$\alpha(G) \le \thm(G) \le \vartheta(G) \le \thp(G) \le \chi(\Gc)$.
We show that similar inequalities hold for the entanglement assisted independence and chromatic
numbers.
Since $\thm$ and $\thp$ are not multiplicative under the required graph products
(\cref{sec:mult}), these do not lead to bounds on asymptotic quantities such as entanglement
assisted Shannon capacity or entanglement assisted cost rate.

\begin{corollary}
    $\alpha^*(H) \le \thm(H)$.
\end{corollary}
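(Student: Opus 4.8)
The plan is to unwind the definition of $\alpha^*$ and then push the resulting entanglement-assisted homomorphism through the relaxation and monotonicity results already in hand. Reading $H$ as a confusability graph, $\alpha^*(H)$ is by definition the largest $n$ for which $K_n \home \Hc$, where $\Hc$ is the corresponding distinguishability graph. So it suffices to show that $K_n \home \Hc$ forces $n \le \thm(H)$, and then to take $n$ maximal.

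First I would invoke \cref{thm:hom_relax} to pass from the operational relation to its positivity relaxation: $K_n \home \Hc \implies K_n \homp \Hc$. Next, \cref{thm:th_homp} supplies the Schrijver inequality $\thmbar(K_n) \le \thmbar(\Hc)$. It then remains only to evaluate these two Schrijver numbers, which is pure complement bookkeeping: by the convention $\thmbar(G) = \thm(\Gc)$ we have $\thmbar(\Hc) = \thm(\overline{\Hc}) = \thm(H)$, while $\thmbar(K_n) = \thm(\overline{K_n})$ is the Schrijver number of the edgeless graph on $n$ vertices. Since $\alpha(\overline{K_n}) = n$ and the sandwich chain $\alpha \le \thm$ quoted just before the corollary gives $\thm(\overline{K_n}) \ge n$, we conclude $\thmbar(K_n) \ge n$ (indeed the upper half of the chain pins it at exactly $n$). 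Chaining these facts yields $n \le \thmbar(K_n) \le \thmbar(\Hc) = \thm(H)$, and taking $n = \alpha^*(H)$ completes the argument.

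There is little genuine difficulty here; the one thing to get right is the complement bookkeeping—keeping straight that $\alpha^*$ is defined through the confusability/distinguishability duality and that $\thmbar$ carries a built-in complement—so that $K_n$ and $\Hc$ line up correctly with the evaluations $\thm(\overline{K_n}) = n$ and $\thm(H)$. I would also remark that running the identical argument with \cref{thm:th_homb} in place of \cref{thm:th_homp} only recovers Beigi's weaker bound $\alpha^*(H) \le \thbar(\Hc) = \vartheta(H)$; the sharpening to $\thm$ is precisely what using the positivity relation $\homp$ rather than $\homb$ buys us, since $\thmbar \le \thbar$.
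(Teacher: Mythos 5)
Your proposal is correct and follows essentially the same route as the paper's own proof: unwind the definition $\alpha^*(H) = \max\{n : K_n \home \Hc\}$, relax via \cref{thm:hom_relax} to $K_n \homp \Hc$, apply the Schrijver monotonicity of \cref{thm:th_homp} to get $\thmbar(K_n) \le \thmbar(\Hc) = \thm(H)$, and evaluate $\thmbar(K_n) = n$. The only difference is that you spell out the evaluation $\thm(\overline{K_n}) = n$ via the sandwich inequalities, which the paper takes as known.
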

\begin{proof}
    By definition $\alpha^*(H)$ is the largest $n$ such that $K_n \home \Hc$.
    But $K_n \home \Hc \implies K_n \homp \Hc \implies \thmbar(K_n) \le \thm(H)$.
    Since $\thmbar(K_n) = n$, the conclusion follows.
\end{proof}
 The following corollary was already shown in~\cite{arxiv:1308.4283} via a different method.
\begin{corollary}
    Define $\chi^*(G)$ to be the smallest $n$ such that $G \home K_n$.
    Then $\chi^*(G) \ge \thpbar(G)$.
    \end{corollary}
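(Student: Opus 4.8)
The plan is to mirror the proof of the preceding corollary on $\alpha^*$, but now to approach $\thp$ from the chromatic side rather than from the independence side. First I would unpack the definition: $\chi^*(G)$ is the smallest $n$ for which $G \home K_n$, so it suffices to show that every $n$ admitting an entanglement assisted homomorphism $G \home K_n$ satisfies $\thpbar(G) \le n$. Taking the smallest such $n$ then yields $\chi^*(G) \ge \thpbar(G)$.

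Fix such an $n$. Since $G \home K_n$, \cref{thm:hom_relax} gives the relaxation $G \homp K_n$. Applying the Szegedy part of \cref{thm:th_homp} to this relation then produces $\thpbar(G) \le \thpbar(K_n)$. The one remaining input is the value $\thpbar(K_n)$: by definition $\thpbar(K_n) = \thp(\overline{K_n})$, where $\overline{K_n}$ is the empty graph on $n$ vertices. Both its independence number and the chromatic number of its complement equal $n$, so the sandwich $\alpha \le \thm \le \vartheta \le \thp \le \chi(\overline{\cdot})$ pins $\thp(\overline{K_n}) = n$. Combining these facts gives $\thpbar(G) \le n$ for every feasible $n$, and hence $\chi^*(G) \ge \thpbar(G)$.

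I do not expect any genuine obstacle here, as the argument is entirely parallel to the $\alpha^*$ corollary. The only two points requiring care are (i) invoking the positivity relaxation $\homp$ (via \cref{thm:hom_relax}) rather than $\homb$, since it is the $\homp$ relation for which the Szegedy inequality of \cref{thm:th_homp} is available, and (ii) the elementary computation $\thpbar(K_n) = n$.
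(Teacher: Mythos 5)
Your proof is correct and follows essentially the same route as the paper: $G \home K_n \implies G \homp K_n$ (via \cref{thm:hom_relax}) $\implies \thpbar(G) \le \thpbar(K_n) = n$ (via \cref{thm:th_homp}), then minimizing over $n$. The only difference is that you justify the evaluation $\thpbar(K_n)=n$ via the sandwich inequalities, a step the paper treats as immediate.
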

\begin{proof}
    By definition, $\chi^*(G)$ is the smallest $n$ such that $G \home K_n$.
    But $G \home K_n \implies G \homp K_n \implies \thpbar(G) \le \thpbar(K_n) = n$.
\end{proof}

It would be nice to have a converse to \cref{thm:th_homp}, like there was with \cref{thm:th_homb}.
Is it the case that $\thbar(G) \le \thbar(H)$, $\thmbar(G) \le \thmbar(H)$,
and $\thpbar(G) \le \thpbar(H)$ together imply $G \homp H$?
We do not know.
However, it is the case that $\thpbar(G) \le \thmbar(H) \implies G \homp H$.
In fact, something stronger can be said.
We have the following theorem, the consequences of which will be further explored in
\cref{sec:qmhom}.

\begin{theorem}
    \label{thm:gives_homv}
    $\thpbar(G) \le \thmbar(H) \implies G \homv H$.
\end{theorem}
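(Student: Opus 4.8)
The plan is to reuse, essentially verbatim, the explicit matrix construction from the proof of \cref{thm:th_homb}, but to feed it sign-refined SDP solutions coming from Szegedy's number on the $G$-side and Schrijver's number on the $H$-side, so that the Gram matrix $C$ produced is not merely positive semidefinite but also \emph{entrywise} nonnegative; by \cref{thm:gram} such a $C$ witnesses $G \homv H$. Set $\lambda = \thmbar(H)$, so $\thpbar(G) \le \lambda$. On the $G$-side I would take the optimal matrix $Z$ for the Szegedy program \eqref{eq:thp_min_sdp} of $\thpbar(G)$ and add $(\lambda-\thpbar(G))I$ to raise its diagonal to $\lambda-1$; this preserves $Z \succeq 0$, keeps $Z_{xy}=-1$ on $G$-edges, and crucially keeps the Szegedy constraint $Z_{xy}\ge -1$ for all $x,y$. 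Then $R := Z + J$ is positive semidefinite and \emph{entrywise nonnegative}, with $R_{xx}=\lambda$, $R_{xy}=0$ on $G$-edges, and (since $Z\succeq 0$ forces $Z_{xy}^2 \le Z_{xx}Z_{yy}=(\lambda-1)^2$) $R_{xy}/\lambda \in [0,1]$. It is exactly the extra ``$\ge -1$'' of Szegedy that makes $R$ nonnegative, and this is what will eventually give the positivity of $C$.

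On the $H$-side I need a matrix $B$ that is simultaneously entrywise nonnegative and spectrally dominated, $0 \preceq B \preceq \lambda D$ with $D=\mathrm{diag}(B)$, $\langle B,J\rangle = \lambda$, $\langle D,J\rangle = 1$, and $B$ supported on the diagonal and on $H$-edges. I would obtain this from a maximum-eigenvalue reformulation of Schrijver's number, namely the claim $\thmbar(H) = \max\{\opnorm{I+T} : I+T \succeq 0,\ T_{st}=0 \text{ for } s \not\simH t,\ T_{st}\ge 0 \text{ for } s \simH t\}$, the Schrijver analogue of the operator-norm form \eqref{eq:th_max_sdp}. Both inequalities follow by passing between this form and the Schrijver primal $\max\{\langle J,X\rangle : X\succeq 0,\ \Tr X=1,\ X\ge 0,\ X_{st}=0 \text{ for } s\not\simH t\}$ via the Schur product $X=\proj{\phi}\circ(I+T)$, where $\phi$ is the top eigenvector of the nonnegative matrix $I+T$, which by Perron--Frobenius may be taken entrywise nonnegative. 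Taking the maximizer $I+T$ with $\opnorm{I+T}=\lambda$ and its unit Perron vector $\phi$, I set $D=\proj{\phi}\circ I$ and $B=\proj{\phi}\circ(I+T)$; then $B\succeq 0$, $B\ge 0$ entrywise, $B_{st}=0$ for $s\not\simH t$ off the diagonal, $B_{ss}=D_{ss}$, $\langle B,J\rangle=\lambda$, $\langle D,J\rangle=1$, and $\lambda D - B = \proj{\phi}\circ(\lambda I - (I+T)) \succeq 0$.

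With these pieces in hand I define $C = \lambda^{-1}\big[\,J \ot B + (\lambda-1)^{-1}\, Z \ot (\lambda D - B)\,\big]$, exactly the object of \cref{thm:th_homb}. Positive semidefiniteness and conditions \eqref{eq:Cb_first}--\eqref{eq:Cb_last} are verified verbatim as there, since those computations use only $\langle B,J\rangle=\lambda$, $\langle D,J\rangle=1$, $B,\lambda D - B \succeq 0$, $Z\succeq 0$, $Z_{xx}=\lambda-1$, $Z_{xy}=-1$ on $G$-edges, and the support and diagonal structure of $B$ and $D$ — all of which hold here. The one genuinely new requirement is the positivity \eqref{eq:Cp}: rewriting $C_{xyst} = \tfrac{R_{xy}}{\lambda}D_{st} + \big(1-\tfrac{R_{xy}}{\lambda}\big)\tfrac{1}{\lambda-1}(B-D)_{st}$ exhibits each entry as a convex combination (the coefficients lie in $[0,1]$ precisely because $R_{xy}/\lambda\in[0,1]$) of the entrywise-nonnegative matrices $D$ and $\tfrac{1}{\lambda-1}(B-D)$ (the latter has zero diagonal and off-diagonal entries $B_{st}\ge 0$). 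Hence $C_{xyst}\ge 0$, and \cref{thm:gram} gives $G\homv H$.

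The main obstacle is the $H$-side lemma: one must produce a \emph{single} $B$ that is at once entrywise nonnegative and satisfies $B \preceq \lambda D$. Neither an arbitrary Schrijver-optimal primal matrix nor the plain Lov\'asz operator-norm witness suffices — the former need not obey $B\preceq\lambda D$ (a small $2\times 2$ example shows this already fails for non-optimal feasible points), and the latter need not be nonnegative. The resolution is that the constraint ``$T_{st}\ge 0$ on edges'' in the Schrijver eigenvalue form makes $I+T$ an entrywise-nonnegative matrix, so Perron--Frobenius furnishes a nonnegative top eigenvector $\phi$, and Schur-multiplying by $\proj{\phi}$ then delivers nonnegativity of $B$ and the spectral domination $\lambda D - B\succeq 0$ simultaneously. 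The symmetry of the argument — Szegedy's ``$\ge -1$'' on the $G$-side and Schrijver's nonnegativity on the $H$-side — is exactly what forces the hypothesis $\thpbar(G)\le\thmbar(H)$. (The boundary case $\lambda=1$, where $H$ and hence $G$ are edgeless, is trivial and handled separately.)
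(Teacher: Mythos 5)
Your proposal is correct and follows essentially the same route as the paper's own proof: the same Szegedy-side matrix $Z$ with diagonal lifted to $\lambda-1$, the same Schrijver operator-norm reformulation (this is exactly the paper's \cref{thm:thm_max_IT}, proved in \cref{sec:schrijver_iff}) combined with a Perron--Frobenius nonnegative top eigenvector to make $B$ entrywise nonnegative, and the identical matrix $C = \lambda^{-1}\left[ J \ot B + (\lambda-1)^{-1} Z \ot (\lambda D - B) \right]$. Your only departure is presentational: you verify $C_{xyst} \ge 0$ by writing each entry as a convex combination of entries of $D$ and $(\lambda-1)^{-1}(B-D)$ with weights $R_{xy}/\lambda \in [0,1]$, whereas the paper performs the equivalent two-case computation ($s=t$ and $s \ne t$) using the same two facts, $Z_{xy} \ge -1$ and $Z_{xy} \le \lambda - 1$.
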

\begin{proof}
    The proof mirrors that of the ($\Longleftarrow$) portion of \cref{thm:th_homb}, so we only
    describe the differences.
    Let $\lambda = \thmbar(H)$.
    \Cref{thm:thm_max_IT} in \cref{sec:schrijver_iff} gives that
    \begin{align*}
        \thmbar(H) &= \max\{ \opnorm{I+T} : I + T \succeq 0,
            \notag \\ &\hphantom{= \max\{\;}
            T_{st} = 0 \textnormal{ for } s \not\sim t,
            \notag \\ &\hphantom{= \max\{\;}
            T_{st} \ge 0 \textnormal{ for all } s,t \}.
    \end{align*}
    So there is a matrix $T$ such that $\opnorm{I+T}=\lambda$,
    $I+T \succeq 0$, $T_{st}=0$ for $s \not\sim t$, and $T_{st} \ge 0$ for all $s,t$.
    Since $\lambda \ge \thpbar(G)$, there is a matrix $Z$ such that
    $Z \succeq 0$, $Z_{xx} = \lambda-1$ for all $x$,
    $Z_{xy} = -1$ for all $x \simG y$, and $Z_{xy} \ge -1$ for all $x,y$.
    Note that $T$ and $Z$ satisfy all conditions required by \cref{thm:th_homb} plus the
    additional conditions $T_{st} \ge 0$ for all $s,t$ and $Z_{xy} \ge -1$ for all $x,y$.

    Define $B$ and $D$ as in \cref{thm:th_homb}.
    The eigenvector $\ket{\psi}$ corresponding to the maximum eigenvalue of $I+T$ can be
    chosen to be entrywise non-negative (this follows from the Perron--Frobenius theorem
    and the fact that $I+T$ is entrywise non-negative).
    It follows that $B$ can be chosen entrywise non-negative.
    As before, define
    \begin{align*}
        C = \lambda^{-1} \left[ J \ot B + (\lambda-1)^{-1} Z \ot (\lambda D - B) \right].
    \end{align*}
    Since $T$ and $Z$ satisfy all conditions needed by \cref{thm:th_homb}, $C$
    satisfies~\eqref{eq:Cb_first}-\eqref{eq:Cb_last}.
    To get $G \homv H$ it remains only to show satisfaction of~\eqref{eq:Cp}:
    $C_{xyst} \ge 0$ for all $x,y,s,t$.
    When $s=t$,
    \begin{align*}
        C_{xyss} &= \lambda^{-1} [ D_{ss} + (\lambda-1)^{-1} Z_{xy} (\lambda D_{ss} - D_{ss}) ]
        \\ &= \lambda^{-1} (1 + Z_{xy}) D_{ss} \ge 0.
    \end{align*}
    The last inequality follows from $Z_{xy} \ge -1$ and $D_{ss} \ge 0$.
    When $s \ne t$,
    \begin{align*}
        C_{xyst} &= \lambda^{-1} [ B_{st} + (\lambda-1)^{-1} Z_{xy} (0 - B_{st}) ]
        \\ &= \lambda^{-1} (\lambda-1)^{-1} [(\lambda-1) - Z_{xy}] B_{st} \ge 0.
    \end{align*}
    The last inequality follows from $Z_{xy} \le \max\{ Z_{xx}, Z_{yy} \} = \lambda-1$
    (since $Z \succeq 0$) and $B_{st} \ge 0$.
\end{proof}

Finally, we show that the two conditions $G \homp H$ and $G \homb H$ are not equivalent: the
second one is weaker.

\begin{theorem}
    \label{thm:homp_vs_home}
    If $G \homp H$ then $G \homb H$, but there are graphs for which the converse does not hold.
\end{theorem}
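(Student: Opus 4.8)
The plan is to route both parts through the two monotonicity theorems just established, rather than to manipulate the defining vectors of \cref{def:hombpv} directly. A direct vector-level argument looks unpromising: the within-$x$ orthogonality $\braket{w_s^x}{w_t^x}=0$ required by $G \homb H$ has no evident analogue that can be manufactured from the non-negativity condition of $G \homp H$, so there is no obvious way to convert a family of $\homp$-vectors into $\homb$-vectors. By contrast, the $\thbar$-characterisation makes the forward implication immediate. For the forward direction, suppose $G \homp H$. Then \cref{thm:th_homp} already yields $\thbar(G) \le \thbar(H)$ (we simply discard the accompanying Schrijver and Szegedy inequalities). But \cref{thm:th_homb} asserts that $\thbar(G) \le \thbar(H)$ is \emph{equivalent} to $G \homb H$, so $G \homb H$ follows at once.

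For the separation I would exhibit graphs with $G \homb H$ but not $G \homp H$. Again the reduction is purely via the theta numbers: by \cref{thm:th_homb} the hypothesis $G \homb H$ is equivalent to $\thbar(G) \le \thbar(H)$, while the contrapositive of \cref{thm:th_homp} shows that failure of any one of its three conclusions rules out $G \homp H$. Thus it suffices to arrange
\[
    \thbar(G) \le \thbar(H) \quad\text{and}\quad \thmbar(G) > \thmbar(H),
\]
using the Schrijver inequality as the one that is violated. The cleanest witness takes $G = K_n$, for which $\thmbar(K_n) = \thbar(K_n) = n$ (the complement is edgeless, so all three variants coincide with $n$). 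The two requirements then collapse to finding a single graph $H$ and an integer $n$ with $\thmbar(H) < n \le \thbar(H)$: one gets $K_n \homb H$ because $\thbar(K_n) = n \le \thbar(H)$, while $K_n \homp H$ is impossible since it would force $n = \thmbar(K_n) \le \thmbar(H) < n$.

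The only substantive obstacle is producing such an $H$, i.e.\ a graph whose Schrijver and Lov{\'a}sz numbers leave an integer strictly between them, so that the gap $\thbar(H) - \thmbar(H)$ straddles an integer. This is exactly the phenomenon used elsewhere in the paper to separate $\alpha^*$ from $\lfloor \thbar \rfloor$: any graph $H$ with $\thmbar(H) < \lfloor \thbar(H)\rfloor$ (equivalently $\thm(\Hc)$ below the integer part of $\vartheta(\Hc)$) serves as a witness, and the choice $n = \lfloor \thmbar(H)\rfloor + 1$ then satisfies $\thmbar(H) < n \le \thbar(H)$ by construction. I expect the verification that a concrete small graph exhibits this integer-straddling gap to be the most computational ingredient; everything surrounding it is a two-line deduction from \cref{thm:th_homb,thm:th_homp}.
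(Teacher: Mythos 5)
Your proposal is correct and follows essentially the same route as the paper's own proof: the forward implication is obtained by chaining \cref{thm:th_homp} with the equivalence in \cref{thm:th_homb}, and the separation is witnessed by $G = K_n$ with $n$ an integer strictly between $\thmbar(H)$ and $\thbar(H)$. The paper simply instantiates the witness you describe with Schrijver's example graph ($\thmbar(H) = 4$, $\thbar(H) = 16/3$, $n = 5$), cited rather than verified computationally, so no further work is needed there.
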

\begin{proof}
    The forward implication is an immediate consequence of \cref{thm:th_homb,thm:th_homp}:
    \begin{align*}
        G \homp H \implies \thbar(G) \le \thbar(H) \implies G \homb H.
    \end{align*}
    To see that the converse does not hold, take $H$ to be any graph such that
    $\lfloor \thmbar(H) \rfloor < \lfloor \thbar(H) \rfloor$.
    For example, a graph with
    $\thmbar(H) = 4$ but $\thbar(H) = 16/3 > 5$ is given at the end of~\cite{1056072}.
    Then $5 = \thbar(K_5) \le \thbar(H) \implies K_5 \homb H$
    but $5 = \thmbar(K_5) > \thmbar(H) \implies K_5 \not\homp H$.
\end{proof}

\section{Quantum homomorphisms}
\label{sec:qmhom}

Suppose Alice and Bob share an entangled state $\ket{\psi} \in \hilb{A} \ot \hilb{B}$ on
Hilbert spaces of arbitrary dimension.
A referee asks Alice a question $x \in X$ and Bob a question $y \in Y$.
Based on $x$, Alice performs a (without loss of generality, projective)
measurement $\{E_s^x\}_s$ and reports outcome $s \in S$ to the referee.
Similarly, Bob performs measurement $\{F_t^y\}_t$ and reports $t \in T$.
The sets $X,Y,S,T$ are finite.
The probability distribution of Alice and Bob's answer, conditioned upon the referee's
question, is
\begin{align}
    \label{eq:behaviorQ}
    P(s,t|x,y) = \braopket{\psi}{E_s^x \ot F_t^y}{\psi}
\end{align}
where $\sum_s E_s^x = I$, $\sum_t F_t^y = I$, and $\braket{\psi}{\psi}=1$.

The assumption that Alice and Bob's measurements take such a tensor product form is
associated with non-relativistic quantum mechanics.  One may alternatively consider a
model in which there is only a single Hilbert space, $\ket{\psi} \in \hilb{A}$ and
$E_s^x, F_t^y \in \linop{\hilb{A}}$, but in which each $E_s^x$ commutes with each $F_t^y$.
The conditional probability distribution in this model is
\begin{align}
    \label{eq:behaviorQc}
    P(s,t|x,y) = \braopket{\psi}{E_s^x F_t^y}{\psi}.
\end{align}
Tsirelson's problem is the question of whether these two models differ.  That is to say,
is there a conditional probability distribution realizable as~\eqref{eq:behaviorQc} but
not as~\eqref{eq:behaviorQ}?
Tsirelson showed that if the Hilbert spaces are finite dimensional then the two models
are the same.  A simplified proof appears in~\cite{arxiv:0812.4305}.
In addition to its importance to quantum mechanics, Tsirelson's problem is of mathematical
interest since it is closely related to Connes' embedding problem~\cite{junge2011}.
Note that any correlation of the form~\eqref{eq:behaviorQ} can be written in the
form~\eqref{eq:behaviorQc} since $\braopket{\psi}{E_s^x \ot F_t^y}{\psi} =
\braopket{\psi}{(E_s^x \ot I)(I \ot F_t^y)}{\psi}$ and
$E_s^x \ot I$ commutes with $I \ot F_t^y$.

Graph $G$ is said to have a \emph{quantum homomorphism} to $H$ (written $G \homq H$)
if there is a probability distribution of the form~\eqref{eq:behaviorQ}, with
$X=Y=V(G)$, $S=T=V(H)$, and finite dimensional $\ket{\psi}$,
satisfying~\cite{arxiv:1212.1724}
\begin{align}
    P(s \ne t|x=y) &= 0 \notag \\
    P(s \not\sim_H t|x \sim_G y) &= 0.
    \label{eq:homgameP}
\end{align}
This is called a ``quantum homomorphism'' because if Alice and Bob are not allowed to
share an entangled state [equivalently, if $\dim(\hilb{A}) = \dim(\hilb{B}) = 1$] then
such a conditional probability distribution is achievable if and only if $G \to H$.
Although $G \homq H \implies G \home H$, it is an open question whether the converse
holds.

In~\cite{arxiv:1212.1724} it is shown that $G \homq H$ if and only if
there exist projection operators (i.e., Hermitian matrices with eigenvalues in $\{0,1\}$)
$E_s^x$ for $x \in V(G)$ and $s \in V(H)$ such that
\begin{align*}
    \sum_s E_s^x &= I \\
    E_s^x E_t^y &= 0 \textrm{ for all } x \simG y, s \not\simH t \\
    E_s^x E_t^x &= 0 \textrm{ for all } s \ne t.
\end{align*}
Note that the first condition actually implies the third.
Define $\ket{w_s^x} = \vecize(E_s^x)$.
Since $\braket{w_s^x}{w_t^y} = \Tr(E_s^x E_t^y)$, this gives a set of vectors
satisfying the conditions of \cref{def:hombpv} for $G \homv H$.
So $G \homq H \implies G \homv H$.
Since $G \homv H \implies G \homp H$,
\cref{thm:th_homp} gives the following corollary which was previously shown
in~\cite{roberson2013variations}.
\begin{corollary}
    Suppose $G \homq H$.
    Then $\thbar(G) \le \thbar(H)$, $\thmbar(G) \le \thmbar(H)$, and $\thpbar(G) \le \thpbar(H)$.
\end{corollary}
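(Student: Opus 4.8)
The plan is to assemble the corollary from the chain of implications already established in the paper, the key input being the operator characterization of quantum homomorphisms. First I would invoke the result of~\cite{arxiv:1212.1724} that $G \homq H$ holds if and only if there exist projection operators $E_s^x$ (Hermitian, with eigenvalues in $\{0,1\}$) indexed by $x \in V(G)$ and $s \in V(H)$ satisfying $\sum_s E_s^x = I$, together with $E_s^x E_t^y = 0$ whenever $x \simG y$ and $s \not\simH t$, and $E_s^x E_t^x = 0$ for $s \ne t$.

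Next I would pass to vectors by setting $\ket{w_s^x} = \vecize(E_s^x)$ and $\ket{w} = \vecize(I)$, and check the four conditions of \cref{def:hombpv} for $G \homv H$. Using the identity $\braket{w_s^x}{w_t^y} = \Tr\big((E_s^x)^\dagger E_t^y\big) = \Tr(E_s^x E_t^y)$ (the second equality by Hermiticity of the projections), conditions (1)--(3) are immediate: summing over $s$ gives $\vecize(\sum_s E_s^x) = \vecize(I) = \ket{w} \ne 0$, while the two orthogonality conditions simply translate the vanishing operator products $E_s^x E_t^y = 0$ and $E_s^x E_t^x = 0$ into vanishing traces.

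The one condition that is not purely formal is the positivity requirement (4), namely $\braket{w_s^x}{w_t^y} \ge 0$. This reduces to $\Tr(E_s^x E_t^y) \ge 0$, which holds because the trace of a product of two positive semidefinite matrices is nonnegative (each projection is positive semidefinite). I expect this to be the only genuine content of the argument: it is precisely this positivity that distinguishes $G \homv H$ and $G \homp H$ from the weaker $G \homb H$, and hence what lets us conclude not merely $\thbar(G) \le \thbar(H)$ but also the Schrijver and Szegedy inequalities $\thmbar(G) \le \thmbar(H)$ and $\thpbar(G) \le \thpbar(H)$.

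Finally, having shown $G \homq H \implies G \homv H$, I would note that $G \homv H \implies G \homp H$ holds by definition, and then apply \cref{thm:th_homp} to the relation $G \homp H$ to obtain all three inequalities at once. No further estimates are required, so the corollary follows immediately from the preceding structural results.
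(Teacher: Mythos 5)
Your proposal is correct and follows essentially the same route as the paper: it invokes the operator characterization of $G \homq H$ from~\cite{arxiv:1212.1724}, vectorizes via $\ket{w_s^x} = \vecize(E_s^x)$ to conclude $G \homq H \implies G \homv H$, and then uses $G \homv H \implies G \homp H$ together with \cref{thm:th_homp}. Your explicit check that $\braket{w_s^x}{w_t^y} = \Tr(E_s^x E_t^y) \ge 0$ for positive semidefinite projectors is exactly the (implicit) step the paper relies on when asserting the conditions for $G \homv H$ hold.
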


The \textit{quantum chromatic number} $\chi_q(G)$ is the least $n$ such that
$G \homq K_n$, in analogy to the chromatic number $\chi(G)$ which is the least $n$ such
that $G \homm K_n$.
As a means of studying Tsirelson's problem, a number of variations of $\chi_q$ were
considered in~\cite{arxiv:1311.6850}.
For instance, they define $\chiqr$ by taking the correlation model to
be~\eqref{eq:behaviorQc} with infinite dimensional $\ket{\psi}$ rather
than~\eqref{eq:behaviorQ} with finite dimensional $\ket{\psi}$ as was used to define
$G \homq H$ (and thus $\chi_q$).
Also, they consider a semidefinite relaxation $\chivect$.
In the language of our paper, $\chivect(G)$ is the least $n$ such that
$G \homv K_n$.  In fact, our $G \homv H$ definition was inspired by their work\footnote{
    The first version of the present paper was posted before~\cite{arxiv:1311.6850}.  We
    later amended this paper to address the question posed in~\cite{arxiv:1311.6850}.
}.
One could also define $\omegavect(H)$ as the largest $n$ such that
$K_n \homv H$.
Both $\chivect(G)$ and $\omegavect(H)$ can be computed using the tools of
\cref{sec:mainthms}.

\begin{corollary}
    \label{thm:chivect_equals}
    $\chivect(G) = \lceil \thpbar(G) \rceil$ and
    $\omegavect(H) = \lfloor \thmbar(H) \rfloor$.
\end{corollary}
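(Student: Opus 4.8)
The plan is to prove the two equalities independently, in each case pinning the relevant integer by a matching pair of inequalities. The two tools needed are already available: on one side, \cref{thm:gives_homv} produces a $\homv$-homomorphism from a numerical inequality, namely $\thpbar(\cdot) \le \thmbar(\cdot)$ between source and target; on the other side, the implications $G \homv H \implies G \homp H$ together with \cref{thm:th_homp} turn any $\homv$-homomorphism into the inequalities $\thbar(G) \le \thbar(H)$, $\thmbar(G) \le \thmbar(H)$, and $\thpbar(G) \le \thpbar(H)$. I will also use the values $\thmbar(K_n) = \thbar(K_n) = \thpbar(K_n) = n$ established earlier in the excerpt.

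For $\omegavect(H) = \lfloor \thmbar(H) \rfloor$, recall that $\omegavect(H)$ is the largest $n$ with $K_n \homv H$. For the upper bound, any feasible $n$ gives $K_n \homv H \implies K_n \homp H$, and \cref{thm:th_homp} then yields $n = \thmbar(K_n) \le \thmbar(H)$; since $n$ is an integer this forces $n \le \lfloor \thmbar(H) \rfloor$. For feasibility of $n = \lfloor \thmbar(H) \rfloor$, observe that $\thpbar(K_n) = n = \lfloor \thmbar(H) \rfloor \le \thmbar(H)$, so \cref{thm:gives_homv} (with source $K_n$ and target $H$) gives $K_n \homv H$. Hence the largest feasible $n$ is exactly $\lfloor \thmbar(H) \rfloor$.

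For $\chivect(G) = \lceil \thpbar(G) \rceil$, recall that $\chivect(G)$ is the least $n$ with $G \homv K_n$. For the lower bound, any feasible $n$ gives $G \homv K_n \implies G \homp K_n$, and \cref{thm:th_homp} yields $\thpbar(G) \le \thpbar(K_n) = n$; as $n$ is an integer this forces $n \ge \lceil \thpbar(G) \rceil$. For feasibility of $n = \lceil \thpbar(G) \rceil$, observe that $\thpbar(G) \le \lceil \thpbar(G) \rceil = n = \thmbar(K_n)$, so \cref{thm:gives_homv} (with source $G$ and target $K_n$) gives $G \homv K_n$. Hence the least feasible $n$ is exactly $\lceil \thpbar(G) \rceil$.

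I do not expect a serious obstacle here; the single point demanding care is that \cref{thm:gives_homv} is asymmetric, requiring $\thpbar$ on the source and $\thmbar$ on the target. This asymmetry is precisely what selects $\thpbar$ together with a ceiling in the $\chivect$ computation (where the complete graph sits on the target, so its value enters as $\thmbar(K_n) = n$ while $G$ contributes $\thpbar(G)$) and $\thmbar$ together with a floor in the $\omegavect$ computation (where the roles are reversed). Matching these directions correctly against the corresponding variant supplied by \cref{thm:th_homp} is the only place where an error could creep in.
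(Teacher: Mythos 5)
Your proof is correct and takes essentially the same route as the paper: the paper's own argument likewise specializes \cref{thm:gives_homv} to $\thpbar(G) \le n \implies G \homv K_n$ and $n \le \thmbar(H) \implies K_n \homv H$, and obtains the converses from $\homv \implies \homp$ plus \cref{thm:th_homp}, yielding the two if-and-only-if statements that pin down $\chivect(G) = \lceil \thpbar(G) \rceil$ and $\omegavect(H) = \lfloor \thmbar(H) \rfloor$. Your write-up merely makes explicit the floor/ceiling bookkeeping that the paper leaves implicit, and your closing remark about the asymmetry of \cref{thm:gives_homv} correctly identifies why $\thpbar$ pairs with the ceiling and $\thmbar$ with the floor.
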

\begin{proof}
    \Cref{thm:gives_homv} gives (for integer $n$) $\thpbar(G) \le n \implies G \homv K_n$
    and $n \le \thmbar(H) \implies K_n \homv H$.
    \Cref{thm:th_homp} gives the converse, so $\thpbar(G) \le n \iff G \homv K_n$
    and $n \le \thmbar(H) \iff K_n \homv H$.
\end{proof}

The authors of~\cite{arxiv:1311.6850} posed the question of whether
$\chivect(G) = \chi_q(G)$, that is to say whether $\chi_q$ is equivalent to its
semidefinite relaxation.  In fact, these two quantities are not equal.

\begin{theorem}
    \label{thm:homq_homv_gap}
    There is a graph $G$ such that $\chivect(G) < \chi_q(G)$.
    Therefore $G \homv H$ does not imply $G \homq H$.
\end{theorem}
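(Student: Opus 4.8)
The plan is to deduce everything from \cref{thm:chivect_equals}. That corollary identifies the relaxation with $\chivect(G) = \lceil \thpbar(G) \rceil$, so the theorem reduces to exhibiting a single graph $G$ whose quantum chromatic number strictly exceeds $\lceil \thpbar(G) \rceil$. The second sentence of the statement is then automatic: setting $N = \lceil \thpbar(G) \rceil$, \cref{thm:chivect_equals} gives $G \homv K_N$, while $\chi_q(G) > N$ means $G \not\homq K_N$ by definition of $\chi_q$; taking $H = K_N$ exhibits a graph with $G \homv H$ but $G \not\homq H$. So the entire content is the construction of one witness graph.

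Finding the witness is a two-sided estimate, and the two sides are of very different character. On the ``easy'' side I would bound the Szegedy number $\thpbar(G)$ from above, working directly from the semidefinite program~\eqref{eq:thp_min_sdp}: by exploiting the symmetry of $G$ I would restrict the search for a feasible $Z$ to the commutant of the automorphism group (equivalently, to the Bose--Mesner algebra of the relevant association scheme), guess the symmetric solution, and verify the constraints $Z \succeq 0$, $Z_{xx} = \lambda - 1$, $Z_{xy} = -1$ on edges of $\Gc$, and $Z_{xy} \ge -1$ everywhere. This pins down $\lceil \thpbar(G) \rceil$. On the ``hard'' side I must lower-bound $\chi_q(G)$ by a value strictly larger than $\lceil \thpbar(G) \rceil$, and this is the real obstacle: essentially every convenient lower bound on $\chi_q$ (via $\thbar$, via $\thpbar$ itself, via projective rank) is precisely one of the relaxations I am trying to beat, so a generic graph will satisfy $\chi_q(G) = \chivect(G)$ and is useless. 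I therefore need a graph for which $\chi_q$ can be determined \emph{exactly} by an argument that does not factor through these spectral relaxations.

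The natural candidate family is the Kneser graphs $G = K_{n:k}$, whose vertices are the $k$-subsets of $[n]$ with disjoint sets adjacent. Here $\thbar(G) = \vartheta(\Gc) = n/k$, which follows from vertex-transitivity (so $\vartheta(G)\vartheta(\Gc) = \binom{n}{k}$) together with the Erd\H{o}s--Ko--Rado value $\vartheta(K_{n:k}) = \binom{n-1}{k-1}$; the symmetric SDP solution above should also certify $\thpbar(G) = n/k$, so that $\lceil \thpbar(G) \rceil = \lceil n/k \rceil$. Meanwhile $\chi(K_{n:k}) = n - 2k + 2$, so for instance $n = 6$, $k = 2$ gives $\lceil \thpbar(G) \rceil = 3$ but $\chi(G) = 4$, an integer gap. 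The crux, and the step I expect to be genuinely difficult, is establishing $\chi_q(K_{n:k}) = n - 2k + 2$, i.e.\ that this family exhibits \emph{no} quantum chromatic advantage; with that in hand the estimate $\lceil \thpbar(G) \rceil < \chi_q(G)$ is immediate. If the exact determination of $\chi_q$ for Kneser graphs is not directly available, the fallback is to replace $G$ by any explicitly studied graph for which a combinatorial (non-spectral) lower bound forces $\chi_q$ above its $\thpbar$-relaxation, since the reduction in the first paragraph is insensitive to the particular choice of witness.
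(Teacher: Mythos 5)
Your reduction via \cref{thm:chivect_equals} to finding a single graph with $\lceil \thpbar(G) \rceil < \chi_q(G)$ is correct and is exactly how the paper begins. But the heart of the theorem is producing the witness, and there your proposal has a genuine gap: for $G = K_{6:2}$ you must prove $\chi_q(K_{6:2}) \ge 4$, and you supply no argument for this. It is not merely ``difficult'': every lower bound on $\chi_q$ available in this framework evaluates to $3$ on $K_{6:2}$ --- one has $\thbar(K_{6:2}) = \thpbar(K_{6:2}) = 3$ (by vertex transitivity and $\thm(K_{6:2}) = \alpha(K_{6:2}) = 5$), and the projective rank satisfies $\xi_f(K_{6:2}) \le \chi_f(K_{6:2}) = 3$. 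The classical value $\chi(K_{n:k}) = n-2k+2$ is proved by topological (Borsuk--Ulam) methods with no known quantum analogue, and since $\chi_q$ can be strictly smaller than $\chi$ (e.g.\ Hadamard graphs), it is not even clear that $\chi_q(K_{6:2}) = 4$ is true; your fallback (``any explicitly studied graph with a combinatorial lower bound on $\chi_q$'') just restates the problem. So the proposal reduces the theorem to an unproven, possibly false, claim.

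The misstep is your dismissal of projective rank as ``precisely one of the relaxations I am trying to beat.'' It is not: $\xi_f$ is a lower bound on $\chi_q$ that is \emph{not} dominated by $\lceil \thpbar \rceil$, and this is exactly what the paper exploits. One has $\thpbar(C_5) = \sqrt{5} < 5/2 = \xi_f(C_5) \le \chi_q(C_5)$; the rounding problem ($\lceil \sqrt{5}\rceil = 3 > 5/2$) is then cured by amplification rather than by changing graphs. Since $\thpbar$ is sub-multiplicative under the disjunctive product (tensor feasible solutions of~\eqref{eq:thp_min_sdp}) while $\xi_f$ is exactly multiplicative (\cref{thm:xif_mult}), the graph $G = C_5 \djp K_3$ satisfies
\begin{align*}
    \left\lceil \thpbar(C_5 \djp K_3) \right\rceil
    \le \left\lceil 3\sqrt{5} \right\rceil = 7
    < \tfrac{15}{2} = \xi_f(C_5 \djp K_3) \le \chi_q(C_5 \djp K_3),
\end{align*}
which finishes the proof without ever computing $\chi_q$ exactly. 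If you want to salvage your write-up, replace the Kneser witness by this product construction; the first paragraph of your proposal can stand as is.
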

\begin{proof}
    In light of \cref{thm:chivect_equals}, the goal is to find $G$ such that
    $\lceil \thpbar(G) \rceil < \chi_q(G)$.
    The projective rank of a graph, $\xi_f(G)$, is the infimum of $d/r$ such that the vertices of
    a graph can be assigned rank-$r$ projectors in $\mathbb{C}^d$ such that adjacent
    vertices have orthogonal projectors.
    Since $\xi_f(G) \le \chi_q(G)$~\cite{roberson2013variations}, it suffices
    to find a gap between $\lceil \thpbar(G) \rceil$ and $\xi_f(G)$.

    The five cycle has $\thpbar(C_5)=\sqrt{5} < \xi_f(C_5) = 5/2$~\cite{roberson2013variations}.
    But this is not enough since $\lceil \sqrt{5} \rceil = 3 > 5/2$.
    Fortunately, we can amplify the difference by taking the disjunctive product with a
    complete graph.
    $\thpbar$ is sub-multiplicative under disjunctive product, as feasible solutions $Z+J$
    to~\eqref{eq:thp_min_sdp} can be combined by tensor product.
    \Cref{thm:xif_mult} states that $\xi_f$ is multiplicative under the disjunctive (and
    lexicographical) product, so
    \begin{align*}
        \left\lceil \thpbar(C_5 \djp K_3) \right\rceil
        \le \left\lceil 3\sqrt{5} \right\rceil = 7 < 3 \cdot \frac{5}{2}
        = \xi_f(C_5 \djp K_3).
    \end{align*}

    %To show multiplicativity for $\thpbar(G \boxtimes K_n)$,
    %rewrite~\eqref{eq:thp_min_sdp} as $\thpbar(G) = \min\{ \lambda : \exists Y \ge J,
    %    Y_{ii} = \lambda, Y_{ij} = 0 \textrm{ for } i \sim j, Y_{ij} \ge 0
    %    \textrm{ for all } i,j \}$ where $J$ is the all-ones matrix.
    %This is submultiplicative under disjunctive product since feasible solutions can be combined
    %using a tensor product.  It is supermultiplicative under strong product with a complete
    %graph, since a feasible solution for $G \boxtimes K_n$ can be turned into a feasible
    %solution for $G$ by summing over the indices corresponding to $K_n$.
    %Therefore $\thpbar(G \boxtimes K_n) = n \thpbar(G)$.
\end{proof}

Subsequently, this result has been strengthened to
$\chivect(G) < \chiqr(G)$~\cite{arxiv:1407.6918}.

\section{Conclusion}

Beigi provided a vector relaxation of the entanglement assisted zero-error communication
problem, leading to an upper bound on the entanglement assisted independence number:
$\alpha^* \le \lfloor \vartheta \rfloor$~\cite{PhysRevA.82.010303}.
We generalized Beigi's construction to apply it to entanglement assisted zero-error
source-channel coding, defining a relaxed graph homomorphism $G \homb H$.
This ends up exactly characterizing monotonicity of $\vartheta$, and shows that $\vartheta$ can be
used to provide a lower bound on the cost rate for entanglement assisted source-channel coding.
As a corollary we answer in the affirmative an open question posed by Beigi of whether a
quantity $\beta$ that he defined is equal to $\lfloor \vartheta \rfloor$.
Applying a Beigi-style argument to chromatic number
rather than independence number yields a quantity analogous to $\beta$ which is equal to
$\lceil \vartheta \rceil$.
We defined a similar (and stronger) relaxation, $G \homp H$, which yields bounds involving
Schrijver's number $\thm$ and Szegedy's number $\thp$.
This leads to a stronger bound on entanglement assisted independence number:
$\alpha^* \le \lfloor \thm \rfloor$.
In addition to these new bounds, we reproduce previously known bounds
from~\cite{PhysRevA.82.010303,arxiv:1308.4283,arxiv:1212.1724,roberson2013variations}.
We also answer an open question from~\cite{arxiv:1311.6850} regarding the relation of the
quantum chromatic number to its semidefinite relaxation.

A number of open questions remain.
Since there is a graph for which $\thm < \vartheta-1$~\cite{1056072}, our bound
$\alpha^* \le \lfloor \thm \rfloor$ shows a gap between one-shot entanglement assisted
zero-error capacity and $\lfloor \vartheta \rfloor$.
However, since $\thm$ is not multiplicative, it is still not known whether there can be a gap
between the \emph{asymptotic} capacity (i.e.\ the entanglement assisted Shannon capacity)
and $\vartheta$.
To show such a gap requires a stronger bound on entanglement assisted Shannon capacity.
Haemers provided a bound on Shannon capacity which is sometimes stronger than
Lov{\'a}sz' bound~\cite{haemers1978,haemers1979,alon1998,peeters1996}; however, this bound
does not apply to the entanglement assisted case~\cite{leung2012entanglement}.

%The implication relations between the various conditions investigated in this paper are
%summarized in \cref{fig:implications}.
%The only implication that we have not proved in this paper is $G \homm H \implies G \homq H$,
%which is trivial.
%In \cref{thm:homp_vs_home} we showed that the converse of
%$G \homp H \implies G \homb H$ does not hold.
%\cite{1023373,arxiv:1212.1724} showed that the converse
%of $G \homm H \implies G \homq H$ does not hold.
%We showed in \cref{thm:homq_homv_gap} that the converse of $G \homq H \implies G \homv H$ does
%not hold.
%The converse of $\thpbar(G) \le \thmbar(H) \implies G \homv H$ can be seen to not hold by
%setting $G=H$ to any graph such that $\thpbar(G) > \thmbar(G)$.
%
%Some of the possible converses are still open questions.
%The converse of
%$G \homq H \implies G \home H$
%holds if and only if $\ket{\psi}$ can
%always be taken to be the maximally entangled state in~\eqref{eq:cond_home}-\eqref{eq:rho_sx}.
%In particular, a converse of
%$G \homq H \implies G \home H$
%would mean that the zero-error
%entanglement assisted capacity could always be achieved using a maximally entangled state along
%with projective measurements.

\begin{figure}[ht]
    \centering
    \includegraphics[scale=1.0]{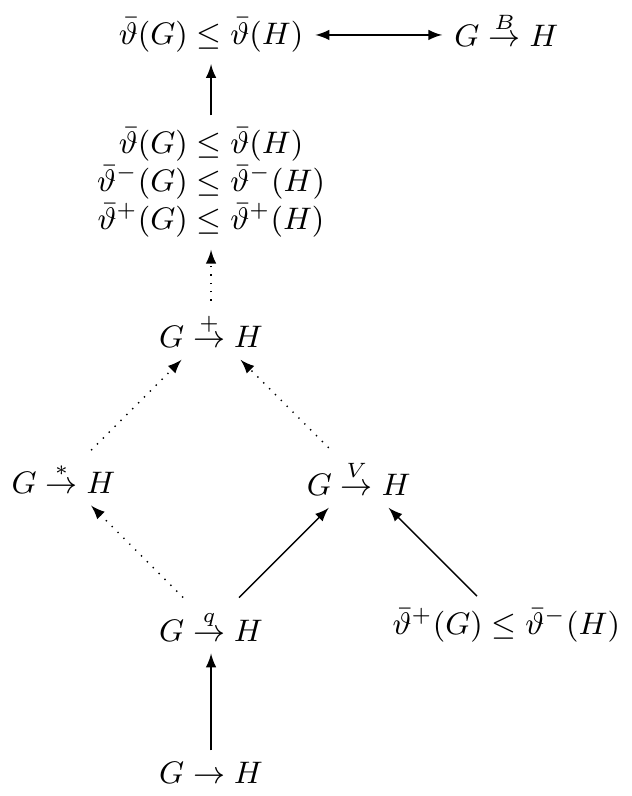}
    \caption{
        Implications between various conditions discussed in this paper.  Double ended arrows mean
        if-and-only-if, solid arrows mean the converse is known to not hold, and dotted
        arrows mean we do not know whether the converse holds.
    }
    \label{fig:implications}
\end{figure}

The standard notion of graph homomorphism, along with two of its quantum generalizations, and
our three relaxations, form a hierarchy as outlined in \cref{fig:implications}.
In some cases we do not know whether converses hold.
$G \home H$ is equivalent to $G \homq H$ if and only if projective measurements and a maximally
entangled state always suffice for entanglement assisted
zero-error source-channel coding.
Equivalence between $G \home H$ and $G \homp H$ seems unlikely but would have two important
consequences.
First, we would have a much simpler characterization (vector rather than operator) of
entanglement assisted homomorphisms and, in particular, entanglement assisted zero-error
communication.
Second, since $G \homv H \implies G \homp H$, the gap that we found between $G \homq H$ and
$G \homv H$ would give a gap between $G \homq H$ and $G \home H$.

After completing this work, we became aware of a previous investigation of a similar
problem.  Semidefinite relaxations of the homomorphism game (outside of the quantum
context) were investigated
in~\cite{Feige:1992:TOP:129712.129783}, and this was further developed
in~\cite{bacikmahajan}.
What they call a \emph{hoax} corresponds to our $G \homv H$, and what they call a
\emph{semi-hoax} corresponds to our $G \homb H$.
Though they studied the same problem, they reached a different conclusion:
they showed (using our terminology)
\begin{align}
    G \homb H &\iff \thbar(G \circ H) = \abs{V(G)}
    \label{eq:bacikB}
    \\
    G \homv H &\iff \thmbar(G \circ H) = \abs{V(G)}
    \label{eq:bacikV}
\end{align}
where $\circ$ denotes the \emph{hom-product} with vertices $V(G) \times V(H)$ and edges
\begin{align*}
    (x,s) \sim (y,t) \iff
    &(x \ne y) \textrm{ and }
    (x \sim y \implies s \sim t).
\end{align*}
%Indeed, this is not hard to see since the Gram matrix of the vectors for $G \homv H$
%(recall that these vectors satisfy the conditions for both \cref{def:homb,def:homp}),
%suitably normalized, form a feasible solution for~\eqref{eq:thm_BJ} for
%$\thmbar(G \circ H)$ with value $V(G)$, and the converse holds as well.
%Similarly, the vectors for $G \homb H$ form a feasible solution for $\thbar(G \circ H)$
%(as compared to $G \homv H$, only entrywise positivity lacks).
Combining our \cref{thm:th_homb} with~\eqref{eq:bacikB} gives
$\thbar(G \circ H) = \abs{V(G)} \iff \thbar(G) \le \thbar(H)$ and
combining \cref{thm:th_homp,thm:gives_homv} with~\eqref{eq:bacikV} gives
\begin{align*}
    \thpbar(G) \le \thmbar(H) &\implies \thmbar(G \circ H) = \abs{V(G)}
    \\ &\implies \thbar(G) \le \thbar(H), \thmbar(G) \le \thmbar(H),
    \\ &\hphantom{{}\implies{}} \thpbar(G) \le \thpbar(H).
\end{align*}
Considering the special case $H=K_n$, we have $G \circ K_n=G \cart K_n$ (Cartesian
product) giving $\thbar(G \cart K_n) = \abs{V(G)} \iff \thbar(G) \le n$ and
$\thmbar(G \cart K_n) = \abs{V(G)} \iff \thpbar(G) \le n$, reproducing Theorem 2.7 of
\cite{doi:10.1137/050648237}.

\section{Acknowledgments}
We thank Vern Paulsen and Ivan Todorov for inspiring us to define $G \homv H$ and to find the
gap between this and $G \homq H$.

TC is supported by the Royal Society.
LM is supported by the Ministry of Education (MOE) and National Research Foundation Singapore, 
as well as MOE Tier 3 Grant ``Random numbers from quantum processes'' (MOE2012-T3-1-009).
DR is supported by an NTU start-up grant awarded to D.V.~Pasechnik.
SS is supported by the Royal Society and the British Heart Foundation.
DS is supported by the National Science Foundation through Grant PHY-1068331.
AW is supported by the European Commission
(STREPs ``QCS'' and ``RAQUEL''), the European Research Council (Advanced Grant
``IRQUAT'') and the Philip Leverhulme Trust; furthermore by the Spanish MINECO,
project FIS2008-01236, with the support of FEDER funds.

%%%%%%%%%%%%%%%%%%%%%%%%%%%%%%%%%%%%%%%%
%\bibliographystyle{IEEEtran}
%\bibliography{IEEEabrv,lovasz_mon}

% Generated by IEEEtran.bst, version: 1.13 (2008/09/30)

%%%%%%%%%%%%%%%%%%%%%%%%%%%%%%%%%%%%%%%%

\appendices

\section{Multiplicativity}
\label[secinapp]{sec:mult}

In Lov\'{a}sz' original paper~\cite{lovasz79} on the $\vartheta$ function, he proved that
\[\vartheta(G \boxtimes H) = \vartheta(G) \vartheta(H),\]
i.e.~$\vartheta$ is multiplicative with respect to the strong product. To do this he proved the following two inequalities:
\[\vartheta(G)\vartheta(H) \le \vartheta(G \boxtimes H) \le \vartheta(G) \vartheta(H).\]
This sufficed for Lov\'{a}sz because it was only required to show that $\vartheta$ is multiplicative with respect to the strong product in order to prove that it was an upper bound on Shannon capacity. However, Lov\'{a}sz also noted that his proof of the first inequality above also proves the following stronger statement:
\[\vartheta(G)\vartheta(H) \le \vartheta(G * H).\]
Together these inequalities imply that $\vartheta$ is multiplicative with respect to both
the strong and disjunctive products. Our aim in this appendix is to show that
$\thm$ is not multiplicative with respect to the strong product and $\thp$ is not
multiplicative with respect to the disjunctive product, but $\thm$
is multiplicative with respect to the disjunctive product.
Also we will show multiplicativity of projective rank $\xi_f$.

\subsection{Counterexamples}

Some of the inequalities involving $\vartheta$ above can be proved for $\thm$ as well. 
Adapting Lov\'{a}sz' proof of the analogous statement for $\vartheta$, it can be shown that
\[ \thm(G \boxtimes H) \ge \thm(G \djp H) \ge \thm(G)\thm(H).\]
Similarly, it can be shown that 
\[ \thp(G \djp H) \le \thp(G \boxtimes H) \le \thp(G)\thp(H).\]
Therefore, in order to show that neither $\thm$ or $\thp$ are multiplicative with 
respect to both the strong and disjunctive products, we must find counterexamples 
to both of the following inequalities:
\[\thm(G \boxtimes H) \le \thm(G)\thm(H), \quad \thp(G \djp H) \ge \thp(G)\thp(H).\]

At the end of~\cite{1056072}, Schrijver gives an example of a graph, which we refer 
to as $G_S$, that satisfies $\thm(G_S) < \vartheta(G_S)$. The vertices of $G_S$ are the 
$0$-$1$-strings of length six, and two strings are adjacent if their Hamming distance 
is at most three. In other words, Schrijver's graph $G_S$ is an instance of a 
\emph{Hamming graph}. 
Note that this graph is vertex transitive. We will see how to use the graph $G_S$ 
to construct counterexamples to both of the above inequalities. To do this we will 
need two lemmas, the first of which is from~\cite{lovasz79}.

\begin{lemma}\label{lem:lovaszrecip}
For any graph $G$,
\[\vartheta(G)\vartheta(\overline{G}) \ge |V(G)|,\]
with equality when $G$ is vertex transitive.
\end{lemma}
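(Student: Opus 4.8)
The plan is to prove the inequality $\vartheta(G)\vartheta(\overline{G}) \ge |V(G)|$ for arbitrary $G$ using orthonormal representations, and then to upgrade it to equality in the vertex-transitive case by symmetrizing a semidefinite optimum. Write $n = |V(G)|$, and recall Lov{\'a}sz' orthonormal-representation formula $\vartheta(K) = \min_{c,\{u_i\}} \max_i \langle c, u_i\rangle^{-2}$, the minimum running over unit ``handles'' $c$ and assignments of unit vectors $u_i$ to vertices with $u_i \perp u_j$ whenever $i \not\sim_K j$.

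For the general inequality, I would fix an optimal orthonormal representation $\{u_i\}$ with handle $c$ for $\overline{G}$ and an optimal one $\{v_i\}$ with handle $d$ for $G$, and form the vectors $u_i \otimes v_i$. The key observation is that for $i \ne j$ the pair is a non-edge of at least one of $G$, $\overline{G}$, so either $v_i \perp v_j$ (if $i \not\sim_G j$) or $u_i \perp u_j$ (if $i \sim_G j$, hence $i \not\sim_{\overline{G}} j$); in either case $\langle u_i\otimes v_i, u_j\otimes v_j\rangle = 0$. Thus $\{u_i\otimes v_i\}_{i=1}^n$ is orthonormal, and Bessel's inequality applied to the unit vector $c\otimes d$ gives $\sum_i \langle c,u_i\rangle^2\langle d,v_i\rangle^2 \le 1$. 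Since optimality forces $\langle c,u_i\rangle^2 \ge 1/\vartheta(\overline{G})$ and $\langle d,v_i\rangle^2 \ge 1/\vartheta(G)$ for every $i$, the left side is at least $n/\bigl(\vartheta(G)\vartheta(\overline{G})\bigr)$, and the claimed bound follows.

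For the equality (the reverse inequality $\le n$) when $G$ is vertex transitive, I would pass to the standard primal semidefinite program $\vartheta(G) = \max\{\langle J,X\rangle : X\succeq 0,\ \Tr X = 1,\ X_{ij}=0 \text{ for } i\sim_G j\}$, equivalent to \cref{def:theta}, and average an optimal $X$ over the (transitive) automorphism group $\Gamma = \mathrm{Aut}(G)$. Since $J$ and all constraints are $\Gamma$-invariant and the objective is linear, the averaged matrix is again optimal; being $\Gamma$-invariant it has constant diagonal $1/n$ and, because the only $\Gamma$-fixed vectors are multiples of $\mathbf{1}$, it satisfies $X\mathbf{1} = \tfrac{\vartheta(G)}{n}\mathbf{1}$. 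I would then exhibit the matrix $Z = \tfrac{n^2}{\vartheta(G)} X - J$ as a feasible point for the minimization \eqref{eq:th_min_sdp} defining $\thbar(G)=\vartheta(\overline{G})$: one checks $Z\mathbf{1}=0$ and $Z = \tfrac{n^2}{\vartheta(G)}X \succeq 0$ on $\mathbf{1}^\perp$, so $Z\succeq 0$; that $Z_{ii} = \tfrac{n}{\vartheta(G)} - 1$; and that $Z_{ij} = -1$ for $i\sim_G j$ (using $X_{ij}=0$ there). Hence $\vartheta(\overline{G}) \le n/\vartheta(G)$, which combined with the general bound yields equality.

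I expect the equality direction to be the main obstacle: the general inequality is a short tensor-and-Bessel argument, whereas the reverse bound genuinely requires (i) arguing that symmetrization produces an optimal solution with $\mathbf{1}$ as an eigenvector and constant diagonal, and (ii) correctly guessing the explicit dual-feasible matrix $Z$ and verifying its positive semidefiniteness and the exact value $n/\vartheta(G)$. Care with the two conventions $\thbar(G)=\vartheta(\overline{G})$ versus $\vartheta(G)$ is also needed so that the constraint ``$Z_{ij}=-1$ for $i\sim_G j$'' lines up with \eqref{eq:th_min_sdp}.
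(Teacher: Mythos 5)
Your proof is correct, but note that the paper itself contains no proof of this lemma to compare against: it is imported directly from Lov\'asz~\cite{lovasz79} (``the first of which is from\ldots''). Measured against Lov\'asz' original arguments, your two halves are repackagings of the classical ideas. For the inequality, Lov\'asz combines multiplicativity, $\vartheta(G \boxtimes \overline{G}) = \vartheta(G)\vartheta(\overline{G})$, with $\alpha \le \vartheta$ applied to the diagonal $\{(i,i)\}$, which is an independent set of size $|V(G)|$ in $G \boxtimes \overline{G}$; your tensor-plus-Bessel argument inlines exactly those two ingredients (tensoring optimal orthonormal representations, then the handle estimate), giving a self-contained and somewhat more elementary derivation that never invokes the product theorem. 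For the equality half, your symmetrization is essentially Lov\'asz' own route: averaging an optimal primal solution over a transitive automorphism group does force constant diagonal $1/n$ and $X\mathbf{1} = \tfrac{\vartheta(G)}{n}\mathbf{1}$ (since $X\mathbf{1}$ is a $\Gamma$-fixed vector), and your matrix $Z = \tfrac{n^2}{\vartheta(G)}X - J$ is then feasible for \eqref{eq:th_min_sdp} with value $n/\vartheta(G)$: it kills $\mathbf{1}$, preserves the splitting $\mathrm{span}\{\mathbf{1}\} \oplus \mathbf{1}^\perp$, is positive semidefinite on each summand, and has $Z_{ij} = -1$ precisely on edges of $G$, which is what \eqref{eq:th_min_sdp} requires for $\thbar(G) = \vartheta(\overline{G})$. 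Two small points you use implicitly and could state: the primal optimum is attained (the feasible set is compact), and $\vartheta(G) \ge 1 > 0$, so dividing by $\vartheta(G)$ is harmless.
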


An analogous statement involving $\thm$ and $\thp$ was proved by Szegedy in~\cite{365707}:

\begin{lemma}\label{lem:szegedyrecip}
For any graph $G$,
\[\thm(G)\thp(\overline{G}) \ge |V(G)|,\]
with equality when $G$ is vertex transitive.
\end{lemma}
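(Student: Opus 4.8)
The plan is to adapt Lov\'asz's proof of \cref{lem:lovaszrecip}, which obtains $\vartheta(G)\vartheta(\overline{G})\ge\abs{V(G)}$ by tensoring optimal orthonormal representations. Recall the mechanism: if $\{u_i\}$ is a unit orthonormal representation of $G$ with handle $c$ realizing $\vartheta(G)=\max_i 1/\langle c,u_i\rangle^2$, and $\{w_i\}$ likewise realizes $\vartheta(\overline{G})$, then $u_i\perp u_j$ on the non-edges of $G$ while $w_i\perp w_j$ on the edges of $G$; hence the products $u_i\otimes w_i$ are orthonormal, and Bessel's inequality applied to the unit vector $c\otimes d$ gives $\sum_i\langle c,u_i\rangle^2\langle d,w_i\rangle^2\le 1$. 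Combined with the optimality bounds $\langle c,u_i\rangle^2\ge 1/\vartheta(G)$ and $\langle d,w_i\rangle^2\ge 1/\vartheta(\overline{G})$ this yields $\abs{V(G)}/(\vartheta(G)\vartheta(\overline{G}))\le 1$. I would reproduce this scheme with the sign-constrained representations that define $\thm$ and $\thp$.

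For the inequality $\thm(G)\thp(\overline{G})\ge\abs{V(G)}$, the key point is that Schrijver's $\thm$ and Szegedy's $\thp$ arise from the Lov\'asz optimization by, respectively, adding entrywise nonnegativity and passing to the dual relaxation. Concretely, starting from the certificate matrices $Z$ of \cref{def:theta}, the shifted matrices $Z+J$ are positive semidefinite, vanish on the appropriate non-adjacent pairs, and carry the extra sign information (nonnegative entries for $\thp$, nonpositive off-diagonal entries on non-edges for $\thm$). First I would read off from these certificates the corresponding vector systems, then form the vertex-wise tensor products. The orthogonality needed to run the Bessel step is precisely the one supported on the complementary edge/non-edge sets, so it survives; the remaining inner products are governed by the nonnegativity conditions, which I would use to control the signs of the cross terms and close the estimate in the direction $\ge\abs{V(G)}$.

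For the equality statement when $G$ is vertex transitive, I would symmetrize. Averaging any optimal solution over $\mathrm{Aut}(G)$ produces an $\mathrm{Aut}(G)$-invariant optimum, since the feasible regions and objectives of the programs in \cref{def:theta} are convex and invariant and the defining sign constraints are preserved under averaging. By transitivity the diagonal of the invariant certificate is constant and the all-ones vector becomes an eigenvector, which forces the handle overlaps $\langle c,u_i\rangle^2$ and $\langle d,w_i\rangle^2$ to be constant and equal to $1/\thm(G)$ and $1/\thp(\overline{G})$; the Bessel estimate then holds with equality, giving $\thm(G)\thp(\overline{G})=\abs{V(G)}$.

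The hard part will be the bookkeeping of the sign conditions in the tensor step: unlike the $\vartheta$ case, the tensored vectors are not automatically orthonormal once nonnegativity replaces exact orthogonality, so I must arrange the two representations (one for $\thm(G)$, one for $\thp(\overline{G})$) so that the pairs responsible for each edge and non-edge are matched to an exact orthogonality on one factor while the nonnegativity on the other factor fixes the sign of every cross term in the way Bessel requires. Verifying that this matching is consistent, and that the symmetrization in the vertex-transitive case does not destroy the $\thm$ and $\thp$ sign constraints, is where the argument needs the most care.
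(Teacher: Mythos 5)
First, a point of comparison: the paper does not actually prove this lemma --- it is quoted from Szegedy~\cite{365707} --- so your proposal has to stand on its own. The inequality half of your plan does work, but the ``signed Bessel'' step you allude to needs to be made explicit, since literal Bessel (which requires an orthonormal system) is unavailable. Writing $n = \abs{V(G)}$: from a feasible $Z$ for the minimization defining $\thm(G) = \thmbar(\Gc)$, take Gram vectors of $Z$ and adjoin an orthogonal unit handle to get unit vectors $u_i$ and a unit $c$ with $\braket{c}{u_i} = \thm(G)^{-1/2}$ and $\braket{u_i}{u_j} \le 0$ for $i \ne j$, $i \not\sim_G j$; similarly for $\thp(\Gc) = \thpbar(G)$ get $w_i$, $d$ with $\braket{d}{w_i} = \thp(\Gc)^{-1/2}$, $\braket{w_i}{w_j} = 0$ for $i \sim_G j$, and $\braket{w_i}{w_j} \ge 0$ for all $i \ne j$. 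Setting $x_i = u_i \ot w_i$ and $t_i = \braket{c \ot d}{x_i} > 0$, every cross inner product $\braket{x_i}{x_j}$ with $i \ne j$ is $\le 0$ (zero on edges via the $w$'s, nonpositive on non-edges via sign times sign), so
\[
0 \le \left\lVert c \ot d - \sum_i t_i x_i \right\rVert^2 \le 1 - \sum_i t_i^2,
\]
which is exactly $n/(\thm(G)\thp(\Gc)) \le 1$. So the sign bookkeeping you worried about does close, and this is a legitimate alternative route to the inequality.

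The genuine gap is the equality claim. Constant handle overlaps equal to $\thm(G)^{-1}$ and $\thp(\Gc)^{-1}$ are \emph{not} a consequence of vertex transitivity --- the construction above produces them for every graph --- and they do not force equality in the displayed estimate: equality would additionally require $c \ot d \in \linspan\{x_i\}$ and the vanishing of all the nonpositive cross terms, neither of which follows from invariance of the certificates. Concretely, for the path $P_3$ on three vertices both programs admit constant-overlap optimal solutions, yet $\thm(P_3)\thp(\overline{P_3}) = 4 > 3$. What equality actually requires is the reverse bound $\thm(G)\thp(\Gc) \le n$, obtained by crossing between the two dual programs rather than by tightening Bessel: average an optimal solution $B$ of the maximization form~\eqref{eq:thm_max_sdp} of $\thm(G)$ (so $B \succeq 0$, $\Tr B = 1$, $B_{ij} \ge 0$, $B_{ij} = 0$ for $i \sim_G j$) over $\mathrm{Aut}(G)$; transitivity gives $B_{ii} = 1/n$ and makes the all-ones vector an eigenvector of $B$ with eigenvalue $\thm(G)/n$, so that
\[
Z = \frac{n^2}{\thm(G)}\, B - J
\]
is positive semidefinite, has $Z_{ii} = n/\thm(G) - 1$, $Z_{ij} = -1$ for $i \sim_G j$, and $Z_{ij} \ge -1$ everywhere --- note that Schrijver's sign condition $B_{ij} \ge 0$ is precisely what produces Szegedy's constraint $Z_{ij} \ge -1$. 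Hence $Z$ is feasible for the program~\eqref{eq:thp_min_sdp} defining $\thp(\Gc)$ with value $n/\thm(G)$, giving $\thp(\Gc) \le n/\thm(G)$. Your symmetrization instinct is the right ingredient, but it must be spent on building this feasible point for the complementary program, not on asserting tightness of the Bessel step.
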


One easy consequence of these lemmas is that if $G$ is a vertex transitive graph such that $\thm(G) < \vartheta(G)$, then
\[\thp(\overline{G}) = \frac{|V(G)|}{\thm(G)} > \frac{|V(G)|}{\vartheta(G)} = \vartheta(\overline{G}).\]
In particular this implies that $\thp(\overline{G_S}) > \vartheta(\overline{G_S})$.

More pertinent to our discussion are the following lemmas.

\begin{lemma}\label{lem:thmcounter}
If $G$ is a vertex transitive graph such that $\thm(G) < \vartheta(G)$, then
\[\thm(G \boxtimes \overline{G}) > \thm(G)\thm(\overline{G}).\]
\end{lemma}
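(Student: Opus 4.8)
The plan is to squeeze both sides of the desired inequality against $\abs{V(G)}$: I will establish the \emph{unconditional} lower bound $\thm(G \boxtimes \overline{G}) \ge \abs{V(G)}$, and then use the gap hypothesis $\thm(G) < \vartheta(G)$ together with vertex transitivity to force the \emph{strict} upper bound $\thm(G)\thm(\overline{G}) < \abs{V(G)}$. Chaining these gives the result. The one subtlety to keep in mind is that the general supermultiplicativity fact recorded earlier, $\thm(G \boxtimes \overline{G}) \ge \thm(G)\thm(\overline{G})$, is only non-strict and so cannot by itself produce the separation we want.

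For the lower bound, first I would exhibit an explicit independent set in $G \boxtimes \overline{G}$, namely the diagonal $\{(v,v) : v \in V(G)\}$. For distinct $v,w$, the pair $(v,v),(w,w)$ would be adjacent in the strong product only through the third clause of the adjacency definition, i.e.\ only if $v \sim_G w$ \emph{and} $v \sim_{\overline{G}} w$ both held; but these are complementary relations, so exactly one can hold, and the diagonal is therefore independent. Hence $\alpha(G \boxtimes \overline{G}) \ge \abs{V(G)}$, and by the sandwich inequality $\alpha \le \thm$ we obtain $\thm(G \boxtimes \overline{G}) \ge \abs{V(G)}$.

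For the upper bound on the product, I would combine the sandwich inequality $\thm(\overline{G}) \le \vartheta(\overline{G})$ with Lov\'{a}sz reciprocity (\cref{lem:lovaszrecip}), which for vertex-transitive $G$ holds with equality, giving $\vartheta(\overline{G}) = \abs{V(G)}/\vartheta(G)$. Then
\[
\thm(G)\thm(\overline{G}) \le \thm(G)\,\vartheta(\overline{G}) = \abs{V(G)}\,\frac{\thm(G)}{\vartheta(G)} < \abs{V(G)},
\]
where the final strict inequality is precisely the hypothesis $\thm(G) < \vartheta(G)$. Combining with the previous paragraph yields $\thm(G \boxtimes \overline{G}) \ge \abs{V(G)} > \thm(G)\thm(\overline{G})$.

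The step I expect to carry the weight is the lower bound $\thm(G \boxtimes \overline{G}) \ge \abs{V(G)}$: the key move is to abandon the multiplicativity-type inequalities (which only give a non-strict bound) and instead bound $\thm$ of the product from below by a purely combinatorial quantity, the independence number realized by the diagonal, which is insensitive to the size of the gap. The vertex-transitivity-plus-gap hypothesis then does the remaining work of pushing the product of the two Schrijver numbers strictly below $\abs{V(G)}$. I note that only \cref{lem:lovaszrecip} is needed for this lemma; \cref{lem:szegedyrecip} is not required here.
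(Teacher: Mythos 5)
Your proof is correct and follows essentially the same route as the paper's: both exhibit the diagonal $\{(v,v)\}$ as an independent set of size $\abs{V(G)}$ to get $\thm(G \boxtimes \overline{G}) \ge \abs{V(G)}$, and both push $\thm(G)\thm(\overline{G})$ strictly below $\abs{V(G)}$ via $\thm(\overline{G}) \le \vartheta(\overline{G})$, the gap hypothesis, and the equality case of \cref{lem:lovaszrecip} for vertex-transitive graphs. Your write-up is merely more explicit about verifying independence of the diagonal and about which lemma is actually needed.
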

\begin{proof}
First note the vertices of the form $(v,v)$ in $G \boxtimes \overline{G}$ form an independent set of size $|V(G)|$. Therefore, $\thm(G \boxtimes \overline{G}) \ge |V(G)|$, and we have the following:
\[\thm(G)\thm(\overline{G}) < \vartheta(G)\vartheta(\overline{G}) = |V(G)| \le \thm(G \boxtimes \overline{G}),\]
since $\thm(\overline{G}) \le \vartheta(\overline{G})$.
\end{proof}
Since $G_S$ satisfies the hypotheses of~\cref{lem:thmcounter}, we have the following desired corollary:
\begin{corollary}
The parameter $\thm$ is not multiplicative with respect to the strong product.
\end{corollary}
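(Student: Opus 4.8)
The plan is to use Schrijver's graph $G_S$ as an explicit witness to non-multiplicativity, invoking \cref{lem:thmcounter} directly. First I would check that $G_S$ meets the two hypotheses of that lemma. It is vertex transitive, being a Hamming graph (as recorded in the remark preceding the lemma), and Schrijver's construction at the end of \cite{1056072} supplies the strict separation $\thm(G_S) < \vartheta(G_S)$. With both hypotheses in hand, \cref{lem:thmcounter} applied to $G = G_S$ yields
\[
    \thm(G_S \boxtimes \overline{G_S}) > \thm(G_S)\,\thm(\overline{G_S}).
\]

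Next I would observe that this single strict inequality already refutes multiplicativity. Were $\thm$ multiplicative with respect to the strong product, one would have $\thm(A \boxtimes B) = \thm(A)\,\thm(B)$ for \emph{every} pair of graphs $A, B$, and in particular for the pair $A = G_S$, $B = \overline{G_S}$. The displayed strict inequality contradicts this, so the corollary follows at once. It is worth emphasising that a single counterexample suffices: multiplicativity is a universally quantified identity, so exhibiting one pair on which it fails is exactly what is required.

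I do not anticipate any genuine obstacle here, since all of the analytic content has already been absorbed into \cref{lem:thmcounter} (which itself rests on the reciprocity \cref{lem:lovaszrecip}, together with the sandwich relation $\thm(\overline{G_S}) \le \vartheta(\overline{G_S})$ and the independent set $\{(v,v) : v \in V(G_S)\}$ in the strong product). The only point requiring attention is confirming that $G_S$ satisfies the vertex-transitivity and strict-gap hypotheses; both are standard properties of Schrijver's Hamming-graph example, so the corollary is immediate once the lemma is in place.
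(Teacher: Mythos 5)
Your proposal is correct and matches the paper's own proof exactly: the paper likewise derives this corollary by noting that Schrijver's vertex-transitive graph $G_S$ (with $\thm(G_S) < \vartheta(G_S)$) satisfies the hypotheses of \cref{lem:thmcounter}, so that single strict inequality $\thm(G_S \boxtimes \overline{G_S}) > \thm(G_S)\thm(\overline{G_S})$ refutes multiplicativity. No gaps; your additional remarks about one counterexample sufficing are just making explicit what the paper leaves implicit.
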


We are also able to use \cref{lem:thmcounter} to prove a similar lemma for $\thp$.
\begin{lemma}
If $G$ is a vertex transitive graph such that $\thm(G) < \vartheta(G)$, then
\[\thp(G \djp \overline{G}) < \thp(G)\thp(\overline{G}).\]
\end{lemma}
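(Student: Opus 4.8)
The plan is to reduce this statement to \cref{lem:thmcounter} by means of Szegedy's reciprocity relation (\cref{lem:szegedyrecip}), exploiting the duality between the strong and disjunctive products under complementation. The key structural observation is that
\[\overline{G \boxtimes \overline{G}} = G \djp \overline{G},\]
which follows from the identity $\overline{A \boxtimes B} = \overline{A} \djp \overline{B}$ (the complement of the relation $\overline{G \djp H} = \overline{G} \boxtimes \overline{H}$ noted earlier in the paper) together with commutativity of the disjunctive product.

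Next I would verify that $G \boxtimes \overline{G}$ is vertex transitive: $G$ is vertex transitive by hypothesis, $\overline{G}$ shares its automorphism group and is therefore also vertex transitive, and the strong product of two vertex-transitive graphs is vertex transitive since componentwise automorphisms preserve strong-product adjacency. Hence the equality case of \cref{lem:szegedyrecip} applies to $K = G \boxtimes \overline{G}$, giving
\[\thm(G \boxtimes \overline{G})\,\thp(G \djp \overline{G}) = |V(G \boxtimes \overline{G})| = |V(G)|^2,\]
so that $\thp(G \djp \overline{G}) = |V(G)|^2 / \thm(G \boxtimes \overline{G})$.

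I would then invoke \cref{lem:thmcounter}, whose hypotheses coincide exactly with ours, to obtain the strict inequality $\thm(G \boxtimes \overline{G}) > \thm(G)\thm(\overline{G})$. Dividing $|V(G)|^2$ by these two quantities reverses the inequality:
\[\thp(G \djp \overline{G}) = \frac{|V(G)|^2}{\thm(G \boxtimes \overline{G})} < \frac{|V(G)|^2}{\thm(G)\thm(\overline{G})}.\]
Finally I would identify the right-hand side with $\thp(G)\thp(\overline{G})$: applying the equality case of \cref{lem:szegedyrecip} to the vertex-transitive graphs $G$ and $\overline{G}$ separately gives $\thp(\overline{G}) = |V(G)|/\thm(G)$ and $\thp(G) = |V(G)|/\thm(\overline{G})$, whose product is precisely $|V(G)|^2 / (\thm(G)\thm(\overline{G}))$. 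Chaining the displays yields the claim.

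There is essentially no obstacle here beyond assembling the pieces, since the strict inequality is entirely supplied by \cref{lem:thmcounter}. The only point requiring genuine care is confirming that all three graphs to which reciprocity is applied, namely $G$, $\overline{G}$, and $G \boxtimes \overline{G}$, are vertex transitive, so that the \emph{equality} case of \cref{lem:szegedyrecip}, and not merely the inequality, is available at each step.
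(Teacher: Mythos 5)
Your proof is correct and takes essentially the same route as the paper's: both reduce the claim to \cref{lem:thmcounter} via the equality case of \cref{lem:szegedyrecip}, using the identity $\overline{G \djp \overline{G}} = \overline{G} \boxtimes G$ and the identifications $\thp(G) = |V(G)|/\thm(\overline{G})$ and $\thp(\overline{G}) = |V(G)|/\thm(G)$. The only difference is cosmetic: you spell out the vertex-transitivity checks (for $\overline{G}$ and for the product graph) that the paper leaves implicit.
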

\begin{proof}
Suppose that $G$ is such a graph. By \cref{lem:thmcounter}, we have that
\[\thm(G \boxtimes \overline{G}) > \thm(G)\thm(\overline{G}).\]
Since $G$ is vertex transitive, so is $G \djp \overline{G}$ and thus we can apply
\cref{lem:szegedyrecip} to obtain
\begin{align*}
    \thp(G \djp \overline{G})
    &= \frac{|V(G)|^2}{\thm\left(\overline{G \djp \overline{G}}\right)}
    = \frac{|V(G)|^2}{\thm(\overline{G} \boxtimes G)}
    \\ &< \frac{|V(G)|}{\thm(\overline{G})} \frac{|V(G)|}{\thm(G)} = \thp(G) \thp(\overline{G}).
\end{align*}
\end{proof}
Similarly to the above, this implies the following:
\begin{corollary}
The parameter $\thp$ is not multiplicative with respect to the disjunctive product.
\end{corollary}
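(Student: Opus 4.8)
The plan is to deduce the corollary as a one-line instantiation of the preceding lemma, which already supplies the strict inequality $\thp(G \djp \overline{G}) < \thp(G)\thp(\overline{G})$ for any vertex-transitive $G$ satisfying $\thm(G) < \vartheta(G)$. To establish non-multiplicativity of $\thp$ under $\djp$ it is enough to produce a single witness pair of graphs violating equality, so all that remains is to exhibit one vertex-transitive graph that strictly separates the Schrijver and Lov\'asz numbers.

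First I would invoke Schrijver's graph $G_S$ described above, namely the Hamming graph on the $0$-$1$-strings of length six in which two strings are adjacent whenever their Hamming distance is at most three. As already recorded, $G_S$ is vertex transitive and satisfies $\thm(G_S) < \vartheta(G_S)$, so it meets both hypotheses of the preceding lemma. Applying that lemma with $G = G_S$ gives $\thp(G_S \djp \overline{G_S}) < \thp(G_S)\thp(\overline{G_S})$, and hence the pair $(G_S,\overline{G_S})$ witnesses the failure of multiplicativity, which is exactly the assertion of the corollary.

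Since the lemma does all the real work, there is no genuine obstacle at this stage; the only point to check is that $G_S$ genuinely satisfies both hypotheses, and both the vertex transitivity and the strict gap $\thm(G_S) < \vartheta(G_S)$ were established when $G_S$ was introduced (the latter being Schrijver's original computation in~\cite{1056072}). The substantive difficulty was handled earlier---in proving the preceding lemma through Szegedy's reciprocity relation $\thm(G)\thp(\overline{G}) = |V(G)|$ for vertex-transitive graphs (\cref{lem:szegedyrecip}), and, more fundamentally, in the existence of a vertex-transitive graph exhibiting a strict Lov\'asz--Schrijver gap in the first place.
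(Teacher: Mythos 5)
Your proof is correct and is exactly the paper's argument: instantiate the preceding lemma with Schrijver's vertex-transitive graph $G_S$ (which satisfies $\thm(G_S) < \vartheta(G_S)$) to get the strict inequality $\thp(G_S \djp \overline{G_S}) < \thp(G_S)\thp(\overline{G_S})$, witnessing the failure of multiplicativity.
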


Even though $\thm$ is not multiplicative with respect to the strong product, nor is $\thp$ with
respect to the disjunctive product, one could ask whether they are at least multiplicative with
respect to the corresponding graph powers, as this would be enough to prove an analogue of
\cref{thm:ent_cost_rate_bound}.  It turns out that they are not, as we now show.
Non-multiplicativity for $\thm$ was shown already in~\cite{bomze2010gap} but with a much
smaller gap.

\begin{corollary}
    The parameter $\thm$ is not multiplicative under strong graph powers $G^{\boxtimes n}$, and
    $\thp$ is not multiplicative under disjunctive graph powers $G^{\djp n}$.
\end{corollary}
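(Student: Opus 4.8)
The plan is to promote the single-product counterexamples already established (\cref{lem:thmcounter} and its $\thp$-analog, both witnessed by Schrijver's vertex-transitive graph $G_S$) to statements about powers of a single graph. For the strong product this conversion is clean; for the disjunctive product it is not, and that asymmetry is the crux.

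For $\thm$ I would take $G = G_S \sqcup \overline{G_S}$ (disjoint union) and examine $G^{\boxtimes 2}$. First I record two structural facts: the strong product distributes over disjoint union, so $G^{\boxtimes 2} = (G_S \boxtimes G_S) \sqcup (G_S \boxtimes \overline{G_S}) \sqcup (\overline{G_S} \boxtimes G_S) \sqcup (\overline{G_S} \boxtimes \overline{G_S})$; and $\thm$ is additive over disjoint union (a standard property, verifiable from the SDP since disjoint union corresponds to a direct sum of feasible solutions). Applying super-multiplicativity $\thm(X \boxtimes Y) \ge \thm(X)\thm(Y)$ to the two diagonal components and the strict inequality of \cref{lem:thmcounter} to the two cross components $G_S \boxtimes \overline{G_S}$ gives
\[ \thm(G^{\boxtimes 2}) \ge \thm(G_S)^2 + \thm(\overline{G_S})^2 + 2\,\thm(G_S \boxtimes \overline{G_S}) > \big(\thm(G_S) + \thm(\overline{G_S})\big)^2 = \thm(G)^2, \]
so $\thm$ is not multiplicative under strong powers, and the gap is large.

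For $\thp$ the naive dual would replace disjoint union by the join (over which $\djp$ distributes) and use $G = G_S \bowtie \overline{G_S}$. This fails: $\thp$ equals the \emph{maximum}, not the sum, over a join, so $\thp(G^{\djp 2})$ is a maximum of block values, one of which is a diagonal term $\thp(\overline{G_S}^{\djp 2})$ whose strict submultiplicativity is exactly what we want to prove, making the argument circular. Instead I would dualize through Szegedy reciprocity (\cref{lem:szegedyrecip}): since $G_S$, and hence every $G_S^{\boxtimes n}$ and $\overline{G_S}^{\djp n}$, is vertex transitive, reciprocity holds with equality, giving $\thp(\overline{G_S}^{\djp n}) = |V(G_S)|^n / \thm(G_S^{\boxtimes n})$. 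Thus $\thp(\overline{G_S}^{\djp n}) < \thp(\overline{G_S})^n$ is equivalent to $\thm(G_S^{\boxtimes n}) > \thm(G_S)^n$.

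This reduces the $\thp$ statement to strict super-multiplicativity of $\thm$ on a strong power of the \emph{vertex-transitive} graph $G_S$, and that is the main obstacle. The clean disjoint-union witness above is not vertex transitive, so it does not transfer through reciprocity; moreover the ``diagonal independent set'' trick that powers \cref{lem:thmcounter} always forces $\thm = \vartheta$ on the relevant graph, leaving no room for a strict gap at $n=2$. I would therefore argue asymptotically: since $\thm(G_S) = \alpha(G_S) < \vartheta(G_S)$, it suffices to show $\lim_n \thm(G_S^{\boxtimes n})^{1/n} > \thm(G_S)$, and because $\thm \ge \alpha$ this follows once the Shannon capacity obeys $\Theta(G_S) > \alpha(G_S)$, i.e. $\alpha(G_S^{\boxtimes n}) > \alpha(G_S)^n = \thm(G_S)^n$ for some $n$. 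Establishing this strict super-multiplicativity of the independence number for the Hamming graph $G_S$ is the step I expect to require genuine work; granting it, the same $n$ simultaneously provides a vertex-transitive witness for the $\thm$ statement and, through reciprocity, the desired inequality $\thp(\overline{G_S}^{\djp n}) < \thp(\overline{G_S})^n$.
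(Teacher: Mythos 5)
Your first half (non-multiplicativity of $\thm$ under strong powers) is correct and is essentially identical to the paper's proof: the same witness $G_S \oplus \overline{G_S}$, additivity of $\thm$ over disjoint union, super-multiplicativity under $\boxtimes$ for the diagonal blocks, and \cref{lem:thmcounter} for the cross blocks. The second half, however, has a genuine gap. Your reduction through \cref{lem:szegedyrecip} is valid as far as it goes: $G_S^{\boxtimes n}$ is vertex transitive and $\overline{G_S}^{\djp n} = \overline{G_S^{\boxtimes n}}$, so $\thp(\overline{G_S}^{\djp n}) < \thp(\overline{G_S})^n$ is indeed equivalent to $\thm(G_S^{\boxtimes n}) > \thm(G_S)^n$. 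But the fact you then need --- $\alpha(G_S^{\boxtimes n}) > \alpha(G_S)^n$ for some $n$, i.e.\ that the Shannon capacity of Schrijver's graph strictly exceeds its independence number --- is neither proved by you nor available to cite; it is precisely the kind of strict super-multiplicativity question for $\alpha$ that is notoriously hard, and nothing in the paper's toolkit settles it. (Note also that the paper's own $\thm$ counterexample cannot be recycled here, because its witness, the disjoint union, is not vertex transitive, so reciprocity does not apply to it with equality.) As written, the $\thp$ statement remains unproved.

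The gap is avoidable, and the paper avoids it using the very same disjoint-union witness $H = G_S \oplus \overline{G_S}$ you used for $\thm$. You rejected the union route because $\djp$ does not distribute over disjoint union, and then switched to the join, where the max-rule makes the argument circular. But distributivity is not needed: one only needs an \emph{upper} bound on $\thp(H^{\djp 2})$, and the graph $G_S^{\boxtimes 2} \oplus (G_S \djp \overline{G_S}) \oplus (\overline{G_S} \djp G_S) \oplus \overline{G_S}^{\boxtimes 2}$ is a spanning subgraph of $H^{\djp 2}$, so deleting the cross-block edges (and weakening the diagonal blocks from $\djp$ to $\boxtimes$) can only increase $\thp$. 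Combining this with sub-additivity over disjoint union, sub-multiplicativity of $\thp$ under $\boxtimes$ on the diagonal blocks (for $\thp$ it is the known sub-multiplicative direction that the diagonal blocks require), and the strict inequality of the $\thp$ analogue of \cref{lem:thmcounter} on the cross blocks, one gets
\begin{align*}
    \thp(H^{\djp 2}) &\le \thp(G_S)^2 + \thp(G_S \djp \overline{G_S})
    \\ &\hphantom{\le{}} + \thp(\overline{G_S} \djp G_S) + \thp(\overline{G_S})^2
    \\ &< \left[\thp(G_S)+\thp(\overline{G_S})\right]^2 = \thp(H)^2.
\end{align*}
So subgraph monotonicity substitutes for the distributivity you found lacking, and no claim about strong powers of $G_S$ itself is ever needed.
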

\begin{proof}
    Let $G_S$ be a vertex transitive graph such that $\thm(G_S) < \vartheta(G_S)$, whose
    existence was discussed above.
    Let $H = G_S \oplus \overline{G_S}$ where $\oplus$ denotes disjoint union.
    Since $\thm$ is additive under disjoint union and is super-multiplicative under the strong
    product,
    \begin{align*}
       \thm(H^{\boxtimes 2})
        &= \thm[ G_S^{\boxtimes 2} \oplus (G_S \boxtimes \overline{G_S})
            \oplus (\overline{G_S} \boxtimes G_S) \oplus \overline{G_S}^{\boxtimes 2} ]
        \\ &= \thm(G_S^{\boxtimes 2}) + \thm(G_S \boxtimes \overline{G_S})
            \\ &\phantom{=} + \thm(\overline{G_S} \boxtimes G_S) + \thm(\overline{G_S}^{\boxtimes 2})
        \\ &\ge \thm(G_S)^2 + \thm(G_S \boxtimes \overline{G_S})
            \\ &\phantom{=} + \thm(\overline{G_S} \boxtimes G_S) + \thm(\overline{G_S})^2
        \\ &> \thm(G_S)^2 + \thm(G_S) \thm(\overline{G_S})
            \\ &\phantom{=} + \thm(\overline{G_S}) \thm(G_S) + \thm(\overline{G_S})^2
        \\ &= [ \thm(G_S) + \thm(\overline{G_S} ]^2
        \\ &= \thm(H)^2.
    \end{align*}
    Similarly,
    \begin{align}
       \thp(H^{\djp 2})
        &= \thp[ (G_S \oplus \overline{G_S}) \djp (G_S \oplus \overline{G_S}) ] \notag
        \\ &\le \thp[ G_S^{\boxtimes 2} \oplus (G_S \djp \overline{G_S}) \notag
            \\ &\phantom{=} \oplus (\overline{G_S} \djp G_S) \oplus \overline{G_S}^{\boxtimes 2} ]
            \label{eq:used_thp_subg_mon}
        \\ &\le \thp(G_S)^2 + \thp(G_S \djp \overline{G_S}) \notag
            \\ &\phantom{=} + \thp(\overline{G_S} \djp G_S) + \thp(\overline{G_S})^2 \notag
        \\ &< \thp(G_S)^2 + \thp(G_S) \thp(\overline{G_S}) \notag
            \\ &\phantom{=} + \thp(\overline{G_S}) \thp(G_S) + \thp(\overline{G_S})^2 \notag
        \\ &= [ \thp(G_S) + \thp(\overline{G_S} ]^2 \notag
        \\ &= \thp(H)^2. \notag
    \end{align}
    where~\eqref{eq:used_thp_subg_mon} follows from the fact that $\thp(G_1) \ge \thp(G_2)$ when
    $G_1$ is a subgraph of $G_2$.
\end{proof}

\subsection{\texorpdfstring{$\thm$}{Schrijver's number} and the disjunctive product}

Though $\thm$ is not multiplicative with respect to the strong product, we are able to
show that it is multiplicative with respect to the disjunctive product and the
lexicographical product.
The lexicographical product $G[H]$ has vertices $V(G) \times V(H)$ and edges
$(x,y) \sim (x',y')$ if $x \sim_G x'$ or $(x=x' \textrm{ and } y \sim_H y')$.

\begin{theorem}
    Schrijver's number is multiplicative under the disjunctive and the lexicographical products:
    $\thm(G \djp H) = \thm(G[H]) = \thm(G) \thm(H)$.
    \label{thm:thmmulti}
\end{theorem}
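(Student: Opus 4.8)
The plan is to prove both equalities at once by sandwiching the quantities $\thm(G \djp H)$ and $\thm(G[H])$ between the common value $\thm(G)\thm(H)$ from both sides. For the lower bound I will simply invoke the inequality $\thm(G \djp H) \ge \thm(G)\thm(H)$ already recorded in this appendix (obtained by tensoring optimal primal solutions, adapting Lov\'asz' argument). The lexicographic product is a spanning subgraph of the disjunctive product: every edge of $G[H]$, which has the form $x \simG x'$ or else $x=x'$ with $y \simH y'$, is also an edge of $G \djp H$, while $G \djp H$ carries the additional edges with $x \ne x'$, $x \not\simG x'$ and $y \simH y'$. Since $\thm$ is monotone under edge deletion (the primal maximization for $\thm$ imposes a constraint $X_{ij}=0$ for each edge, so deleting edges can only enlarge the feasible set and hence increase the value), this yields $\thm(G[H]) \ge \thm(G \djp H) \ge \thm(G)\thm(H)$.

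The crux is therefore the matching upper bound $\thm(G[H]) \le \thm(G)\thm(H)$, which I will prove by exhibiting an explicit feasible point of the minimization program for $\thm(G[H])$. Writing \eqref{eq:thm_min_sdp} for the complement gives the description $\thm(F) = \min\{\lambda : Z \succeq 0,\ Z_{ii} = \lambda-1,\ Z_{ij} \le -1 \text{ for } i \ne j \text{ with } i \not\sim_F j\}$. Let $Z^G$ and $Z^H$ be optimal for $\thm(G)=\mu$ and $\thm(H)=\nu$, so that $Z^G, Z^H \succeq 0$, their diagonals equal $\mu-1$ and $\nu-1$, and their off-diagonal non-edge entries are $\le -1$. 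On the vertex set $V(G) \times V(H)$, indexed as $(x,y)$, I will take the \emph{asymmetric} combination
\[ Z := Z^G \ot J + \mu\,(I \ot Z^H), \]
where $J$ is the all-ones matrix on $V(H)$ and $I$ is the identity on $V(G)$.

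Positive semidefiniteness is immediate: $Z^G \ot J \succeq 0$ as a tensor product of positive semidefinite matrices ($J$ is rank-one positive semidefinite), and $\mu\,(I \ot Z^H) \succeq 0$ since $\mu = \thm(G) \ge 1 > 0$; a sum of positive semidefinite matrices is positive semidefinite. It then remains to check the diagonal and the non-edge entries. The diagonal entry at $(x,y)$ is $Z^G_{xx} + \mu Z^H_{yy} = (\mu-1) + \mu(\nu-1) = \mu\nu - 1$, matching value $\lambda = \mu\nu$. For a non-edge of $G[H]$ there are two cases. If $x=x'$ and $y \not\simH y'$ with $y \ne y'$, the entry is $Z^G_{xx} + \mu Z^H_{yy'} = (\mu-1) + \mu Z^H_{yy'} \le (\mu-1) - \mu = -1$, using $Z^H_{yy'} \le -1$ and $\mu>0$. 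If $x \ne x'$ and $x \not\simG x'$ (for arbitrary $y,y'$), the $I \ot Z^H$ term vanishes and the entry equals $Z^G_{xx'} \le -1$. Hence $Z$ is feasible with value $\mu\nu$, giving $\thm(G[H]) \le \mu\nu = \thm(G)\thm(H)$.

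Combining this with the earlier chain forces $\thm(G)\thm(H) \le \thm(G \djp H) \le \thm(G[H]) \le \thm(G)\thm(H)$, so all three coincide, which is the claim. The one genuinely non-routine ingredient, and the step I expect to be the main obstacle, is the construction of $Z$: the symmetric tensor $Z^G \ot Z^H$ (equivalently $(Z^G+J)\ot(Z^H+J)$) fails, since on cross-fiber non-edges of $G[H]$ it produces a product of two negative entries and so violates the $\le -1$ requirement. The resolution is precisely the asymmetric combination above, in which the cross-fiber blocks are controlled by $Z^G$ alone (through the $J$ factor) while the within-fiber blocks pick up the $\mu Z^H$ correction, exactly matching the two species of non-edge created by vertex substitution.
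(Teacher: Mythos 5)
Your proposal is correct, and while the outer sandwich structure (tensoring optimal primal solutions for $\thm(G \djp H) \ge \thm(G)\thm(H)$, then subgraph monotonicity to get $\thm(G[H]) \ge \thm(G \djp H)$) coincides with the paper's, your handling of the decisive upper bound $\thm(G[H]) \le \thm(G)\thm(H)$ takes a genuinely different route. The paper works on the primal side: it takes an optimal $B$ for the maximization~\eqref{eq:thm_max_sdp} on $G[H]$, slices it into fiber blocks $B^x = (\bra{x} \ot I) B (\ket{x} \ot I)$ and the contraction $B' = (I \ot \bra{\mathbf{1}}) B (I \ot \ket{\mathbf{1}})$, checks that each normalized slice $B^x/\Tr B^x$ is feasible for $\thm(H)$ and that $B'/\Tr B'$ is feasible for $\thm(G)$, and chains the resulting inequalities. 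You instead work on the dual side, exhibiting the explicit certificate $Z = Z^G \ot J + \mu\,(I \ot Z^H)$ for the minimization~\eqref{eq:thm_min_sdp} applied to the complement of $G[H]$; your asymmetric combination is exactly what the substitution structure of the lexicographic product demands, since within-fiber non-edges ($x=x'$, $y \not\simH y'$) are controlled by the $\mu Z^H$ term while cross-fiber non-edges ($x \ne x'$, $x \not\simG x'$) are controlled by $Z^G$ alone, and your case analysis of the non-edges is complete. Your argument is more constructive and self-contained (one matrix, three verifications), whereas the paper's decomposition requires no guessing of a certificate and makes visible where an arbitrary optimal solution for $G[H]$ "factors" through solutions for $G$ and $H$. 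One small point you should make explicit: you need the optimum of~\eqref{eq:thm_min_sdp} to be attained in order to pick $Z^G$ and $Z^H$ (the paper makes the same implicit assumption for its programs); alternatively, running your construction with feasible solutions of value $\mu+\epsilon$ and $\nu+\epsilon$ and letting $\epsilon \to 0$ removes even that concern.
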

\begin{proof}
    We use the following formulation for Schrijver's number:
    \begin{align}
        \thm(G) &= \max\{ \ip{B,J} : B \succeq 0, \Tr B = 1,
        \notag \\ &\hphantom{= \max\{\;} B_{ij} \ge 0 \textrm{ for all } i,j,
        \notag \\ &\hphantom{= \max\{\;} B_{ij}=0 \textrm{ for } i \sim j \}.
        \label{eq:thm_max_sdp}
    \end{align}
    It is easy to show that $\thm(G \djp H) \ge \thm(G)\thm(H)$: if $B_G$ and $B_H$ are
    optimal solutions of~\eqref{eq:thm_max_sdp} for $\thm(G)$ and
    $\thm(H)$ then $B_G \ot B_H$ is feasible for~\eqref{eq:thm_max_sdp} for
    $\thm(G \djp H)$ with value $\thm(G)\thm(H)$.
    Since $G[H]$ is a subgraph of $G \djp H$, we have $\thm(G[H]) \ge \thm(G \djp H)$.
    It remains only to show $\thm(G[H]) \le \thm(G)\thm(H)$.

    Let $B$ be an optimal solution for~\eqref{eq:thm_max_sdp} for $\thm(G[H])$.
    This can be considered as an operator $B \in \opV{G} \ot \opV{H}$,
    and we have $\ip{B, J_G \ot J_H} = \thm(G[H])$ where $J_G$ is the all ones matrix
    indexed by $V(G)$ and similarly for $J_H$.
    Note that $B_{xyx'y'}=0$ when $x \sim_G x'$ or $(x=x' \textrm{ and } y \sim_H y')$.
    For $x \in V(G)$ let $\ket{x}$ denote the corresponding basis vector
    in $\mathbb{C}^{\abs{V(G)}}$ and define
    \begin{align*}
        B^x = (\bra{x} \ot I) B (\ket{x} \ot I) \in \opV{H}.
    \end{align*}
    Since $B^x \succeq 0$, $B^x_{yy'} \ge 0$ for all $y,y'$, and
    $B^x_{yy'} = 0$ when $y \simH y'$, it holds that
    $B^x/\Tr B^x$ is feasible for~\eqref{eq:thm_max_sdp} for $\thm(H)$.
    The value of this solution is $\ip{B^x,J_H}/\Tr B^x$, thus
    \begin{align*}
        \frac{\ip{B^x,J_H}}{\Tr B^x} \le \thm(H).
    \end{align*}
    Also, $\sum_x \Tr B^x = \Tr B = 1$, giving
    \begin{align*}
        \sum_x \ip{B^x,J_H} \le \sum_x \thm(H) \Tr B^x = \thm(H).
    \end{align*}
    Define $B' = (I \ot \bra{\textbf{1}}) B (I \ot \ket{\textbf{1}}) \in \opV{G}$
    where $\ket{\textbf{1}}$ is the all ones vector.
    Note that $B' = \Tr_H\{ (I \ot J_H) B \}$.
    Since $B' \succeq 0$, $B'_{xx'} \ge 0$ for all $x,x'$, and
    $B'_{xx'} = 0$ when $x \simG x'$, it holds that
    $B'/\Tr B'$ is feasible for~\eqref{eq:thm_max_sdp} for $\thm(G)$.
    The value of this solution is $\ip{B',J_G}/\Tr B'$, thus
    $\ip{B',J_G}/\Tr B' \le \thm(G)$.
    Finally,
    \begin{align*}
        \thm(G[H]) &= \ip{B, J_G \ot J_H} = \ip{B', J_G}
        \\ &\le \thm(G) \Tr B'
        \\ &= \thm(G) \sum_x \ip{B^x, J_H}
        \\ &\le \thm(G) \thm(H).
    \end{align*}
\end{proof}

\subsection{What About \texorpdfstring{$\thp$}{Szegedy's number}?}

Based on other results concerning $\thm$ and $\thp$, \cref{thm:thmmulti} seems to
suggest that one should be able to prove that $\thp$ is multiplicative with respect to the
strong product. We already noted above that one of the needed inequalities, namely $\thp(G
\boxtimes H) \le \thp(G)\thp(H)$, does hold, so we would only need to show that $\thp(G
\boxtimes H) \ge \thp(G)\thp(H)$ holds as well. For now, a proof of this fact eludes us,
but we are able to prove the multiplicativity of $\thp$ in the case of vertex transitive
graphs using \cref{lem:szegedyrecip} and the multiplicativity of $\thm$ with respect to the disjunctive product.

\begin{theorem}
If $G$ and $H$ are vertex transitive, then 
\[\thp(G \boxtimes H) = \thp(G) \thp(H).\]
\end{theorem}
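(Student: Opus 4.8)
The plan is to convert the Szegedy number into the Schrijver number of the complementary graph via the reciprocity of \cref{lem:szegedyrecip}, which holds \emph{with equality} precisely for vertex transitive graphs, and then to exploit the multiplicativity of $\thm$ under the disjunctive product established in \cref{thm:thmmulti}. Since the upper bound $\thp(G \boxtimes H) \le \thp(G)\thp(H)$ has already been noted above, this approach in fact delivers equality directly, so both inequalities come for free.

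First I would record that the strong product of vertex transitive graphs is again vertex transitive, so $G \boxtimes H$ is vertex transitive whenever $G$ and $H$ are. Applying \cref{lem:szegedyrecip} with equality to $\Gc$ (which is vertex transitive exactly when $G$ is) yields, for any vertex transitive graph $G$, the identity
\begin{align*}
    \thp(G) = \frac{\abs{V(G)}}{\thm(\Gc)}.
\end{align*}
In particular the same holds for the vertex transitive graph $G \boxtimes H$, giving
\begin{align*}
    \thp(G \boxtimes H) = \frac{\abs{V(G)}\,\abs{V(H)}}{\thm(\overline{G \boxtimes H})}.
\end{align*}

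Next I would rewrite the complement of the strong product using the identity $\overline{G \djp H} = \Gc \boxtimes \Hc$ recorded earlier: substituting $\Gc$ and $\Hc$ for $G$ and $H$ and taking complements yields $\overline{G \boxtimes H} = \Gc \djp \Hc$. Then \cref{thm:thmmulti} gives $\thm(\Gc \djp \Hc) = \thm(\Gc)\thm(\Hc)$, and substituting back,
\begin{align*}
    \thp(G \boxtimes H)
    = \frac{\abs{V(G)}}{\thm(\Gc)}\cdot\frac{\abs{V(H)}}{\thm(\Hc)}
    = \thp(G)\,\thp(H),
\end{align*}
where the final equality applies the reciprocity identity once more, separately to $G$ and to $H$.

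The argument is essentially bookkeeping once these three ingredients are assembled, so I do not anticipate a serious obstacle. The two points requiring care are verifying that every graph to which \cref{lem:szegedyrecip} is applied is genuinely vertex transitive (both the factors and their strong product), and orienting the complementation identity $\overline{G \boxtimes H} = \Gc \djp \Hc$ correctly so that it is the multiplicativity of $\thm$ under the \emph{disjunctive} product—rather than multiplicativity under the strong product, which fails—that is invoked.
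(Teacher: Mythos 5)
Your proposal is correct and follows essentially the same route as the paper's own proof: vertex transitivity of $G \boxtimes H$, the equality case of \cref{lem:szegedyrecip} applied to complements, the identity $\overline{G \boxtimes H} = \Gc \djp \Hc$, and the multiplicativity of $\thm$ under the disjunctive product from \cref{thm:thmmulti}. The only cosmetic difference is that you spell out the complementation step and the per-factor reciprocity explicitly, which the paper compresses into a single displayed chain of equalities.
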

\begin{proof}
Since $G$ and $H$ are vertex transitive, so is $G \boxtimes H$. Therefore
\begin{align*}
    \thp(G \boxtimes H) &= \frac{|V(G)|\cdot |V(H)|}{\thm(\overline{G} \djp \overline{H})}
    = \frac{|V(G)|\cdot |V(H)|}{\thm(\overline{G}) \thm(\overline{H})}
    \\ &= \thp(G) \thp(H).
\end{align*}
\end{proof}

This seems to be pretty strong evidence that $\thp$ is multiplicative with respect to the strong product in general.

\subsection{Projective Rank}

The \emph{projective rank} of a graph, $\xi_f(G)$, is the infimum of $d/r$ such that the
vertices of a graph can be assigned rank-$r$ projectors in $\mathbb{C}^d$ such that
adjacent vertices have orthogonal projectors.  Such an assignment is called a
$d/r$-\textit{representation}. The `$f$' subscript in the notation for projective rank indicates that it can be thought of as a fractional version of orthogonal rank: the minimum dimension of an assignment of vectors such that adjacent vertices receive orthogonal vectors.
We will show $\xi_f$ to be multiplicative under both the disjunctive and the lexicographical
products.
As a reminder, the lexicographical product $G[H]$ has edges
$(x,y) \sim (x',y')$ if $x \sim_G x'$ or $(x=x' \textrm{ and } y \sim_H y')$.

\begin{theorem}
    Projective rank is multiplicative under the disjunctive and the lexicographical products:
    $\xi_f(G \djp H) = \xi_f(G[H]) = \xi_f(G) \xi_f(H)$.
    \label{thm:xif_mult}
\end{theorem}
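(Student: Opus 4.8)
The plan is to establish the two equalities at once by proving the cyclic chain
\[ \xi_f(G)\xi_f(H) \le \xi_f(G[H]) \le \xi_f(G \djp H) \le \xi_f(G)\xi_f(H), \]
which forces all three quantities to coincide. The middle inequality is essentially free: every edge of $G[H]$ is also an edge of $G \djp H$, so $G[H]$ is a spanning subgraph of $G \djp H$. Consequently any $d/r$-representation of $G \djp H$ satisfies \emph{a fortiori} the fewer orthogonality constraints imposed by $G[H]$, and taking the infimum over representations yields $\xi_f(G[H]) \le \xi_f(G \djp H)$.

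For the rightmost inequality I would use a Kronecker-product construction. Given a $d_1/r_1$-representation $\{P_x\}$ of $G$ and a $d_2/r_2$-representation $\{Q_y\}$ of $H$, assign to the vertex $(x,y)$ of $G \djp H$ the projector $P_x \otimes Q_y$, of rank $r_1 r_2$ in $\mathbb{C}^{d_1 d_2}$. Whenever $(x,y) \sim (x',y')$ in $G \djp H$ we have $x \sim_G x'$ or $y \sim_H y'$, so one of $P_x P_{x'}$, $Q_y Q_{y'}$ vanishes and hence $(P_x \otimes Q_y)(P_{x'} \otimes Q_{y'}) = P_x P_{x'} \otimes Q_y Q_{y'} = 0$. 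This gives a $(d_1 d_2)/(r_1 r_2)$-representation of $G \djp H$; taking the infimum over both factors yields $\xi_f(G \djp H) \le \xi_f(G)\xi_f(H)$.

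The real content is the leftmost inequality $\xi_f(G)\xi_f(H) \le \xi_f(G[H])$. Starting from an arbitrary $d/r$-representation $\{R_{x,y}\}$ of $G[H]$, I would exploit the two kinds of edges separately. Fixing $x$ and letting $y$ vary, the relation $(x,y) \sim (x,y')$ holds in $G[H]$ exactly when $y \sim_H y'$, so each fiber $\{R_{x,y}\}_y$ is a representation of $H$. Let $W_x = \sum_y \img(R_{x,y})$, let $\Pi_x$ be the orthogonal projector onto $W_x$, and put $\rho_x = \dim W_x$; restricting the fiber to $W_x$ exhibits a $\rho_x/r$-representation of $H$, whence $\rho_x \ge r\,\xi_f(H)$. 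On the other hand, when $x \sim_G x'$ every pair $R_{x,y}, R_{x',y'}$ is orthogonal, so $W_x \perp W_{x'}$ and therefore $\Pi_x \Pi_{x'} = 0$; thus $\{\Pi_x\}$ is an orthogonal representation of $G$.

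The one obstacle is that the $\Pi_x$ need not share a common rank, whereas a projective representation requires one. I would resolve this by replacing each $\Pi_x$ with an arbitrary rank-$r'$ sub-projector $\Pi_x' \le \Pi_x$, where $r' = \min_x \rho_x$; orthogonality for adjacent $x$ survives because the images still lie in the mutually orthogonal subspaces $W_x$. This produces a genuine $d/r'$-representation of $G$, so $\xi_f(G) \le d/r' \le (d/r)/\xi_f(H)$, that is, $\xi_f(G)\xi_f(H) \le d/r$. Taking the infimum over all representations of $G[H]$ closes the chain. The delicate point throughout is the per-fiber estimate $\rho_x \ge r\,\xi_f(H)$, which is precisely what upgrades a crude dimension count into the exact factor $\xi_f(H)$.
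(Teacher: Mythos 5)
Your proof is correct and follows essentially the same route as the paper's: the tensor-product construction for $\xi_f(G \djp H) \le \xi_f(G)\xi_f(H)$, the fiber-wise argument (spans $W_x$, the estimate $\rho_x \ge r\,\xi_f(H)$, and truncation to a common rank $r' = \min_x \rho_x$) for $\xi_f(G[H]) \ge \xi_f(G)\xi_f(H)$, and subgraph monotonicity to close the cycle. The paper's proof phrases the truncation in terms of choosing $r'$-dimensional subspaces $V_x \subseteq \linspan\{U_{xy} : y\}$ rather than sub-projectors, but this is the same idea.
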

\begin{proof}
    A $d_1/r_1$-representation for $G$ and a $d_2/r_2$-representation
    for $H$ can be turned into a $d_1 d_2/r_1 r_2$-representation for $G \djp H$ by taking the
    tensor products of the projectors associated with each graph.
    So $\xi_f(G \djp H) \le \xi_f(G) \xi_f(H)$.

    On the other hand, let $U_{xy}$ for $x \in V(G), y \in V(H)$ be the subspaces associated with
    a $d/r$-representation of $G[H]$.
    For each $x$, the subspaces $\{ U_{xy} : y \}$ form an $r'_x/r$ projective representation
    of $H$ where $r'_x$ is the dimension of $\linspan\{ U_{xy} : y \}$, so it must hold that
    $r'_x/r \ge \xi_f(H)$.
    Let $r' = \min\{r'_x\}$ and for each $x$ let $V_x$ be an $r'$ dimensional subspace
    of $\linspan\{ U_{xy} : y \}$.
    These form a $d/r'$ representation of $G$, so $d/r' \ge \xi_f(G)$.
    Then, $d/r = (d/r')(r'/r) \ge \xi_f(G) \xi_f(H)$
    so $\xi_f(G[H]) \ge \xi_f(G) \xi_f(H)$.
    Since $G[H] \subseteq G \djp H$ we have
    $\xi_f(G[H]) \ge \xi_f(G) \xi_f(H) \ge \xi_f(G \djp H) \ge \xi_f(G[H])$.
\end{proof}

\section{An if-and-only-if for Schrijver's number}
\label[secinapp]{sec:schrijver_iff}

Monotonicity of Schrijver's number admits an if-and-only-if statement along the lines of
\cref{thm:th_homb}; however, the corresponding conditions on the $\ket{w_s^x}$ vectors are a
bit more complicated and there is seemingly no direct connection to entanglement assisted
source-channel coding.
Specifically, we have the following result:
\begin{theorem}
    \label{thm:sch_mon_iff}
    $\thmbar(G) \le \thmbar(H)$ if and only if there are vectors $\ket{w} \ne 0$
    and $\ket{w_s^x} \in
    \mathbb{C}^d$ for each $x \in V(G)$, $s \in V(H)$, for some $d \in \mathbb{N}$, such that
    \begin{enumerate}
        \item $\sum_s \ket{w_s^x} = \ket{w}$
        \item $\braket{w_s^x}{w_t^y} = 0$ for $s \not\simH t$, $s \ne t$
        \item $\braket{w_s^x}{w_s^y} \le 0$ for $x \simG y$
        \item $\braket{w_s^x}{w_t^x} = 0$ for $s \ne t$
        \item $\braket{w_s^x}{w_t^y} \ge 0$ for $s \ne t$.
    \end{enumerate}
\end{theorem}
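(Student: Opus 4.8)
The plan is to translate the five vector conditions into statements about the Gram matrix $C_{xyst} := \braket{w_s^x}{w_t^y}$, exactly as in \cref{thm:gram}, and then mirror the two directions of \cref{thm:th_homb}, adjusting for the fact that conditions 3 and 5 are now one-sided inequalities. After normalizing $\braket{w}{w}=1$, the conditions become: $C \succeq 0$; $\sum_{st} C_{xyst}=1$; $C_{xyst}=0$ for $s\not\simH t$ and $s\ne t$; $C_{xyss}\le 0$ for $x\simG y$; $C_{xxst}=0$ for $s\ne t$; and $C_{xyst}\ge 0$ for $s\ne t$. As in \cref{thm:gram}, the only nontrivial part of this equivalence is recovering the common vector $\ket{w}$ from the normalization, which follows from the same Cauchy--Schwarz argument.

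For the ($\Longleftarrow$) direction I would reuse the construction from \cref{thm:gives_homv}. Set $\lambda=\thmbar(H)$ and invoke the max formulation of \cref{thm:thm_max_IT} to obtain $T$ with $\opnorm{I+T}=\lambda$, $I+T\succeq 0$, $T_{st}=0$ for $s\not\simH t$, and $T_{st}\ge 0$ everywhere; the Perron--Frobenius theorem then lets me choose the top eigenvector $\ket{\psi}$ entrywise nonnegative, so that $B=\proj{\psi}\circ(I+T)$ and $D=\proj{\psi}\circ I$ are entrywise nonnegative with $B,\lambda D-B\succeq 0$ and $\ip{D,J}=1$, $\ip{B,J}=\lambda$. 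Since $\lambda\ge\thmbar(G)$, the min program~\eqref{eq:thm_min_sdp} supplies $Z\succeq 0$ with $Z_{xx}=\lambda-1$ and $Z_{xy}\le -1$ for $x\simG y$. Setting $C=\lambda^{-1}[J\ot B+(\lambda-1)^{-1}Z\ot(\lambda D-B)]$, positive semidefiniteness, the normalization (condition~1), and the two vanishing conditions (2 and 4) go through exactly as in \cref{thm:th_homb}. The inequalities 3 and 5 are where the Schrijver-specific structure enters: on the diagonal $s=t$ one finds $C_{xyss}=\lambda^{-1}(1+Z_{xy})\psi_s^2$, which is $\le 0$ precisely because $Z_{xy}\le -1$; off the diagonal $s\ne t$ one finds $C_{xyst}=\lambda^{-1}B_{st}\,[1-(\lambda-1)^{-1}Z_{xy}]$, which is $\ge 0$ because $B_{st}\ge 0$ and $Z_{xy}\le\max\{Z_{xx},Z_{yy}\}=\lambda-1$ by positive semidefiniteness of $Z$.

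For the ($\Longrightarrow$) direction I would take $Z$ optimal for the min program~\eqref{eq:thm_min_sdp} of $\thmbar(H)$, so $Z\succeq 0$, $Z_{ss}=\lambda-1$, $Z_{st}\le -1$ for $s\simH t$, with $\lambda=\thmbar(H)$, and set $Y_{xy}=\sum_{st}Z_{st}C_{xyst}$. As in \cref{thm:th_homb}, $Y\succeq 0$ (a Schur product of the positive semidefinite matrices $J\ot Z$ and $C$, suitably conjugated), and $Y_{xx}=(\lambda-1)\sum_s C_{xxss}=\lambda-1$ using condition 4 and the normalization. The crux is the off-diagonal bound for $x\simG y$. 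I would split $Y_{xy}$ into its $s=t$ and $s\ne t$ parts; the $s\not\simH t$, $s\ne t$ terms vanish by condition 2, leaving $Y_{xy}=(\lambda-1)\sum_s C_{xyss}+\sum_{s\simH t}Z_{st}C_{xyst}$. Applying $Z_{st}\le -1$ and $C_{xyst}\ge 0$ to the edge terms, together with the identity $\sum_{s\simH t}C_{xyst}=1-\sum_s C_{xyss}$ coming from the normalization and the vanishing of the non-edge off-diagonal terms, yields $Y_{xy}\le\lambda\sum_s C_{xyss}-1$; since each $C_{xyss}\le 0$ for $x\simG y$ and $\lambda\ge 0$, this gives $Y_{xy}\le -1$. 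Thus (the real part of) $Y$ is feasible for~\eqref{eq:thm_min_sdp} for $\thmbar(G)$ with value $\lambda$, so $\thmbar(G)\le\thmbar(H)$.

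I expect the main obstacle to be bookkeeping rather than anything conceptual: because conditions 3 and 5 are one-sided, one must carefully separate diagonal from off-diagonal index pairs throughout, and in both directions the decisive sign facts come from combining the one-sided bound on $Z_{xy}$ (namely $Z_{xy}\le -1$ on edges) with the positive-semidefiniteness bound $Z_{xy}\le\lambda-1$ and the entrywise nonnegativity of $B$. Verifying that all these inequalities point the right way, and that the normalization $\sum_{st}C_{xyst}=1$ is inserted at exactly the step where the $s\simH t$ sum is traded for $1-\sum_s C_{xyss}$, is the only delicate part; the degenerate case $\lambda=1$ (i.e.\ $H$ edgeless) is handled separately as in \cref{thm:th_homb}.
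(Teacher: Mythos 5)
Your proof is correct and takes essentially the same route as the paper: the same Gram-matrix reformulation, the same construction $C=\lambda^{-1}\left[J\ot B+(\lambda-1)^{-1}Z\ot(\lambda D-B)\right]$ built from \cref{thm:thm_max_IT} and a Perron--Frobenius eigenvector (with conditions 3 and 5 following from $Z_{xy}\le-1$ and $Z_{xy}\le\lambda-1$ exactly as in the paper), and the same feasible solution $Y_{xy}=\sum_{st}Z_{st}C_{xyst}$ for~\eqref{eq:thm_min_sdp} in the other direction, where your bookkeeping via the identity $\sum_{s\simH t}C_{xyst}=1-\sum_s C_{xyss}$ is an equivalent rearrangement of the paper's term-by-term bound. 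One cosmetic slip: your implication labels are swapped --- constructing the vectors from $\thmbar(G)\le\thmbar(H)$ is the $(\Longrightarrow)$ direction of the statement as written, not $(\Longleftarrow)$.
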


The proof is a straightforward modification of the proof for \cref{thm:th_homb}.
Before proceeding with this, it is necessary to express $\thmbar$ in a form analogous
to~\eqref{eq:th_max_sdp}.
This characterization appears without proof in~\cite{galtman2000spectral}; we give the
proof below.
We do not know how to provide such a formulation for $\thpbar$, so it may be possible that
$\thpbar$ does not admit an if-and-only-if statement along the lines of
\cref{thm:th_homb,thm:sch_mon_iff}.

\begin{theorem}
    \label{thm:thm_max_IT}
    \begin{align}
        \thmbar(G) &= \max\{ \opnorm{I+T} : I + T \succeq 0,
            \notag \\ &\hphantom{= \max\{\;}
            T_{ij} = 0 \textnormal{ for } i \not\sim j,
            \notag \\ &\hphantom{= \max\{\;}
            T_{ij} \ge 0 \textnormal{ for all } i,j \}.
            \label{eq:thm_IT}
    \end{align}
\end{theorem}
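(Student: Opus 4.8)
The plan is to prove \cref{thm:thm_max_IT} by semidefinite duality, treating the right-hand side of~\eqref{eq:thm_IT} as (a reformulation of) the dual of the defining minimization~\eqref{eq:thm_min_sdp} for $\thmbar$, in exact parallel with the way the two Lov\'asz programs~\eqref{eq:th_max_sdp} and~\eqref{eq:th_min_sdp} form a dual pair. The guiding structural observation is that Schrijver's modification --- relaxing the edge equalities $Z_{ij}=-1$ of~\eqref{eq:th_min_sdp} to the inequalities $Z_{ij}\le-1$ in~\eqref{eq:thm_min_sdp} --- should, under duality, be mirrored by the extra sign constraint $T_{ij}\ge0$ in~\eqref{eq:thm_IT}, since an inequality constraint produces a sign-constrained multiplier. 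So the first concrete step is to form the Lagrangian dual of~\eqref{eq:thm_min_sdp}, keeping track of two bookkeeping facts: the non-edge off-diagonal entries of $Z$ are unconstrained, which forces the dual matrix to be supported on the diagonal and the edges of $G$; and the edge inequalities contribute nonnegative multipliers, which is the source of the nonnegativity. This identifies the dual of~\eqref{eq:thm_min_sdp} with the entrywise-nonnegative theta-body maximum $\max\{\langle B,J\rangle : B\succeq0,\ \Tr B=1,\ B\ge 0,\ B_{ij}=0 \text{ for } i\not\simG j\ (i\ne j)\}$, which is just~\eqref{eq:thm_max_sdp} written for $\Gc$ and hence already equals $\thmbar(G)$; strong duality (no gap, attainment) follows from Slater's condition, as $Z=(\lambda-1)I$ with $\lambda$ large is strictly feasible for~\eqref{eq:thm_min_sdp}.

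The remaining and genuinely new content is to show that the operator-norm program~\eqref{eq:thm_IT} computes this same value. The key tool is the Perron--Frobenius theorem, exactly as in the proof of \cref{thm:gives_homv}: because $T_{ij}\ge0$ and $T_{ij}=0$ off the edges, the matrix $I+T$ is entrywise nonnegative, so $\opnorm{I+T}=\lambda_{\max}(I+T)$ is attained by an entrywise nonnegative unit eigenvector $\psi$, giving $\opnorm{I+T}=\langle I+T,\psi\psi^\dag\rangle = 1 + \sum_{i\simG j}T_{ij}\psi_i\psi_j$. For the inequality $\eqref{eq:thm_IT}\le\thmbar$, I would feed a feasible $T$ together with the nonnegative eigenvector $\psi$ into an optimal certificate of the minimization: the nonnegativity of $\psi$ is precisely what lets the sign-controlled diagonal and edge entries dominate the estimate, so that pairing against the edge-supported dual matrix yields $\opnorm{I+T}\le\thmbar$. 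For the reverse inequality I would start from an optimal theta-body matrix $B$ and build a feasible $T$ by transferring $B$'s edge weights into $T$ after a diagonal congruence that normalizes $B$ so the nonnegative Perron vector realizes the value $\langle B,J\rangle$; here $B\ge 0$ guarantees both $T\ge 0$ and, through $I+T\succeq0$, feasibility for~\eqref{eq:thm_IT}.

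The main obstacle is that~\eqref{eq:thm_IT} maximizes the largest eigenvalue, a convex rather than linear functional, so it is not literally a standard-form SDP and its value is not handed to us by duality alone; the bridge to the linear dual program is exactly the Perron--Frobenius positivity of the optimal eigenvector. The second subtlety, which I expect to require the most care, is that~\eqref{eq:thm_min_sdp} leaves the non-edge entries of $Z$ free, so a naive pairing of $I+T$ against an arbitrary feasible $Z$ is contaminated by uncontrolled non-edge terms; this is circumvented by pairing instead against the edge-supported theta-body matrix (equivalently, by selecting the optimal $Z$ via complementary slackness), so that only the sign-controlled diagonal and edge entries enter. Finally one should verify that the constraint $I+T\succeq0$ is active --- it is what keeps the maximum finite, since otherwise scaling $T$ up would drive $\lambda_{\max}(I+T)$ to infinity.
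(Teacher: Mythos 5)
Your proposal is correct and takes essentially the same route as the paper's proof: the paper likewise works through the entrywise-nonnegative dual program \eqref{eq:thm_BJ} (which it simply cites from Schrijver rather than re-deriving by Lagrangian duality), proves $\eqref{eq:thm_BJ} \ge \eqref{eq:thm_IT}$ by forming $B = \proj{\psi}\circ(I+T)$ with $\ket{\psi}$ the Perron--Frobenius (entrywise nonnegative) top eigenvector of $I+T$, and proves the converse by exactly the diagonal congruence you describe, $T = D^{-1/2}(B-D)D^{-1/2}$ with $D$ the diagonal of $B$ and test vector $\psi_i = \sqrt{B_{ii}}$. Two small repairs for the write-up: your claimed Slater point $Z=(\lambda-1)I$ is not feasible for \eqref{eq:thm_min_sdp}, since its edge entries are $0$ rather than $\le -1$; take instead, say, diagonal entries $\lambda-1$ and all off-diagonal entries $-2$ with $\lambda$ large. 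Also, in the congruence step $B$ may have zero diagonal entries, so $D^{-1/2}$ must be taken as the pseudoinverse, using that $B \succeq 0$ forces $B_{ij}=0$ whenever $B_{ii}B_{jj}=0$, as the paper does.
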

\begin{proof}
    The dual to the semidefinite program~\eqref{eq:thm_min_sdp} is~\cite{1056072}
    \begin{align}
        \thmbar(G) &= \max\{ \ip{B,J} : B \succeq 0,
        \notag \\ &\hphantom{= \max\{\;}
        \Tr B = 1,
        \notag \\ &\hphantom{= \max\{\;}
        B_{ij} = 0 \textrm{ for } i \not\sim j, i \ne j,
        \notag \\ &\hphantom{= \max\{\;}
        B_{ij} \ge 0 \textrm{ for all } i,j \}.
        \label{eq:thm_BJ}
    \end{align}
    Let $T$ be the optimal solution for~\eqref{eq:thm_IT}.
    We will show that this induces a feasible solution for~\eqref{eq:thm_BJ} via the recipe
    \begin{align*}
        B = \proj{\psi} \circ (I + T),
    \end{align*}
    where $\ket{\psi}$ is the eigenvector corresponding to the largest eigenvalue of $I+T$.
    This is positive semidefinite (being the Schur--Hadamard product of two positive semidefinite
    matrices), and $\ip{B,J} = \braopket{\psi}{I+T}{\psi} = \lambda$.
    $T_{ii}$ vanishes, so the diagonal of $B$ is equal to the diagonal of
    $\ket{\psi}\bra{\psi}$; consequently $\Tr B = 1$.
    The matrix $I+T$ has nonnegative entries so its eigenvector $\ket{\psi}$
    can be chosen nonnegative, leading to to $B_{ij} \ge 0$.
    So $B$ is feasible for~\eqref{eq:thm_BJ} and \eqref{eq:thm_BJ} $\ge$ \eqref{eq:thm_IT}.

    Conversely, suppose that $B$ is feasible for~\eqref{eq:thm_BJ} with value $\lambda$.
    Let $D$ be the diagonal component of $B$.  Let $D^{-1/2}$ be the diagonal matrix having
    entries $D_{ii} = 1/\sqrt{B_{ii}}$ with the convention $1/0=0$ (note that $D^{-1/2}$ is the
    Moore--Penrose pseudoinverse of $D^{1/2}$).
    Define
    \begin{align*}
        T = D^{-1/2} (B - D) D^{-1/2}.
    \end{align*}
    When $i \not\sim j$, this matrix satisfies $T_{ij}=0$.
    Since $D$ and $B-D$ have nonnegative entries, $T$ does as well.
    We have
    \begin{align}
       I+T &\succeq D^{-1/2} D D^{-1/2} + T \notag
        \\ &= D^{-1/2} B D^{-1/2} \label{eq:IT_vs_DBD}
        \\ &\succeq 0. \notag
    \end{align}
    So $T$ is feasible for~\eqref{eq:thm_IT}.
    Let $\ket{\psi}$ be the vector with coefficients $\psi_i = \sqrt{B_{ii}}$.
    Since $\Tr B = 1$, this is a unit vector.
    Making use of~\eqref{eq:IT_vs_DBD},
    \begin{align}
        \braopket{\psi}{I+T}{\psi} \notag
        &\ge \braopket{\psi}{D^{-1/2} B D^{-1/2}}{\psi}
        \\ &= \sum_{\substack{ij \textrm{ s.t.} \\ B_{ii}B_{jj} \ne 0}} B_{ij} \notag
        \\ &= \sum_{ij} B_{ij} \label{eq:B_off_diag}
        \\ &= \ip{J,B} = \lambda. \notag
    \end{align}
    Equality~\eqref{eq:B_off_diag} holds because $B$ is positive semidefinite and so
    satisfies $B_{ij}=0$ when $B_{ii} B_{jj} = 0$.
    Since $T$ is feasible for~\eqref{eq:thm_IT},
    \begin{align*}
        \eqref{eq:thm_IT} \ge \opnorm{I+T} \ge \lambda = \eqref{eq:thm_BJ}.
    \end{align*}
\end{proof}

\begin{proof}[Proof of \cref{thm:sch_mon_iff}]
    As in the proof of \cref{thm:th_homb}, we work with the Gram matrix of the $\ket{w_s^x}$
    vectors.  The existence of vectors satisfying the conditions in the theorem statement is
    easily seen to be equivalent to the existence of a matrix $C : \opV{G} \ot \opV{H}$
    satisfying
    \begin{align*}
        &\, C \succeq 0
        \\ &\, \sum_{st} C_{xyst} = 1
        \\ &\, C_{xyst} = 0 \textnormal{ for } s \not\sim t, s \ne t
        \\ &\, C_{xyss} \le 0 \textnormal{ for } x \sim y
        \\ &\, C_{xxst} = 0 \textnormal{ for } s \ne t
        \\ &\, C_{xyst} \ge 0 \textnormal{ for } s \ne t
    \end{align*}
    Using this characterization, we proceed with the proof.

    ($\Longrightarrow$):
    Suppose $\thmbar(G) \le \thmbar(H)$.  We will explicitly construct a matrix $C$ having the above
    properties.
    Let $\lambda = \thmbar(H)$.
    By \cref{thm:thm_max_IT} there is a matrix $T$ such that $\opnorm{I+T}=\lambda$,
    $I+T \succeq 0$,
    $T_{st}=0$ for $s \not\sim t$, and $T_{st} \ge 0$ for all $s,t$.
    Let $\ket{\psi}$ be the vector corresponding to the largest eigenvalue of $I+T$,
    which can be chosen nonnegative since $T$ is entrywise nonnegative.
    With $\circ$ denoting the Schur--Hadamard product, define the matrices
    \begin{align*}
        D &= \proj{\psi} \circ I,
        \\ B &= \proj{\psi} \circ (I + T).
    \end{align*}
    These are entrywise nonnegative.
    With $J$ being the all-ones matrix and $\ip{\cdot,\cdot}$ denoting the Hilbert--Schmidt inner
    product, it is readily verified that
    \begin{align*}
        \ip{D, J} &= \braket{\psi}{\psi} = 1, \\
        \ip{B, J} &= \braopket{\psi}{I+T}{\psi} = \lambda.
    \end{align*}
    Schur--Hadamard products between positive semidefinite matrices yield positive semidefinite
    matrices.
    As a consequence, $B \succeq 0$ and
    \begin{align*}
        \opnorm{I+T} = \lambda &\implies \lambda I - (I+T) \succeq 0 \implies \lambda D - B
        \succeq 0.
    \end{align*}

    Since $\lambda \ge \thmbar(G)$, there is a matrix $Z$ such that
    $Z \succeq 0$,
    $Z_{xx} = \lambda-1$ for all $x$, and
    $Z_{xy} \le -1$ for all $x \sim y$.
    Note that \eqref{eq:thm_min_sdp} gives existence of a matrix with $\thmbar(G)-1$ on the diagonal, but
    since $\lambda \ge \thmbar(G)$ we can add a multiple of the identity to get $\lambda-1$ on the
    diagonal.

    We now construct $C$.  Define
    \begin{align*}
        C = \lambda^{-1} \left[ J \ot B + (\lambda-1)^{-1} Z \ot (\lambda D - B) \right].
    \end{align*}
    Since $J$, $B$, $Z$, and $\lambda D - B$ are all positive semidefinite, and $\lambda-1 \ge 0$,
    we have that $C$ is positive semidefinite.
    The other desired conditions on $C$ are easy to verify.
    For all $x,y$ we have
    \begin{align*}
        \sum_{st} C_{xyst}
        &= \lambda^{-1} \left[ \ip{B,J} +
            (\lambda-1)^{-1} Z_{xy} [\lambda\ip{D, J}-\ip{B, J}] \right]
        \\ &= 1.
    \end{align*}
    For $s \not\sim t$, $s \ne t$, we have that $B_{st} = D_{st} = 0$ so
    $C_{xyst} = 0$.
    For $x \sim y$,
    \begin{align*}
        C_{xyss} &= \lambda^{-1} \left[ B_{ss} + (\lambda-1)^{-1} Z_{xy} (\lambda D_{ss} -
            B_{ss}) \right]
        \\ &= \lambda^{-1} \left[ D_{ss} + (\lambda-1)^{-1} Z_{xy} (\lambda D_{ss} - D_{ss}) \right]
        \\ &= \lambda^{-1} D_{ss} \left[ 1 + Z_{xy} \right]
        \\ &\le 0.
    \end{align*}
    For all $x$ and for $s \ne t$,
    \begin{align*}
        C_{xxst} &= \lambda^{-1} \left[ B_{st} + (\lambda-1)^{-1} Z_{xx} (\lambda D_{st} -
            B_{st}) \right]
        \\ &= \lambda^{-1} B_{st} \left[ 1 - (\lambda-1)^{-1} Z_{xx} \right]
        \\ &= 0.
    \end{align*}
    For all $x,y$ and for $s \ne t$,
    \begin{align*}
        C_{xyst} &= \lambda^{-1} \left[ B_{st} + (\lambda-1)^{-1} Z_{xy} (\lambda D_{st} -
            B_{st}) \right]
        \\ &= \lambda^{-1} B_{st} \left[ 1 - (\lambda-1)^{-1} Z_{xy} \right]
        \\ &\ge 0,
    \end{align*}
    where the last inequality follows from the fact that
    $Z \succeq 0 \implies \abs{Z_{xy}} \le \max\{ Z_{xx}, Z_{yy} \} = \lambda-1$.

    ($\Longleftarrow$):
    Let $Z$ achieve the optimal value (call it $\lambda$) for the minimization
    program~\eqref{eq:thm_min_sdp} for $\thmbar(H)$.
    We will provide a feasible solution for~\eqref{eq:thm_min_sdp} for $\thmbar(G)$ to show that
    $\thmbar(G) \le \thmbar(H)$.
    Specifically, let 
    \begin{align*}
    Y = \big(I \otimes \bra{\mathbf{1}}\big) \big[\big(J \otimes Z\big) \circ C\big] \big(I \otimes \ket{\mathbf{1}} \big),
    \end{align*}
    as in the proof of \cref{thm:th_homb}.
    Since $C \succeq 0$ and $Z \succeq 0$, and positive semidefiniteness is preserved by conjugation, we have that $Y \succeq 0$. Considering the entries of $Y$ we see that
    $Y_{xy} = \sum_{st} Z_{st} C_{xyst}$.

    Using the fact that $Z_{ss} = \lambda-1$ and $C_{xxst}=0$ for $s \ne t$, we have
    \begin{align*}
        Y_{xx} &= \sum_{st} Z_{st} C_{xxst}
        = (\lambda-1) \sum_{st} C_{xxst} = \lambda-1.
    \end{align*}
    For $x \sim y$ we have
    \begin{align*}
        Y_{xy} &= \sum_{st} Z_{st} C_{xyst}
        \\ &= \sum_{s \sim t} \underbrace{Z_{st}}_{\le -1}
            \underbrace{C_{xyst}}_{\ge 0} +
            \sum_{s \not\sim t, s \ne t} Z_{st} \underbrace{C_{xyst}}_{=0} +
            \sum_{s} \underbrace{Z_{ss}}_{\ge -1} \underbrace{C_{xyss}}_{\le 0}
        \\ &\le \sum_{s \sim t} (-1) C_{xyst} +
            \sum_{s \not\sim t, s \ne t} (-1) C_{xyst} +
            \sum_{s} (-1) C_{xyss}
        \\ &= \sum_{st} (-1) C_{xyst} = -1.
    \end{align*}
    Now define a matrix $Y'$ consisting of the real part of $Y$ (i.e.\ with coefficients
    $Y_{xy}' = \textrm{Re}[Y_{xy}]$).
    This matrix is real, positive semidefinite, and satisfies $Y_{xx}=\lambda-1$ for
    all $x$ and $Y_{xy} \le -1$ for $x \sim y$.
    Therefore $Y'$ is feasible for~\eqref{eq:thm_min_sdp} with value $\lambda=\thmbar(H)$.
    Since $\thmbar(G)$ is the minimum possible value of~\eqref{eq:thm_min_sdp}, we have
    $\thmbar(G) \le \thmbar(H)$.
\end{proof}

By setting $G=K_n$ or $H=K_n$ it is possible to formulate corollaries analogous to
\cref{thm:beigi_beta,thm:beigi_beta_chi}.
We describe only the first of these here.

\begin{corollary}
    Let $\beta^-(H)$ be the largest $n$ such that there are vectors
    $\ket{w} \ne 0$ and $\ket{w_s^x} \in
    \mathbb{C}^d$ for each $x \in \{1,\dots,n\}$,
    $s \in V(H)$, for some $d \in \mathbb{N}$, such that
    \begin{enumerate}
        \item $\sum_s \ket{w_s^x} = \ket{w}$
        \item $\braket{w_s^x}{w_t^y} = 0$ for $s \simH t$
        \item $\braket{w_s^x}{w_s^y} \le 0$ for $x \ne y$
        \item $\braket{w_s^x}{w_t^x} = 0$ for $s \ne t$
        \item $\braket{w_s^x}{w_t^y} \ge 0$ for $s \ne t$.
    \end{enumerate}
    Then $\beta^-(H) = \lfloor \thm(H) \rfloor$.
\end{corollary}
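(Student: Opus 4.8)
The plan is to obtain this corollary as the $G = K_n$ specialization of \cref{thm:sch_mon_iff}, exactly as \cref{thm:beigi_beta} was obtained from \cref{thm:th_homb}; the only point needing care is the bookkeeping with complements. First I would match the five conditions. Taking the first graph of \cref{thm:sch_mon_iff} to be $K_n$ makes its vertex set $\{1,\dots,n\}$ and turns the hypothesis $x \simG y$ into $x \ne y$, which is precisely the corollary's condition~3. Conditions~1, 4, and 5 of the corollary coincide verbatim with conditions~1, 4, and 5 of \cref{thm:sch_mon_iff}. The one condition requiring attention is condition~2: the theorem asks for $\braket{w_s^x}{w_t^y} = 0$ whenever $s \not\sim t$ and $s \ne t$, while the corollary asks for it whenever $s \simH t$. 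Since for $s \ne t$ one has $s \not\sim_{\Hc} t \iff s \simH t$, the two agree exactly when the second graph of the theorem is taken to be $\Hc$. Thus the vectors defining $\beta^-(H)$ are precisely those guaranteed by \cref{thm:sch_mon_iff} under the substitution $G \mapsto K_n$ and $H \mapsto \Hc$.

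With this identification the conclusion is immediate. One has $\thmbar(K_n) = \thm(\overline{K_n}) = n$, since $\overline{K_n}$ is edgeless and Schrijver's number of an edgeless graph equals its number of vertices; and $\thmbar(\Hc) = \thm(H)$ by the definition $\thmbar(X) := \thm(\overline{X})$ together with $\overline{\Hc} = H$. \Cref{thm:sch_mon_iff} then says that vectors meeting all five conditions exist if and only if $n = \thmbar(K_n) \le \thmbar(\Hc) = \thm(H)$. All five conditions are universally quantified in the index $x$, so a valid family for some $n$ restricts to a valid family for every smaller value; the set of admissible $n$ is therefore downward closed, and its maximum is the greatest integer at most $\thm(H)$, i.e.\ $\beta^-(H) = \lfloor \thm(H) \rfloor$.

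I do not anticipate a genuine obstacle: this is a direct corollary once the complement substitution is in place. The single pitfall is condition~2, where one must pass to $\Hc$ rather than $H$ and hence land on $\thm(H)$ rather than $\thmbar(H)$. A quick check on $H = K_m$ confirms the direction: there the corollary predicts $\beta^-(K_m) = \lfloor \thm(K_m) \rfloor = 1$, and indeed for $n = 2$ conditions~1 and 2 force $\braket{w}{w} = \sum_s \braket{w_s^1}{w_s^2}$, which condition~3 makes nonpositive, contradicting $\ket{w} \ne 0$.
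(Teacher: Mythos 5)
Your proof is correct and is exactly the argument the paper intends: the corollary is stated there as the specialization of \cref{thm:sch_mon_iff} to first graph $K_n$ and second graph $\Hc$ (in direct analogy with how \cref{thm:beigi_beta} follows from \cref{thm:th_homb}), which is precisely what you carry out, including the key complement bookkeeping that converts condition~2 and turns $\thmbar(\Hc)$ into $\thm(H)$. Your closing downward-closure remark is redundant---the if-and-only-if already identifies the admissible set as $\{n : n \le \thm(H)\}$---but it is harmless.
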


\begin{IEEEbiographynophoto}{Toby~Cubitt}
    received his Bachelor's and Master's degrees from the
    University of Cambridge, U.K., in 2002. He carried out his doctoral
    studies at the Max Planck Institute for Quantum Optics, Garching,
    Germany, and received his Ph.D. degree from the Technische Universit\"at
    M\"unchen in 2006. He was a Research Assistant and then a Leverhulme
    Early Career Fellow at Bristol University until 2010, a Juan de la Cierva
    Fellow at the Universidad Complutense de Madrid until 2012, and now holds
    a Royal Society University Research Fellowship in the Department for
    Applied Mathematics and Theoretical Physics at the University of
    Cambridge, U.K.
\end{IEEEbiographynophoto}

\begin{IEEEbiographynophoto}{Laura~\texorpdfstring{Man\v{c}inska}{Mancinska}}
    is currently a research fellow in Stephanie Wehner's group at the
    Centre for Quantum Technologies, National University of Singapore. She received her
    PhD from the Combinatorics and Optimization department of the University of Waterloo,
    Canada where she studied under the guidance of Andrew Childs and Debbie Leung. In her
    PhD thesis she investigated a restricted model of quantum operations in which
    separated parties are only allowed to perform local quantum maps and communicate
    classically (LOCC). Her research interests also include other topics in quantum
    information such as nonlocal games, entanglement assisted zero-error communication
    etc.
\end{IEEEbiographynophoto}

\begin{IEEEbiographynophoto}{David~Roberson}
    is currently a research fellow under Dmitrii Pasechnik at Nanyang Technological
    University. He received his PhD from the department of Combinatorics \& Optimization at
    the University of Waterloo, Canada where he studied under the guidance of Chris
    Godsil. In his PhD thesis he introduced and investigated quantum homomorphisms, a
    class of non-local games which generalize quantum colorings. His research interests
    also include zero-error communication, graph parameters, and cores of graphs.
\end{IEEEbiographynophoto}

\begin{IEEEbiographynophoto}{Simone~Severini}
    is a Reader (Associate Professor) in Physics of Information and a Royal Society
    University Research Fellow at UCL\@. Previously, he was a Newton International Fellow at
    UCL, a Postdoctoral Research Associate in the Institute for Quantum Computing at the
    University of Waterloo, and in the Department of Mathematics at the University of
    York. A visiting student at UC Berkeley, he obtained his PhD from Bristol University,
    under the supervision of Richard Jozsa. Among other institutions, he has been a long
    term visiting scientist at the Cesarea Rothschild Institute, MIT, and Shanghai
    Jiao-Tong University. His interests are at the interface between discrete mathematics,
    physics, and complex systems.
\end{IEEEbiographynophoto}

\begin{IEEEbiographynophoto}{Dan~Stahlke}
    received an MS in Physics from the University of Alaska Fairbanks, in 2010, studying
    nonlinear dynamics under the supervision of Renate Wackerbauer, and a PhD in Physics
    from Carnegie Mellon University, in 2014, studying quantum information under the
    supervision of Robert B. Griffiths.
    From 2002-2010 he was a computer programmer at the Geographic Information Network of
    Alaska, University of Alaska Fairbanks.
    He will soon be a Rotation Engineer at Intel Corporation.
\end{IEEEbiographynophoto}

\begin{IEEEbiographynophoto}{Andreas~Winter}
    received a Diploma degree in Mathematics from the Freie Universit\"{a}t Berlin,
    Berlin, Germany, in 1997, and a Ph.D. degree from the Fakult\"{a}t f\"{u}r Mathematik,
    Universit\"{a}t Bielefeld, Bielefeld, Germany, in 1999.  He was Research Associate at
    the University of Bielefeld until 2001, and then with the Department of Computer
    Science at the University of Bristol, Bristol, UK\@. In 2003, still with the
    University of Bristol, he was appointed Lecturer in Mathematics, and in 2006 Professor
    of Physics of Information.  Since 2012 he is ICREA Research Professor with the
    Universitat Aut\`{o}noma de Barcelona, Barcelona, Spain.
\end{IEEEbiographynophoto}

\end{document}